\documentclass[11pt]{article}
\usepackage[margin=1in]{geometry}

\usepackage{graphicx} % Required for inserting images
\usepackage{hyperref}
\usepackage{amsmath}
\usepackage{amssymb, amsthm, amsfonts, graphicx, color, subcaption, enumerate, bm, enumitem, array, mathtools}
\usepackage{mathdots}
\usepackage{bbm}
\usepackage{tikz}

\usepackage{pgfplots}
\pgfplotsset{compat=1.18}

\usetikzlibrary{arrows.meta}
\renewcommand{\epsilon}{\varepsilon}
\renewcommand{\tilde}{\widetilde}
\title{The Complexity of Distributed Minimum \\ Weight Cycle Approximation}

\author{\hspace{0.5cm} Yi-Jun Chang\footnote{National University of Singapore. ORCID: 0000-0002-0109-2432. Email: cyijun@nus.edu.sg}  \and Yanyu Chen\footnote{National University of Singapore. ORCID: 0009-0008-8068-1649. Email: yanyu.chen@u.nus.edu}\and Dipan Dey\footnote{University of Houston, USA. ORCID: 0009-0001-0675-8790. Email: ddey@central.uh.edu} \and Yonggang Jiang\footnote{Max Planck Institute for Informatics. ORCID: 0009-0002-8485-6676. Email: yjiang@mpi-inf.mpg.de} \hspace{0.5cm}\and Gopinath Mishra\footnote{Institute of Mathematical Science, Chennai, India. ORCID: 0000-0003-0540-0292. Email: gopianjan117@gmail.com} \and Hung Thuan Nguyen\footnote{National University of Singapore. ORCID: 0009-0006-7993-2952.  Email: hung@u.nus.edu} \and Mingyang Yang\footnote{National University of Singapore. ORCID: 0009-0006-8971-2064. Email: myangat@u.nus.edu} }

\date{}

\usepackage{thmtools, thm-restate}

\usepackage[style=trad-alpha,natbib=true,maxcitenames=20]{biblatex}
\usepackage{algpseudocode,algorithm}

\usepackage[colorinlistoftodos,prependcaption,textsize=tiny]{todonotes}

\newcommand{\dist}{\operatorname{dist}}

\newcommand{\CONGEST}{\mathsf{CONGEST}}
\newcommand{\ID}{\operatorname{ID}}
\newcommand{\poly}{\operatorname{poly}}

\newcommand{\exponential}{\mathsf{Exponential}}

\newcommand{\MWC}{\mathsf{MWC}}
\newcommand{\RP}{\mathsf{RPaths}}
\newcommand{\APSP}{\mathsf{APSP}}

\newcommand{\apxMWC}{\mathsf{Apx}\text{-}\mathsf{MWC}}
\newcommand{\OPT}{\mathsf{OPT}}

\newcommand{\LDD}{\mathsf{LDD}}

\newcommand{\capture}{\mathcal{E}_{\operatorname{capture}}}
\newcommand{\congestion}{\mathsf{congestion}}
\newcommand{\dilation}{\mathsf{dilation}}
\newcommand{\arrival}{\tau_{\operatorname{arrive}}}

\usepackage{cleveref}

\newtheorem{lemma}{Lemma}[section]

\newtheorem{conjecture}[lemma]{Conjecture}

\newtheorem{definition}[lemma]{Definition}

\newtheorem{observation}[lemma]{Observation}
\newtheorem{proposition}[lemma]{Proposition}

\newcommand{\bin}{\{0,1\}}
\newcommand{\disj}{\mathsf{disj}}
\newcommand{\inprod}[2]{\langle #1, #2\rangle}

\begin{document}

\maketitle
\begin{abstract}
We study the \emph{Minimum Weight Cycle} ($\MWC$) problem in the $\mathsf{CONGEST}$ model of distributed computing.

For undirected weighted graphs, we give a randomized $(k+1)$-approximation algorithm for every \underline{real number} $k \geq (1+\sqrt{5})/2 \approx 1.618$. The algorithm runs in
\[
\tilde{O}\left(n^{\frac{k+1}{2k+1}} + D\right)
\]
rounds, where $n$ is the number of nodes and $D$ is the unweighted diameter of the graph. Varying $k$ therefore yields a smooth trade-off between approximation ratio and round complexity.

On the lower-bound side, assuming the Erd\H{o}s girth conjecture, we prove that for every \underline{integer} $k \geq 1$ and every $\epsilon > 0$, any randomized $(k+1-\epsilon)$-approximation algorithm for $\MWC$ requires
\[
\tilde{\Omega}\left(n^{\frac{k+1}{2k+1}}+D\right)
\]
rounds. The lower bound holds for both directed unweighted graphs and undirected weighted graphs, even on graphs of diameter $\Theta(\log n)$.

Consequently, for every integer $k \geq 2$, our upper and lower bounds for undirected weighted graphs match up to polylogarithmic factors. This gives a nearly tight characterization of the round complexity of approximate $\MWC$ across an infinite family of approximation ratios.

These results improve the previous state of the art of Manoharan and Ramachandran (PODC~2024), who gave a $(2+\epsilon)$-approximation algorithm for undirected weighted graphs in $\tilde{O}(n^{2/3}+D)$ rounds, and proved an $\tilde{\Omega}(\sqrt{n})$ lower bound for arbitrary approximation ratios in directed unweighted and undirected weighted graphs.

Our algorithm is based on a new connection between $\MWC$ and the low-diameter decomposition of Miller, Peng, and Xu (SPAA~2013). This connection has consequences beyond the $\mathsf{CONGEST}$ model: for every real number $k \geq 1$, it gives a $(k+1)$-approximation algorithm for $\MWC$ with $\tilde{O}(mn^{1/k})$ work and $\tilde{O}(1)$ depth in the parallel setting, as well as an $\tilde{O}(n^{1/k})$-round algorithm in the broadcast congested clique model. The latter bound is \emph{optimal} up to polylogarithmic factors, as our lower-bound construction also implies an $\tilde{\Omega}(n^{1/k})$ lower bound in the broadcast congested clique model. 

\end{abstract}
\thispagestyle{empty}
\newpage
\thispagestyle{empty}
{\small
\tableofcontents
}
\newpage
\pagenumbering{arabic}
\section{Introduction}

We study the \emph{Minimum Weight Cycle} ($\MWC$) problem in the $\CONGEST$ model~\cite{peleg2000distributed} of distributed computing. In this model, the communication network is represented by a graph $G=(V,E)$, where each node $v\in V$ corresponds to a computing device and each edge $e\in E$ corresponds to a communication link. Computation proceeds in synchronous rounds; in each round, every node may exchange $O(\log n)$ bits with each of its neighbors. Throughout the paper, unless otherwise stated, we write $n=|V|$ and $m=|E|$ for the number of nodes and edges in the graph under consideration. The notation $\tilde{O}(\cdot)$, $\tilde{\Omega}(\cdot)$, and $\tilde{\Theta}(\cdot)$ suppresses factors that are polylogarithmic in $n$.

The goal of $\MWC$ is to find a cycle of minimum total edge weight. The weight of such a cycle is the \emph{girth} of the graph, a fundamental graph parameter with a long history in graph theory and algorithms. The problem can also be viewed as a \emph{round-trip} analogue of shortest paths: rather than finding the cheapest route from one point to another, it asks for the cheapest way to leave a point and return to it. In the distributed setting considered here, the input graph itself is the communication network.

In the centralized setting, $\MWC$ is a fundamental graph problem with deep connections to shortest-path computation. For dense graphs, undirected unweighted $\MWC$ belongs to a well-known equivalence class of fundamental $\tilde{O}(n^3)$-time graph problems, together with directed weighted \emph{All-Pairs Shortest Paths} ($\APSP$), under subcubic reductions~\cite{williams2010subcubic}. Thus, a truly subcubic algorithm, namely an $O(n^{3-\epsilon})$-time algorithm for some constant $\epsilon>0$, for any one of these problems would imply truly subcubic algorithms for all of them. For sparse graphs, $\MWC$ also belongs to a corresponding $\tilde{O}(mn)$-time equivalence class of fundamental graph problems~\cite{agarwal2018fine}.

Despite its fundamental role in graph algorithms, $\MWC$ has only recently been systematically studied in the $\CONGEST$ model. Manoharan and Ramachandran~\cite{10.1145/3662158.3662801,manoharan2024computing} conducted a comprehensive investigation across the four basic settings:
\[
\{\text{undirected unweighted},\ \text{undirected weighted},\ \text{directed unweighted},\ \text{directed weighted}\}.
\]

For the weighted settings, following the standard assumption in prior work~\cite{10.1145/3662158.3662801}, we assume that all edge weights are positive integers bounded by a polynomial in $n$, so that each edge weight can be represented using $O(\log n)$ bits. Formally, throughout the paper, we write $w\colon E\to\mathbb{N}{\geq 1}$ for the edge-weight function and assume that $w_{\max}=\max_{e\in E}w(e)\in n^{O(1)}$.

A central challenge in understanding $\MWC$ in the $\CONGEST$ model is that its round complexity appears to extend beyond the most familiar complexity classes:
\[
\left\{
\tilde{\Theta}(n),
\tilde{\Theta}(\sqrt{n}+D),
\tilde{\Theta}(D)
\right\}.
\]
Indeed, as we discuss below, prior bounds for approximate $\MWC$ leave gaps involving intermediate complexities such as $n^{2/3}$ and $n^{4/5}$. Bounds of this form typically cannot be obtained by directly applying standard algorithmic or lower-bound techniques, and resolving them may require substantially new ideas.

\subsection{Prior Work}

We briefly overview prior work on the $\MWC$ problem in the $\CONGEST$ model. Throughout the paper, we write $w(C)$ for the weight of a cycle $C$, and we write $\OPT$ for the minimum weight of a cycle in the communication network $G$. The approximate version of $\MWC$ is defined as follows.

\begin{definition}[$\apxMWC$]
For any real number $\alpha \geq 1$, the goal of the $\alpha$-$\apxMWC$ problem is to compute a cycle $C$ such that
\[
    w(C) \leq \alpha \cdot \OPT.
\]
\end{definition}

In the distributed setting, we require the output to be represented as follows. All nodes know the value $w(C)$ of the output cycle $C$, and for every edge of the graph, both of its endpoints know whether the edge belongs to $C$.

We write $D$ for the undirected unweighted diameter of the communication network $G$. Since $\apxMWC$ is inherently a \emph{global} problem, $\Omega(D)$ is a trivial lower bound for any approximation ratio $\alpha$. Indeed, consider a graph obtained by attaching a path of  $\Omega(D)$ nodes to a cycle $C$. In the required output format, even the node at the far end of the path must learn the value $w(C)$, which requires $\Omega(D)$ rounds.

\paragraph{The undirected unweighted setting.}
In undirected unweighted graphs, \citet{10.1145/2332432.2332504} showed that the \emph{exact} $\MWC$ problem can be solved in $O(n)$ rounds in the $\CONGEST$ model. On the lower-bound side, \citet{10.5555/2095116.2095207} showed an $\tilde{\Omega}(\sqrt{n})$ lower bound for $(2-\epsilon)$-$\apxMWC$, for any  $\epsilon > 0$.

For approximation algorithms, \citet{peleg2012distributed} showed that $(2 - 1/g)$-$\apxMWC$ can be solved in $\tilde{O}(\sqrt{ng} + D)$ rounds, where $g = \OPT$ is the girth of the communication network $G$. This round complexity was subsequently improved to $\tilde{O}(\sqrt{n} + D)$ by \citet{10.1145/3662158.3662801}, matching the $\tilde{\Omega}(\sqrt{n})$ lower bound.

\paragraph{The remaining settings.}
We now turn to the remaining three settings: undirected weighted, directed unweighted, and directed weighted graphs. In all three settings, the \emph{exact} $\MWC$ problem can be solved in $\tilde{O}(n)$ rounds~\cite{bernstein2019distributed}. This is \emph{optimal}, as \citet{10.1145/3662158.3662801} showed an $\tilde{\Omega}(n)$ lower bound for $(2-\epsilon)$-$\apxMWC$, for any constant $\epsilon > 0$, in all three settings.

The more intriguing regime is the approximation regime $\alpha \geq 2$, where the $\tilde{\Omega}(n)$ lower bound no longer applies. For $\alpha$-$\apxMWC$, \citet{10.1145/3662158.3662801} established a lower bound of
\[
    \tilde{\Omega}(\sqrt{n}).
\]
In undirected weighted graphs, this lower bound holds even for approximation ratios $\alpha$ as large as $O(w_{\max}/\sqrt{n})$, where $w_{\max}$ is the maximum edge weight. Since $w_{\max}$ may be an arbitrary polynomial in $n$, the lower bound applies even to polynomial approximation ratios. In directed graphs, weighted or unweighted, the lower bound holds for \emph{every} approximation ratio $\alpha$.

These lower bounds are complemented by the following upper bounds of \citet{10.1145/3662158.3662801}:
\begin{align*}
    (2+\epsilon)\text{-}\apxMWC
        &\quad \text{in } \tilde{O}(n^{2/3} + D) \text{ rounds}
        && \text{for undirected weighted graphs}, \\
    (2+\epsilon)\text{-}\apxMWC
        &\quad \text{in } \tilde{O}(n^{4/5} + D) \text{ rounds}
        && \text{for directed weighted graphs}, \\
    2\text{-}\apxMWC
        &\quad \text{in } \tilde{O}(n^{4/5} + D) \text{ r                         ounds}
        && \text{for directed unweighted graphs}.
\end{align*}
Here $\epsilon > 0$ can be an arbitrarily small constant.

Together, these results reveal a gap in our understanding of  $\apxMWC$ in the $\CONGEST$ model. In the regime $\alpha \geq 2$, the best known lower bound is $\tilde{\Omega}(\sqrt{n})$, while the best known upper bounds are $\tilde{O}(n^{2/3}+D)$ and $\tilde{O}(n^{4/5}+D)$. This naturally leads to the following questions:
\begin{center}
\begin{description}
    \item[(Q1)] Can we narrow or close the gap between the upper and lower bounds?
    \item[(Q2)] More generally, what is the optimal round complexity as a function of the approximation ratio?
\end{description}
\end{center}

\subsection{Our Contributions}

We make progress on both questions raised above. 

\paragraph{Upper bound.} Our first contribution is an improved upper bound for approximate $\MWC$ in undirected weighted graphs. Throughout the paper, we say that an event occurs \emph{with high probability} if it happens with probability $1 - 1/\poly(n)$.

\begin{restatable}[Upper bound]{theorem}{kupper} \label{thm:mainUB} For any real number $k \geq 1$, the $(k+1)$-$\apxMWC$ problem in undirected weighted graphs can be solved with high probability in the $\CONGEST$ model in \[ \tilde{O}\left( n^{\frac{k+1}{2k+1}} + n^{\frac{1}{k}} + D \right) \] rounds. In particular, when $k \geq \frac{1+\sqrt{5}}{2} \approx 1.618$, the bound becomes \[ \tilde{O}\left( n^{\frac{k+1}{2k+1}} + D \right). \] \end{restatable}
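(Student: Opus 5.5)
Throughout, I guess the threshold $W$ on $\OPT$ up to a factor $(1+\epsilon)$, over $O(\log n)$ scales, and test each guess. The approach is to combine a Miller--Peng--Xu low-diameter decomposition with the standard ``light cycle versus heavy cycle'' split.

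\textbf{Step 1: an LDD-based test.} Every node $v$ draws $\delta_v \sim \exponential(\beta)$ with $\beta = \Theta(\log n / (k W))$, and each node $u$ joins the center maximizing $\delta_v - \dist(v,u)$.

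\emph{Key structural claim.} Suppose some node $u$ has two distinct near-maximal shifts, i.e.\ two centers whose values $\delta_v - \dist(v,u)$ lie within $W$ of the maximum. Then the union of the two shortest paths, together with an edge that closes them, contains a cycle of weight at most about $2(\max\delta) + W$.

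\emph{Converse.} Fix a minimum cycle $C$ of weight $\OPT \le W$. By the memoryless property, with probability bounded below, $C$ is not entirely inside one cluster. In that case two centers compete within slack $\OPT$ near $C$, and we detect a cycle.

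\emph{Approximation.} To get ratio $k+1$ rather than $O(\log n)$, I would use the sharper MPX fact: the gap between the top two shifts is at most $W$ with probability about $1-e^{-\beta W}$. I would tune $\beta$ so that $n\cdot e^{-\beta W k}$ is balanced against the per-node constraint that the number of centers within slack $kW$ is at most $n^{1/k}$ in expectation. Under this choice, each node only has to forward $\tilde O(n^{1/k})$ candidate centers. This mirrors how the parallel $\tilde O(mn^{1/k})$ work bound in the abstract arises.

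\textbf{Step 2: simulating in $\CONGEST$.} In the simulation, a node with many candidate centers causes congestion. I would split by hop length with a parameter $h$.

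\emph{Short cycles.} Cycles of hop length at most $h$ are found by running the truncated multi-source exploration (Step 1) locally for $h$ hops. Each node relays $\tilde O(n^{1/k})$ tokens, for $\tilde O(h + n^{1/k})$ rounds.

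\emph{Long cycles.} For cycles with more than $h$ hops, sample $\tilde O(n/h)$ skeleton nodes; whp one of them lies on the optimal cycle. I would compute $h$-hop-bounded approximate distances among skeleton nodes to build a skeleton graph on $s=\tilde O(n/h)$ nodes. On this skeleton I would run the same LDD procedure, broadcasting cluster information through a BFS tree. The skeleton run costs roughly $\tilde O(D + s\cdot s^{1/k}) $ or, more carefully, $\tilde O(D + \sqrt{s\cdot s^{1/k}\cdot\ldots})$ via pipelining.

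\emph{Balancing.} Balancing $h$ against the skeleton cost should yield $h = n^{(k+1)/(2k+1)}$: equating $h$ with $(n/h)^{(k+1)/k}$ gives $h^{(2k+1)/k} = n^{(k+1)/k}$, i.e.\ exactly the stated exponent. The extra $n^{1/k}$ term is the local token congestion. It is dominated by $n^{(k+1)/(2k+1)}$ precisely when $1/k \le (k+1)/(2k+1)$, i.e.\ $k^2 - k - 1 \ge 0$, which is $k \ge (1+\sqrt5)/2$.

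\textbf{Main obstacle.} The hardest step is proving the $(k+1)$ ratio with only $n^{1/k}$ candidates per node. That requires the precise exponential-shift argument: conditioned on the cycle being cut, the detected cycle has weight at most $W + k\cdot W$, using the two paths plus the cut cycle portion. The same ratio must then be preserved through the skeleton approximation, where hop-bounded distances must be exact (via $h$-hop Bellman--Ford from skeleton nodes) to avoid losing an extra factor. I would first try to make the skeleton exact by running Bellman--Ford from all skeleton nodes with random-delay scheduling. I would then verify that the congestion per edge is $\tilde O(s^{1/k})$ rather than $s$, by only forwarding sources within the exponential slack.

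Finally, the output cycle is recovered by tracing parent pointers along the two paths, and its weight is broadcast in $O(D)$ rounds.
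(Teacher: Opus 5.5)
\textbf{There is a genuine gap in Step 1: the cycle-detection mechanism does not work.}

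Your key structural claim is false. Two shortest paths from two \emph{distinct} centers $v_1,v_2$ to a node $u$ form a $v_1$--$u$--$v_2$ walk, not a cycle, and in general no edge closes them. For example, take a node in the middle of a long induced path, far from every cycle. It can have near-maximal centers on both sides, yet no cycle of weight $O(kW)$ exists.

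Your ``converse'' analyzes the wrong event. That the optimal cycle $C$ is cut by competing clusters (probability about $1-e^{-\beta W}$) yields nothing. The paper uses the opposite event, $\capture(V(C))$: some center's ball reaches $V(C)$ more than $d=\OPT/2$ time units before every other ball.
\begin{itemize}
\item By memorylessness, plus truncation of the shifts at $\delta_{\max}$, this event has probability at least $(1-1/e)|S|^{-1/k}$.
\item It puts all of $C$ in one cluster and gives $\dist(s,C)<\hat\delta_s-d\le\delta_{\max}-d\approx kd$.
\item The cluster's SSSP tree cannot contain every edge of $C$. A non-tree edge of $C$ plus the tree path between its endpoints is a cycle of weight at most $2(\delta_{\max}-d)+2d=2\delta_{\max}\approx(k+1)\OPT$.
\end{itemize}
So $n^{1/k}$ enters as a number of independent repetitions, not as per-node lists of competing centers. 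Note also that with $\beta=\Theta(\log n/(kW))$, slack $kW$ would give lists of size $n^{\Theta(1)}$, not $n^{1/k}$.

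Two further ingredients are missing from Step 2. First, insisting on exact distances is both unnecessary and expensive. The paper uses tree-like $(1+\epsilon)$-approximate SSSP and proves robustness via $\epsilon$-safe nodes. Second, for short-hop cycles you must perturb every weight by $\tau_d=2\epsilon d/h$. Only then do the center-to-cycle paths, and not just the cycle, have $\tilde O(h)$ hops. Without this, hop-truncated exploration in a weighted graph can miss the relevant center.

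\textbf{The long-hop part has a second gap.} Running the decomposition on a skeleton graph whose edges are $h$-hop distances is unsound. A light cycle in the skeleton graph need not be a cycle in $G$. For instance, a skeleton triangle $r_1r_2r_3$ with $r_2$ on the shortest $r_1$--$r_3$ path has weight $2\dist(r_1,r_3)$, but its underlying closed walk just retraces one path. You also need every segment of $C^\star$ between consecutive skeleton nodes to have $\tilde O(n/\alpha)$ hops, not merely one skeleton node on $C^\star$.

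The paper never builds a skeleton graph. Instead:
\begin{itemize}
\item It runs the decomposition in $G$ itself with centers restricted to the sampled set $S$. It uses the approximate SSSP of Becker et al., with congestion $\tilde O(\alpha)$ and dilation $\tilde O(n/\alpha+\alpha+D)$, so $\tilde O(\alpha^{1/k})$ repetitions suffice.
\item It shows that under $\capture(S\cap V(C^\star))$, every node of $C^\star$ outside the single antipodal segment is $\epsilon$-safe for the capturing center.
\item It patches that segment by $\tilde O(n/\alpha)$-hop approximate SSSP from each skeleton node $r$ in the doubled graph $H_r$. This forces the patching path to use an edge outside $T_s$, which guarantees a genuine cycle.
\end{itemize}

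Your balancing $h=(n/h)^{(k+1)/k}$ and the golden-ratio condition are correct, and they coincide with the paper's choice $\alpha=n^{k/(2k+1)}$. What is missing are the mechanisms that make those costs achievable with ratio $k+1$.
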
 

We emphasize that the parameter $k$ in \Cref{thm:mainUB} can be any real number at least $1$. This gives a continuous tradeoff between approximation ratio and round complexity. 

Moreover, this flexibility lets us slightly improve the approximation ratio essentially for free: decreasing $k$ additively by $\Theta(\log\log n / \log n)$ increases the round complexity only by a polylogarithmic factor, which is absorbed by the $\tilde{O}(\cdot)$ notation. Consequently, the same round complexity bound in \Cref{thm:mainUB} also holds for $(k+1-\epsilon)$-$\apxMWC$, for any real number $0 < \epsilon \in O(\log\log n / \log n)$. 

Compared with the previous upper bound of \citet{10.1145/3662158.3662801}, our algorithm is faster whenever $k > 1.5$ is a constant: its round complexity is then strictly below $\tilde{O}(n^{2/3}+D)$. Thus, while the previous algorithm gives a $(2+\epsilon)$-approximation in $\tilde{O}(n^{2/3}+D)$ rounds, our result yields faster algorithms when one allows for a larger approximation ratio.

\paragraph{Lower bound.} We complement the upper bound with a matching lower bound, conditional on the \emph{Erd\H{o}s girth conjecture}~\cite{erdos1964extremal}. This conjecture, which asserts the existence of graphs that simultaneously have high girth and high edge density, has played an important role in many lower bounds, including lower bounds for graph spanners~\cite{thorup2005approximate}. 
The Erd\H{o}s girth conjecture is proved for $k \in \{1,2,3,5\}$: the case $k=1$ is trivial, while the cases $k \in \{2,3,5\}$ follow from the constructions of Wenger~\cite{wenger1991extremal} and Benson~\cite{benson1966minimal}.

For convenience, we use the following bipartite form of the conjecture. This version is equivalent to the usual formulation up to constant factors, since every graph contains a bipartite subgraph with at least half of its edges. 

\begin{conjecture}[Erd\H{o}s girth conjecture] \label{girth_conj} For all integers $k \geq 1$ and $n \geq 1$, there exists a bipartite graph $G=(V,E)$ with \[ |E| \in \Omega\left(n^{1+\frac{1}{k}}\right) \quad \text{and} \quad\operatorname{girth}(G) > 2k. \] \end{conjecture}

Since the graph in \Cref{girth_conj} is bipartite, the condition $\operatorname{girth}(G)>2k$ is equivalent to $\operatorname{girth}(G)\geq 2k+2$. Under this conjecture, we prove the following lower bound.

\begin{restatable}[Lower bound]{theorem}{klower}
\label{thm:klower}
Assuming the Erd\H{o}s girth conjecture, for every integer $k \geq 1$ and every real number $\epsilon > n^{-O(1)}$, any distributed algorithm that solves the $(k+1-\epsilon)$-$\apxMWC$ problem with high probability requires
\[
    \tilde{\Omega}\left(n^{\frac{k+1}{2k+1}}\right)
\]
rounds in the $\CONGEST$ model. This lower bound holds both for directed unweighted graphs and for undirected weighted graphs.
\end{restatable}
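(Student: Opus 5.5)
We prove \Cref{thm:klower} by a reduction from two-party set disjointness, using the standard moving-cut lower-bound framework of Das Sarma et al.\ (also used by Manoharan and Ramachandran for their $\tilde{\Omega}(\sqrt{n})$ bound). The novelty is that the input bits are placed on the edges of an Erd\H{o}s girth graph instead of on a matching. This pushes the input size up to $N^{1+1/k}$ while keeping each gadget small.

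\textbf{Gadget.} Fix parameters $N$ and $\ell$, to be chosen at the end. By \Cref{girth_conj}, take a bipartite graph $H=(L\cup R,E_H)$ on $N$ nodes with $|E_H|=\Theta(N^{1+1/k})$ and girth at least $2k+2$. Make two copies, $H_A$ and $H_B$. Alice's input string $x\in\bin^{|E_H|}$ is indexed by $E_H$, and she keeps exactly the edges $e$ with $x_e=1$ in $H_A$. Bob does the same for $H_B$ with his string $y$.

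Replace each node $v\in R$ by two copies, $v^A\in H_A$ and $v^B\in H_B$. Connect $v^A$ to $v^B$ by a long path of length $\ell$; in the weighted version this can be a single conceptual link of heavy weight, realized as a path so that communication across it is slow. Do the same for each $u\in L$.

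Orient or weight the graph so that every short cycle must traverse each of these ``long links'' in a controlled way. In the directed version, orient $H_A$ edges $L\to R$, $H_B$ edges $R\to L$, and links $R^A\to R^B$ and $L^B\to L^A$. Then an edge $\{u,v\}$ with $x_e=y_e=1$ yields the cycle $u^A\to v^A\leadsto v^B\to u^B\leadsto u^A$. This cycle uses exactly two long links and two gadget edges, so it has length about $2\ell+2$.

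\textbf{Girth argument.} Any cycle that does not come from a common edge projects onto a closed walk in $H$ that alternates between $A$-edges and $B$-edges. Since it is not a backtrack along a single common edge, it contains a genuine cycle of $H$. That cycle has length at least $2k+2$, so the walk uses at least $k+1$ $A$-edges and hence at least $2(k+1)$ long links, for weight at least $(k+1)\cdot 2\ell$.

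So with $\ell$ large compared with the gadget edges (weight $1$, or length $\ell=\poly$ in the weighted version), $\OPT\approx 2\ell$ if $x,y$ intersect and $\OPT\ge (k+1)2\ell$ otherwise. Any $(k+1-\epsilon)$-approximation therefore decides disjointness, where $\epsilon>n^{-O(1)}$ is handled by taking $\ell=\poly(n)/\epsilon$ in the weighted case. The main obstacle is checking this girth-to-cycle correspondence carefully for walks that reuse nodes. I would argue via the projection: a closed walk in $H$ with no immediate backtracking contains a cycle of $H$.

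\textbf{Simulation.} To make $D=O(\log n)$, attach the usual balanced tree overlay (Das Sarma et al.) across the $\ell$ positions of the paths. The standard simulation lemma then shows that a $T$-round algorithm with $T<\ell/2$ yields a two-party protocol with $\tilde{O}(T\cdot N)$ bits: about $N$ boundary paths carry $O(\log n)$ bits per round, plus $\tilde{O}(T)$ for the tree. Set disjointness on $|E_H|=\Theta(N^{1+1/k})$ bits needs $\Omega(N^{1+1/k})$ communication, so $T\cdot N\ge \tilde{\Omega}(N^{1+1/k})$, i.e.\ $T\ge\tilde{\Omega}(N^{1/k})$, as long as $T<\ell$.

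The total number of nodes is $n\approx N\ell$. Balance $N^{1/k}=\ell$, which gives $N=n^{k/(2k+1)}$ and hence $T=\tilde{\Omega}(n^{(k+1)/(2k+1)})$.

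In the undirected weighted version, orientation is replaced by weights: gadget edges get weight $0$ or $1$ and path edges get weight $1$, with the paths hop-long. Cycles inside a single copy $H_A$ then have hop length at least $2k+2$, but weight small. This is the delicate point. I would handle it by giving gadget edges a large weight $W\gg \ell$, so that a cycle within one copy costs at least $(2k+2)W$, while the intended cycle costs $2W+2\ell\approx 2W$. The bad alternating cycles also cost at least $(2k+2)W$. Since $(2k+2)W/(2W)=k+1$, the reduction goes through.
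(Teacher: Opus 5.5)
Your construction and the gap argument essentially match the paper's. You use two copies of an Erd\H{o}s girth graph with inputs encoded as edges, long paths between the copies, and a non-backtracking projection for the disjoint case. Your final weighting (gadget edges of weight $W>(k+1)\ell/\epsilon$, path edges of weight $1$) is exactly the paper's instance after its $w'=Mw+1$ rescaling.

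\textbf{The gap is in the simulation step.} You charge $\tilde O(N)$ bits per round for ``about $N$ boundary paths'', which yields only $T\ge\min\{\ell/2,\tilde\Omega(N^{1/k})\}$. With $n\approx N\ell$, the best choice is $\ell\approx N^{1/k}$, so $n\approx N^{1+1/k}$. The bound your accounting actually proves is therefore $\tilde\Omega(n^{1/(k+1)})$, which is weaker than the known $\tilde\Omega(\sqrt n)$ for every $k\ge 2$. Your last line does not follow from the lines before it. From $N^{1/k}=\ell$ and $n\approx N\ell$ you get $N=n^{k/(k+1)}$, not $n^{k/(2k+1)}$, and $T\ge N^{1/k}=n^{1/(k+1)}$, not $n^{(k+1)/(2k+1)}$.

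\textbf{The missing idea:} in the Das Sarma et al.\ simulation, path edges cost nothing. Each party's simulated region shrinks by one path position per round. Hence any message crossing a path edge into a party's region was sent by a node that party was still simulating in the previous round. Only the messages on the $O(p)$ overlay-tree edges at the moving boundary must be exchanged. This costs $O(dp\log n)=\tilde O(1)$ bits per round, independent of the number of paths.

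This gives $T\ge\min\{\ell/2,\tilde\Omega(|E_H|)\}=\min\{\ell/2,\tilde\Omega(N^{1+1/k})\}$. The correct balance is then $\ell\approx N^{1+1/k}$, so $n\approx N^{2+1/k}$ and $T=\tilde\Omega(N^{1+1/k})=\tilde\Omega(n^{(k+1)/(2k+1)})$. This is the same dilation-versus-congestion tradeoff $\min\{d^p,|E_H|/(dp)\}$ that the paper certifies with a moving cut in \Cref{lem:disj-fixed-graph}, using $d=2$ and $d^p\approx\gamma^{1+1/k}$. The classical simulation theorem would work equally well once the cost is accounted for correctly.

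\textbf{A smaller omission:} you never specify how the overlay tree is weighted or oriented, and it matters for the cycle gap. With unit-weight tree edges, a single path edge between positions $s$ and $s+1$ plus the tree path between the leaves at those positions closes a cycle of $O(\log n)$ light edges. That cycle is far below $2W$, which destroys the gap. The same problem arises with a bidirected or arbitrary orientation in the directed case. As in the paper, orient all tree edges and leaf-to-path edges away from the root in the directed variant, so that no directed cycle can use them. In the weighted variant, give them weight at least the disjoint-case threshold, e.g.\ $(2k+2)W$.
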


For every constant integer $k \geq 2$, this lower bound is strictly better than the previously known $\tilde{\Omega}(\sqrt{n})$ lower bound of \citet{10.1145/3662158.3662801}.

More importantly, for all integers $k \geq 2$, our upper and lower bounds \emph{match} up to polylogarithmic factors: by the parameter shifting observation above, the upper bound in \Cref{thm:mainUB} applies to $(k+1-\epsilon)$-$\apxMWC$, while \Cref{thm:klower} rules out faster algorithms for the same approximation ratio. Consequently, we establish
\[
    \tilde{\Theta}\left(n^{\frac{k+1}{2k+1}} + D\right)
\]
as the \emph{tight} round complexity of $(k+1-\epsilon)$-$\apxMWC$ for every integer $k \geq 2$ and every sufficiently small $\epsilon>0$, up to polylogarithmic factors.

  \Cref{fig-tradeoff} illustrates the resulting round--approximation tradeoff and compares it with the previous bounds. For approximation ratios below $2$, prior work already gives a tight $\tilde{\Theta}(n)$ bound. For approximation ratios above $3-\epsilon$, our upper and lower bounds form tradeoff curves that match at the discrete points corresponding to integer values of $k$. 
  
  %Recall that \citet{10.1145/3662158.3662801} showed that a $(2+\varepsilon)$-approximation is achievable in $\tilde{O}(n^{2/3}+D)$ rounds. \textcolor{blue}{Thus, the round complexity of the $\apxMWC$ problem remains open for approximation ratios between $2+\varepsilon$ and $3-\varepsilon$.} %\gopinath{I have added this sentence to present the complete current status of the problem.}\yijun{I would tend to not have this sentence because even above 3 the upper and lower bounds are not matched outside of the discrete points, so in a sense the region between 2 and 3 is not particularly special.}

\begin{figure}[ht]
\centering
\begin{tikzpicture}
\begin{axis}[
    width=13cm,
    height=12cm,
    xlabel={Approximation Ratio},
    ylabel={Exponent of $n$ in Round Complexity},
    xmin=1, xmax=6,
    ymin=0.48, ymax=1.02,
    domain=0:5,
    samples=200,
    legend style={
        at={(0.97,0.97)},
        anchor=north east,
        font=\tiny
    },
    legend cell align=left,
    grid=both,
]

% ===============================
% Our Upper Bound Curve
% ===============================
\addplot[blue, thick, domain=1:10]
    ({x+1}, {max((x+1)/(2*x+1), 1/x)});
\addlegendentry{Our upper bound: $(k+1)$-$\apxMWC$, $\tilde{O}\left(n^{\frac{k+1}{2k+1}} + n^{\frac{1}{k}} + D\right)$}

% ===============================
% Our Lower Bound: horizontal pieces with white right endpoints
% ===============================
\addplot[red, thick] coordinates {(2, 3/5) (3, 3/5)};
\addlegendentry{Our lower bound: $(k+1-\epsilon)$-$\apxMWC$, $\tilde{\Omega}\left(n^{\frac{k+1}{2k+1}}\right)$}

\addplot[red, thick, forget plot] coordinates {(3, 4/7) (4, 4/7)};
\addplot[red, thick, forget plot] coordinates {(4, 5/9) (5, 5/9)};
\addplot[red, thick, forget plot] coordinates {(5, 6/11) (6, 6/11)};

% White circles indicating the excluded right endpoints
\addplot[
    red,
    only marks,
    mark=*,
    forget plot,
    mark options={draw=red, fill=white, fill opacity=1, line width=0.8pt},
]
coordinates {
    (3, 3/5)
    (4, 4/7)
    (5, 5/9)
    (6, 6/11)
};

% Filled circles indicating the left endpoints
\addplot[
    red,
    only marks,
    mark=*,
    forget plot,
]
coordinates {
    (2, 3/5)
    (3, 4/7)
    (4, 5/9)
    (5, 6/11)
};

% ===============================
% MR PODC 2024 Upper Bound: (2+eps)-approx
% ===============================
\addplot[cyan, thick] coordinates {(2, 2/3) (6, 2/3)};
\addlegendentry{\cite{10.1145/3662158.3662801} upper bound: $(2+\epsilon)$-$\apxMWC$, $\tilde{O}(n^{2/3}+D)$}

\addplot[
    cyan,
    only marks,
    mark=*,
    forget plot,
    mark options={draw=cyan, fill=white, fill opacity=1, line width=0.8pt},
]
coordinates {(2, 2/3)};

% ===============================
% MR Lower Bound: sqrt(n)
% ===============================
\addplot[orange, thick] coordinates {(1, 0.5) (6, 0.5)};
\addlegendentry{\cite{10.1145/3662158.3662801} lower bound: any approximation ratio, $\tilde{\Omega}(\sqrt{n})$}

% ===============================
% Tight bound below 2
% ===============================
\addplot[black, thick] coordinates {(1, 1) (2, 1)};
\addlegendentry{\cite{10.1145/3662158.3662801,bernstein2019distributed} tight bound: $(2-\epsilon)$-$\apxMWC$, $\tilde{\Theta}(n)$}

\addplot[
    black,
    only marks,
    mark=*,
    forget plot,
    mark options={draw=black, fill=white, fill opacity=1, line width=0.8pt},
]
coordinates {(2, 1)};

\end{axis}
\end{tikzpicture}
\caption{
Tradeoff between approximation ratio and the exponent of $n$ in the round complexity, ignoring polylogarithmic factors.
Our upper bound gives a continuous tradeoff curve, while our lower bound matches it at the discrete points corresponding to integer values of $k \geq 2$, assuming the Erd\H{o}s girth conjecture.
}
\label{fig-tradeoff}
\end{figure}

\paragraph{Technical contribution.}
Our main algorithmic contribution is a new use of the Miller--Peng--Xu (MPX) low-diameter decomposition~\cite{miller2013parallel} as a \emph{cycle-finding primitive}. For a suitable choice of parameters, we show that with non-negligible probability, one cluster contains an entire fixed minimum weight cycle while its center remains close to the cycle. This event allows us to extract from the cluster a cycle whose weight is within a factor $k+1$ of optimal. By itself, this idea already gives an $\tilde{O}(n^{\frac{k+1}{2k}}+D)$-round algorithm for every real number $k\geq 1$.

Obtaining the sharper bound $\tilde{O}(n^{\frac{k+1}{2k+1}}+D)$ requires overcoming the congestion caused by repeatedly computing such decompositions. Our key refinement is to treat minimum weight cycles with few edges and many edges differently. In the former case, we exploit hop-bounded approximate SSSP to reduce congestion. In the latter case, we initiate clusters only from a sampled subset of nodes, which substantially increases the success probability relative to the number of cluster centers. Although this restricted decomposition may fail to capture the entire cycle, we prove that only one short portion of the cycle can be missing and recover it using additional hop-bounded shortest-path computations.

A further technical ingredient is a robustness analysis showing that $(1+\epsilon)$-approximate SSSP suffices, even though a faithful implementation of the MPX decomposition requires exact SSSP and some standard properties of MPX decomposition fail under approximate distances; see~\cite[Appendix~A]{RozhonHMGZ23}. More broadly, our MPX-based approach provides a flexible framework that can be combined with shortest-path primitives from different computational models. Beyond $\CONGEST$, this yields our results for low-depth parallel computation and the broadcast congested clique.

\paragraph{Further application 1: parallel computation.} We first consider the standard \emph{work-depth} model of parallel computation~\cite{blelloch1996programming}. In this model, the \emph{work} of an algorithm is the total number of operations performed, while the \emph{depth} is the length of the longest chain of dependencies. Thus, the work measures the overall computational cost, and the depth measures the amount of inherently sequential computation that remains after parallelization.

\begin{restatable}[Parallel computation]{theorem}{parallelupper}
\label{thm:parallelUB}
For every real number $k \geq 1$, the $(k+1)$-$\apxMWC$ problem in undirected weighted graphs can be solved with high probability using $\tilde{O}(mn^{1/k})$ work and $\tilde{O}(1)$ depth. 
\end{restatable}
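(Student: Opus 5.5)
We plan to run the MPX low-diameter decomposition as a cycle-finding primitive, repeat it $\tilde{O}(n^{1/k})$ times in parallel, and implement each run with a low-depth parallel $(1+\epsilon)$-approximate SSSP.

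\emph{The MPX event.} Set $\beta = \Theta(\ln n / (k\cdot \OPT))$, guessing $\OPT$ up to a factor $1+\epsilon$ from the $O(\log n)$ powers of $(1+\epsilon)$, which is possible since $w_{\max}\in n^{O(1)}$. Each node $v$ draws $\delta_v \sim \exponential(\beta)$, and each node $u$ joins the cluster of the center $v$ minimizing $\dist(v,u)-\delta_v$. Fix a minimum weight cycle $C$ with $w(C)=\OPT$ and a node $x\in C$. Let $v^\ast$ be the argmax of $\delta_v-\dist(v,x)$ and let $v'$ be the runner-up. If the gap between them exceeds $\OPT$, then every node of $C$, being within distance $\OPT/2$ of $x$, joins the cluster of $v^\ast$.

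By memorylessness of the exponential distribution, the gap is at least $\OPT$ with probability $e^{-\beta\OPT}\approx n^{-1/k}$. We also want $\dist(v^\ast,x)\le (k-1)\OPT/2 + O(\epsilon)\OPT$. We will show that the joint event has probability $\tilde{\Omega}(n^{-1/k})$ by comparing $\delta_{v^\ast}$ against the shift $\delta_x$, which is at least $0$. The bound is calibrated so that the ratio comes out to exactly $k+1$.

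\emph{Extracting the cycle.} Inside the cluster rooted at $v^\ast$, take the shortest-path tree $T$ from $v^\ast$. Some edge $(a,b)$ of $C$ is a non-tree edge, and it closes a cycle of weight at most $\dist(v^\ast,a)+\dist(v^\ast,b)+w(a,b)$. Summing this inequality over the edges of $C$ is too weak. Instead, choose the non-tree edge of $C$ that splits $C$ into two arcs reaching $a$ and $b$, each of length at most $\OPT/2$. This gives
\[
w(\text{found cycle}) \le 2\dist(v^\ast,x) + \OPT,
\]
which is at most $k\cdot\OPT + \OPT = (k+1)\OPT$.

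Algorithmically, each cluster computes $\min_{(a,b)\notin T,\ \text{both in the cluster}} d(a)+d(b)+w(a,b)$. Cycles found through edges between different clusters are discarded. Every candidate corresponds to a genuine closed walk, and that walk contains a simple cycle of no larger weight, so the output is always valid.

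\emph{Parallel implementation.} A single MPX run with $(1+\epsilon)$-approximate distances is one approximate SSSP from a virtual super-source with edges of length $\max\delta-\delta_v$. Using the known parallel $(1+\epsilon)$-SSSP with $\tilde{O}(m)$ work and $\tilde{O}(1)$ depth, this costs $\tilde{O}(m)$ work per run. The tree $T$ must be an actual tree whose path lengths upper-bound the labels; the approximate SSSP algorithms provide such a tree. Running $\tilde{O}(n^{1/k})$ independent repetitions for each guess of $\OPT$ boosts the success probability to high probability, for total work $\tilde{O}(mn^{1/k})$. All repetitions run in parallel, so the depth stays $\tilde{O}(1)$.

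\emph{Main obstacle.} The hardest step is the robustness of the capture event under approximate distances. With exact distances the analysis is clean, but approximate SSSP can break standard MPX properties (see \cite[Appendix~A]{RozhonHMGZ23}). We would first try to use only the fact that each computed label $\tilde d(v,u)$ lies in $[\dist(v,u),(1+\epsilon)\dist(v,u)]$. Since every relevant distance is $O(k\cdot\OPT)$ in the good event, the errors are absorbed into slack of $O(\epsilon k\,\OPT)$ in the gap condition. Taking $\epsilon=\Theta(1/(k\log n))$ then leaves the $(k+1)$ ratio intact up to negligible loss, and this loss can be removed by the parameter shift in $k$ discussed after \Cref{thm:mainUB}.
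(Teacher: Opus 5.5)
\textbf{There is a genuine gap in your capture analysis, and it costs roughly a factor of $2$ in the approximation ratio.}

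Your event is anchored at a single node $x\in C$: $v^\ast$ must beat the runner-up in $\delta_v-\dist(v,x)$ by $\OPT$. A point-anchored event really does need a gap of $\OPT$, because a node $u\in C$ can be $\OPT/2$ farther than $x$ from $v^\ast$ and $\OPT/2$ closer than $x$ to $v'$. Making $e^{-\beta\OPT}=n^{-1/k}$ then forces $\beta=\ln n/(k\,\OPT)$. With this $\beta$ the shifts live on scale $\ln n/\beta=k\,\OPT$. The largest of the $n$ shifts is about $k\,\OPT$, and a union bound with failure probability $o(n^{-1/k})$ only caps the shifts at about $(k+1)\OPT$.

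Comparing with $\delta_x\ge 0$ therefore gives only $\dist(v^\ast,x)\le\delta_{v^\ast}-\OPT\lesssim k\,\OPT$. This is neither $(k-1)\OPT/2$ nor the $k\,\OPT/2$ that your final inequality actually uses. Plugging it into $2\dist(v^\ast,x)+\OPT$ yields only a $(2k+1)$-approximation.

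The claimed joint probability bound is in fact false in general. Attach $n$ pendant leaves to $x$ by edges of weight $k\,\OPT/2+\OPT$. With high probability some leaf has shift at least $(1-o(1))k\,\OPT$. A node of $C$ can then win with gap $\OPT$ only if its own shift is about $k\,\OPT/2$, which happens with probability $n^{-1/2+o(1)}=o(n^{-1/k})$ for constant $k>2$. Otherwise the winner is a leaf at distance $k\,\OPT/2+\OPT$ from $x$. If you instead rescale to $\beta=2\ln n/(k\,\OPT)$ to get the right cluster radius, your gap-$\OPT$ event has probability only $n^{-2/k}$, which means $\tilde O(mn^{2/k})$ work.

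\textbf{The missing idea is to anchor the race at the whole set $V(C)$}, as the paper's event $\capture(V(C))$ does. That event asks for a center $s$ with $\dist(s,V(C))-\delta_s<\dist(s',V(C))-\delta_{s'}-\OPT/2$ for all $s'\neq s$. Now only one side pays for the spread of the cycle. For $u\in C$ we have $\dist(s,u)\le\dist(s,V(C))+\OPT/2$, while $\dist(s',u)\ge\dist(s',V(C))$ holds for free, so a gap of $\OPT/2$ suffices.

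Take $\beta=2\ln n/(k\,\OPT)$. The memoryless gap lemma then gives probability $n^{-1/k}$. The event that some shift exceeds $\delta_{\max}\approx(k+1)\OPT/2$ has probability at most $n^{-1/k}/e$, so you may condition on it not happening. Comparing $s$ with any $r\in V(C)$, which has $\dist(r,V(C))=0$ and $\delta_r\ge 0$, gives $\dist(s,V(C))<\delta_{\max}-\OPT/2\approx k\,\OPT/2$. Hence the extracted cycle weighs at most $2\dist(s,q)+\OPT\approx(k+1)\OPT$, where $q$ is the node of $C$ closest to $s$.

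A smaller point: you need not, and cannot, choose which edge of $C$ is the non-tree edge. The bound $\dist(s,a)+\dist(s,b)+w(a,b)\le 2\dist(s,q)+w(C)$ holds for every edge $\{a,b\}$ of $C$, by walking around $C$ from $q$ in both directions.

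Once the event is fixed, the rest of your plan is the paper's argument: super-source approximate SSSP with a tree-like output, an additive safety margin of $O(\epsilon\delta_{\max})$ to absorb approximation errors, and $\tilde O(n^{1/k})$ parallel repetitions.
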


We compare \Cref{thm:parallelUB} with prior work. The weighted undirected $\apxMWC$ problem has a rich history in the centralized setting. Lingas and Lundell~\cite{lingas2009efficient} gave an $\tilde{O}(n^2)$-time $2$-approximation algorithm, and Roditty and Tov~\cite{roditty2013approximating} subsequently improved the approximation factor to $4/3$ with the same running time. The first subquadratic-time approximation algorithms for $\apxMWC$ were obtained by Ducoffe~\cite{ducoffe2021faster}, who gave an $\tilde{O}(m+n^{5/3})$-time $2$-approximation. More recently, Kadria, Roditty, Sidford, Vassilevska Williams, and Zwick~\cite{kadria2022algorithmic} obtained a general tradeoff, giving a $2k$-approximation in $\tilde{O}(m+n^{1+1/k})$ time for every integer $k \geq 1$. Their recent follow-up work~\cite{kadria2023improved} further improves the approximation guarantee to $4k/3$.

It is not clear to us whether these algorithms directly imply low-depth parallel algorithms with comparable work. In terms of the tradeoff between work and approximation ratio, \Cref{thm:parallelUB} and the above centralized algorithms are not directly comparable. The algorithms of Kadria, Roditty, Sidford, Vassilevska Williams, and Zwick~\cite{kadria2022algorithmic,kadria2023improved} achieve a better time bound $\tilde{O}(m+n^{1+1/k})$, with approximation ratio $4k/3$. In comparison, \Cref{thm:parallelUB} gives a low-depth parallel algorithm with $\tilde{O}(mn^{1/k})$ work and approximation ratio $k+1$, which is better than $4k/3$ when $k > 3$. One advantage of our approach is that the use of MPX low-diameter decomposition yields a smooth tradeoff: our algorithm works for every real number $k \geq 1$, rather than only for integer values of $k$ as in the existing algorithms.

\paragraph{Further application 2: broadcast congested clique.} We next consider the \emph{broadcast congested clique} model~\cite{drucker2014power}. In this model, every pair of nodes can communicate directly, but in each round each node must broadcast the same $O(\log n)$-bit message to all other nodes. Thus, unlike in $\CONGEST$, communication is not restricted to the edges of the input graph; however, the broadcast restriction still creates a nontrivial communication bottleneck. This model is essentially the same as the  number-in-hand model of multi-party communication complexity with a shared blackboard~\cite{kushilevitz1997communication}: a message written on the blackboard is visible to all players, just as a broadcast message is received by all nodes.

\begin{restatable}[Broadcast congested clique upper bound]{theorem}{cliqueupper}
\label{thm:cliqueUB}
For every real number $k \geq 1$, the $(k+1)$-$\apxMWC$ problem in undirected weighted graphs can be solved with high probability in \[\tilde{O}\left(n^{1/k}\right)\] rounds in the broadcast congested clique model.
\end{restatable}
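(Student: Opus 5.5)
We plan to run the MPX-based cycle-finding primitive repeatedly, $\tilde{O}(n^{1/k})$ times in parallel or in sequence, and to implement each run in $\tilde{O}(1)$ broadcast congested clique rounds. The MPX decomposition assigns each node $v$ an independent shift $\delta_v \sim \exponential(\beta)$. Every node $u$ joins the cluster of the center $v$ minimizing $\dist(v,u)-\delta_v$. This is a single-source shortest path computation from a virtual super-source $s$ attached to every $v$ with edge weight $\max_x \delta_x - \delta_v$. We set $\beta \approx (\ln n)/(k\cdot \OPT)$ and guess $\OPT$ up to a factor $1+\epsilon$ over $O(\log n)$ scales, since weights are polynomial.

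The core combinatorial claim, which this section will use the same way as the $\CONGEST$ result, is the following. Fix a minimum weight cycle $C$. With probability $\Omega(n^{-1/k})$, some cluster contains all of $C$, and its center $c$ satisfies $\dist(c,C) \le \frac{k-1}{2}\OPT$ roughly. Given this event, consider the shortest-path tree of the cluster. Some edge $(x,y)$ of $C$ is a non-tree edge with both endpoints in the cluster. The two tree paths from $c$ to $x$ and to $y$, together with $(x,y)$, contain a cycle of weight at most $d(c,x)+d(c,y)+w(x,y)\le (k+1)\OPT$, using that $\dist(c,u)\le \dist(c,C)+\OPT/2$ for every $u\in C$. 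Each node then evaluates, for each incident non-tree edge whose endpoints lie in the same cluster, the candidate $\tilde d(x)+\tilde d(y)+w(x,y)$. We take the global minimum over all repetitions, and repeating $\tilde{O}(n^{1/k})$ times gives success with high probability. Any returned value corresponds to a genuine closed walk containing a cycle at most that weight, so there is no underestimation.

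For the broadcast congested clique implementation, we need $(1+\epsilon)$-approximate SSSP from the virtual source in $\tilde{O}(1)$ rounds per run. Running $\tilde{O}(n^{1/k})$ independent runs sequentially then gives the bound. We would invoke a known polylogarithmic-round $(1+\epsilon)$-approximate SSSP algorithm in the broadcast congested clique, such as the Becker et al.\ transshipment-based algorithm. Its source is made virtual by having all nodes broadcast their $\delta_v$ in one round each, or equivalently by using initial distance labels. Cluster identities and approximate distances become known locally, and the final minimum plus the edge reporting (tree paths) can be aggregated by broadcasting. Reconstructing the output cycle needs each node to know its parent; we mark edges of the two tree paths by having the winning endpoints broadcast, and path nodes mark themselves using parent pointers. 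This takes $O(\text{hop length})$ rounds, which could be large, so instead we broadcast all parent pointers of the winning run, one per node, in one round.

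The main obstacle is the robustness step. The capture event and the stretch bound are argued for exact distances, but we only have $(1+\epsilon)$-approximate distances, under which MPX clusters may be disconnected or not tree-like. We expect to reuse the paper's robustness analysis for approximate SSSP. We would choose $\epsilon = \Theta(1/\log n)$ and absorb the resulting $(1+\epsilon)$ loss by shifting $k$ by $O(\log\log n/\log n)$, which costs only polylog factors.
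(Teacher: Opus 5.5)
Your overall route is the paper's: MPX via a virtual super source, $(1+\epsilon)$-approximate SSSP in $\tilde{O}(1)$ broadcast rounds, non-tree-edge cycle detection, and $\tilde{O}(n^{1/k})$ repetitions over guessed scales. Your broadcast bookkeeping is also fine. The gap is in your ``core combinatorial claim'', which is false as stated. Your final ratio $k+1$ comes out only because that claim is paired with a cycle bound that is too lossy.

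First, the capture argument puts $C$ inside one cluster by requiring the winning center to lead every other center by the weak diameter of $C$, i.e.\ by $\OPT/2$. Getting probability $n^{-1/k}$ from $e^{-\beta\OPT/2}$ therefore requires $\beta=2\ln n/(k\,\OPT)$, which is the paper's $\ln n/(kd)$ with $2d\approx\OPT$. With your $\beta\approx\ln n/(k\,\OPT)$, the same analysis gives only a $(2k+1)$-approximation (with $\tilde{O}(n^{1/(2k)})$ repetitions).

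Second, even with the right $\beta$, MPX only gives $\dist(c,C)<\delta_c-\OPT/2$, and the only control on the winner's shift is the maximum shift. The paper truncates at $\delta_{\max}=(k+1+k/\ln n)d$ and shows by a union bound that truncation fails with probability at most $n^{-1/k}/e$, below the capture probability. This yields only $\dist(c,C)<kd(1+1/\ln n)\approx\frac{k}{2}\OPT$. Your bound $\frac{k-1}{2}\OPT$ genuinely fails. Take a $4$-cycle $C$ with edges of weight $\OPT/4$, and hang $n/4$ pendant leaves on each cycle node via edges of weight $\frac{2k-1}{4}\OPT$. Only the four cycle nodes lie within $\frac{k-1}{2}\OPT$ of $C$. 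For a cycle node's cluster to contain its antipode $u$, its shift must exceed $\frac{\OPT}{2}-\frac{2k-1}{4}\OPT+\max_{\ell}\delta_\ell\approx\frac34\OPT$, where the maximum is over the leaves $\ell$ at $u$ and is about $\ln n/\beta=\frac{k}{2}\OPT$. That event has probability $n^{-3/(2k)+o(1)}=o(n^{-1/k})$ for constant $k$.

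With the correct bound $\dist(c,C)\lesssim\frac{k}{2}\OPT$, your accounting $\dist(c,x)+\dist(c,y)+w(x,y)\le 2\dist(c,C)+\OPT+w(x,y)$ yields only $(k+2)\OPT$, because it effectively pays for the edge $\{x,y\}$ twice. The missing step is the sharper bound
\[
\dist(c,x)+\dist(c,y)+w(x,y)\le 2\dist(c,q)+w(C),
\]
where $q$ is the cycle node nearest to $c$. Removing $\{x,y\}$ from $C$ leaves an $x$-$y$ path through $q$ of weight $w(C)-w(x,y)$, so $\dist(q,x)+\dist(q,y)\le w(C)-w(x,y)$. Combined with $\dist(c,q)<\delta_{\max}-d$ and $w(C)\le 2d$, this gives the estimate $2\delta_{\max}=(k+1)(1+O(1/\log n))\OPT$.

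Finally, when you import the robustness analysis, note that distances from the virtual source are of order $\delta_{\max}\approx\frac{k+1}{2}\OPT$. A $(1+\epsilon)$-approximation therefore costs an additive $\Theta(\epsilon k\,\OPT)$; this is the safety margin $2\epsilon\delta_{\max}$ and the requirement $\OPT\le 2(d-2\epsilon\delta_{\max})$. You must absorb a loss of $1+O(\epsilon k)$ rather than $1+O(\epsilon)$, which is why the paper takes $\epsilon=1/(k\log^2 n)$ instead of $\Theta(1/\log n)$.
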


As in the $\CONGEST$ model, this upper bound is \emph{optimal} up to polylogarithmic factors. Indeed, our lower-bound construction also implies a matching lower bound in the broadcast congested clique model.

\begin{restatable}[Broadcast congested clique lower bound]{theorem}{cliquelower}
\label{thm:cliqueLB}
Assuming the Erd\H{o}s girth conjecture, for every integer $k \geq 1$ and every real number $\epsilon > n^{-O(1)}$, any distributed algorithm that solves the $(k+1-\epsilon)$-$\apxMWC$ problem with high probability requires
\[
    \tilde{\Omega}\left(n^{1/k}\right)
\]
rounds in the broadcast congested clique model. This lower bound holds both for directed unweighted graphs and for undirected weighted graphs.
\end{restatable}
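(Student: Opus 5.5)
We will reduce from set disjointness in the two-party (or shared-blackboard) communication model, using an Erdős girth graph as the backbone. Fix an integer $k\ge 1$ and let $H=(L\cup R,E_H)$ be a bipartite graph on $n$ nodes with $|E_H|=\Theta(n^{1+1/k})$ and girth at least $2k+2$, as guaranteed by \Cref{girth_conj}. Alice holds $x\in\bin^{E_H}$ and Bob holds $y\in\bin^{E_H}$, and they must decide $\disj(x,y)$. This has randomized communication complexity $\Omega(|E_H|)=\Omega(n^{1+1/k})$, and the bound persists with a shared blackboard.

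We build an input graph $G_{x,y}$ on the node set of $H$, possibly duplicated into $O(n)$ nodes. For undirected weighted graphs we make a copy $L'$ of $L$. For each $e=\{u,v\}\in E_H$ with $u\in L,v\in R$, we insert a light edge $u$--$v$ (weight $1$) iff $x_e=1$ and a light edge $v$--$u'$ iff $y_e=1$. We also add heavy edges $u'$--$u$ of weight $W\ge k+1$, chosen so that any cycle using heavy edges or non-aligned paths is long. The intended property is this: if some $e$ has $x_e=y_e=1$, there is a short cycle of weight $O(1)$, for instance $u\to v\to u'\to u$ with normalized weight. Otherwise, every cycle must traverse an alternation in $H$ of length at least the girth $2k+2$, or use multiple heavy edges, and so has weight at least $(k+1)$ times the short cycle. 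Tuning $W$ so that the ratio is exactly $k+1$ means a $(k+1-\epsilon)$-approximation distinguishes the two cases; the polynomially bounded weights allow $\epsilon>n^{-O(1)}$. The directed unweighted version replaces weights by directed paths, with edges oriented $L\to R$ for Alice and $R\to L'$ for Bob, so that cycles must alternate; the girth then forces length at least about $(k+1)$ times the short cycle.

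For the simulation in the broadcast congested clique, each node's incident edges depend on both $x$ and $y$. We therefore assign each node to a player, for instance $L$ to Alice and $R\cup L'$ to Bob. Alice's edges are determined by $x$ alone, Bob's by $y$ alone, and the edges $u$--$u'$ are fixed. The edges from $L$ to $R$ are known to Alice, who knows $x$, and Bob's edges are symmetric, so each endpoint's local input can be reconstructed by the player simulating it. Each round, every node broadcasts $O(\log n)$ bits, so one round costs $O(n\log n)$ bits of communication. An $r$-round algorithm thus yields a protocol with $O(rn\log n)$ bits. Combined with $\Omega(n^{1+1/k})$, this gives $r=\tilde{\Omega}(n^{1/k})$, and the high-probability success carries over to the randomized protocol.

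The main obstacle is the gadget design guaranteeing the exact gap $k+1$. Every cycle in the disjoint case must be shown to have weight at least $(k+1)\cdot\OPT_{\text{yes}}$. The danger is cycles that go back and forth between $L$, $R$, and $L'$ without using the girth structure, e.g. $u\to v\to u'$ and back via another $R$-vertex. We would first try orienting edges (directed case) or making the $u$--$u'$ weight large, so that any cycle using only $x$-edges and $y$-edges projects to a closed walk in $H$. The girth property then applies, with care to show that the projected closed walk contains a genuine cycle of $H$ rather than backtracking along the same edge, which is exactly where $x_e y_e=0$ is used.
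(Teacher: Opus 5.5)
\paragraph{Gap 1: the simulation fails because $R$ is shared.} You duplicate only $L$, so every $v\in R$ is incident both to $x$-edges $\{u,v\}$ and to $y$-edges $\{v,u'\}$. Its initial knowledge, namely its incident edges, therefore depends on both $x$ and $y$. If you assign $R$ to Bob, he cannot initialize $v$ because he does not know $x$; assigning $R$ to Alice fails symmetrically for $y$. Sending the missing adjacencies across costs $\sum_{v\in R}\deg_H(v)=|E_H|$ bits, which is the entire disjointness input. So your claim that ``each endpoint's local input can be reconstructed by the player simulating it'' is false, and the reduction collapses.

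The fix, which is what the paper does, is to duplicate \emph{both} sides of $H$. Take copies $A\cup B$ and $U\cup V$ of $L\cup R$. Put the $x$-edges between $A$ and $B$ and the $y$-edges between $U$ and $V$. Join the copies by the input-independent perfect matching $\{a_i,u_i\}$, $\{b_j,v_j\}$. Now Alice owns $A\cup B$, Bob owns $U\cup V$, every node's input is known to its owner, and your $O(n\log n)$-bits-per-round accounting goes through. Your directed variant has the same shared-$R$ problem, although its $3$ versus $3(k+1)$ gap is otherwise correct. The paper orients $a_i\to u_i\to v_j\to b_j\to a_i$, which forces alternation and gives $4$ versus $4(k+1)$.

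\paragraph{Gap 2: the weighting is backwards.} The yes-instance cycle $u$--$v$--$u'$--$u$ must use a connecting edge, so it weighs $W+2$. Cycles that avoid connecting edges, for example one made entirely of $x$-edges, are unaffected by $W$. In the disjoint case they are only guaranteed weight $2k+2$. The certified separation is therefore $(2k+2)/(W+2)$, which is strictly less than $k+1$ for every $W>0$. Large $W$ makes this worse, not better.

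The connecting edges must have weight $0$. Contracting them maps every cycle to a closed walk in $H$. This walk is nonempty because the connecting edges form a matching, and by disjointness it uses no $H$-edge twice. Hence it contains a cycle of $H$ and uses at least $2k+2$ unit-weight edges, giving $\OPT\le 2$ versus $\OPT\ge 2(k+1)$.

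Positive integer weights are then recovered by $w'(e)=Mw(e)+1$ with $M=\lceil 2(k+1)n/\epsilon\rceil$. The gap survives because $(k+1-\epsilon)(2M+n)<2(k+1)M$. The hypothesis $\epsilon>n^{-O(1)}$ is exactly what keeps $M$, and hence all weights, polynomially bounded.
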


\subsection{Independent Work}

Independently, Chechik, Lifshitz, and Mukhtar~\cite{chechik2026girth} also study $\apxMWC$ in the $\CONGEST$ model. In particular, they independently obtain the same lower bound as our \Cref{thm:klower}.

On the algorithmic side, the result most closely related to our main upper bound is their tradeoff for undirected weighted graphs: for every \emph{integer} $k \geq 2$, they solve $(2k-1+o(1))$-$\apxMWC$ in $\tilde{O}\left(n^{\frac{k+1}{2k+1}}+D\right)$ rounds. Our upper bound gives a strictly stronger guarantee at the same round complexity: \Cref{thm:mainUB} solves $(k+1)$-$\apxMWC$ in $\tilde{O}\left(n^{\frac{k+1}{2k+1}}+D\right)$ rounds for every \emph{real} number $k \geq \frac{1+\sqrt{5}}{2} \approx 1.618$. Thus, for every $k>2$, our approximation ratio is better under the same round complexity bound. We emphasize that our upper bound is \emph{tight} up to polylogarithmic factors, as it matches our lower bound at every integer point. 

The two works take substantially different algorithmic approaches. Our approach is built around a new connection between $\MWC$ and MPX low-diameter decompositions, which forms the main technical basis of our upper bounds, whereas the approach of Chechik, Lifshitz, and Mukhtar does not rely on MPX decompositions.

%\yijun{I revised the text above a bit to highlight the difference, feel free to expand.}
 
Their work also obtains results in settings not addressed by our work. In particular, they solve $f$-$\apxMWC$ in undirected unweighted graphs in $\tilde{O}(n^{1/f}+D)$ rounds, for any integer $f>2$. For directed graphs, they give $\tilde{O}(n^{2/3}+D)$-round algorithms for $2$-$\apxMWC$ in the unweighted setting and for $(2+\epsilon)$-$\apxMWC$ in the weighted setting, improving over the previous $\tilde{O}(n^{4/5}+D)$ upper bounds of \citet{10.1145/3662158.3662801}.

%\gopinath{We were wondering if we should add some text comparing with the techniques of \cite{chechik2026girth}.}\yijun{that could be a good idea, but I don't think I understand their work enough to do this myself. We could say that we are using very different ideas, in particular, their approach does not involve mpx (which is the basis of our algorithms) at all}

\subsection{Additional Related Work}

In the centralized setting, the $\MWC$ problem has been studied extensively since the 1970s. In the unweighted case, Itai and Rodeh~\cite{itai1978finding} showed that exact $\MWC$ can be solved in $\min\{O(mn), O(n^\omega)\}$ time for both directed and undirected graphs, where $\omega$ denotes the matrix-multiplication exponent. Later, Roditty and Vassilevska Williams~\cite{roditty2011minimum} reduced exact $\MWC$ to the minimum weight triangle problem, obtaining $O(Mn^\omega)$-time algorithms for undirected graphs with integer weights in $[1,M]$ and directed graphs with integer weights in $[-M,M]$ and no negative cycles.

There has also been considerable recent progress on approximation algorithms. We have already discussed the undirected weighted case, so we focus here on the remaining settings. For undirected unweighted graphs, Kadria, Roditty, Sidford, Vassilevska Williams, and Zwick~\cite{kadria2022algorithmic} gave an $\tilde{O}(n^{1+1/k})$-time algorithm that, for every integer $k \geq 1$, returns a cycle of weight at most $2k\lceil g/2\rceil$, where $g = \OPT$. For directed graphs, \citet{chechik2020constant} gave an $\tilde{O}(m^{1+1/k})$-time algorithm achieving both $O(k\log k)$-$\apxMWC$ and $O(k\log\log n)$-$\apxMWC$, for every integer $k \geq 1$. This improves upon the earlier work of \citet{pachocki2018approximating}, which achieved an $O(k\log n)$ approximation within the same running time.

Next, we turn to the distributed setting. Beyond $\MWC$, distributed algorithms for finding short cycles have been studied extensively. \citet{CPZ21} used expander decompositions and routing to list all triangles in $\tilde{O}(n^{1/3})$ rounds in the $\CONGEST$ model, matching the $\tilde{\Omega}(n^{1/3})$ lower bound of \citet{IzumiL17}. More generally, a substantial body of work has investigated the round complexity of detecting $k$-node cycles in the $\CONGEST$ model~\cite{censorhillel_et_al:LIPIcs.DISC.2020.33,drucker2014power,fraigniaud2024even,fraigniaud_et_al:LIPIcs.ICALP.2025.80,korhonen_et_al:LIPIcs.OPODIS.2017.4}.

The $\MWC$ problem has also been studied in other distributed models. In the congested clique model, where every pair of nodes can exchange $O(\log n)$ bits per round, \citet{censorhillel_et_al:LIPIcs.DISC.2020.33} showed that an additive-one approximation for undirected unweighted $\MWC$ can be computed in $O(1)$ rounds.

A closely related problem is \emph{Replacement Paths} ($\RP$). Given a shortest path $P$ from a source $s$ to a target $t$, $\RP$ asks, for every edge $e\in P$, for the shortest $s$-$t$ path avoiding $e$. Together with the corresponding subpath of $P$, such a replacement path forms a cycle containing $e$. As discussed earlier, this connection is also reflected in fine-grained complexity: $\MWC$ and $\RP$ belong to the same $\tilde{O}(n^3)$-time equivalence class for dense graphs~\cite{williams2010subcubic} and the same $\tilde{O}(mn)$-time equivalence class for sparse graphs~\cite{agarwal2018fine}. In the $\CONGEST$ model, the recent work of \citet{manoharan2024computing}  initiated a systematic study of $\RP$. Subsequently, \citet{chang2025optimal} showed that unweighted directed $\RP$ has tight round complexity $\tilde{\Theta}(n^{2/3}+D)$, providing another example of a global graph problem whose complexity lies outside the most familiar distributed complexity classes.

\section{Technical Overview}
  
In this section, we present an overview of our proofs of \Cref{thm:mainUB,thm:klower,thm:parallelUB,thm:cliqueUB,thm:cliqueLB}.

\subsection{The Guiding Tradeoff}

Our $\CONGEST$  upper and lower bounds are both governed by the same underlying congestion--dilation tradeoff. For the $(k+1-\epsilon)$-$\apxMWC$ problem, the target round complexity is
\[
  \tilde{\Theta}\left(n^{\frac{k+1}{2k+1}} + D\right),
\]
which is exactly the bound achieved by \Cref{thm:mainUB} whenever
$k \geq \frac{1+\sqrt{5}}{2}$.

The exponent $\frac{k+1}{2k+1}$ arises from optimizing the simple expression
\[
    f(\alpha)=\frac{n}{\alpha}+\alpha^{1+1/k}.
\]
Informally, the term $n/\alpha$ represents a \emph{dilation} cost: information may have to travel along paths of  $\tilde{\Theta}(n/\alpha)$ nodes. The term $\alpha^{1+1/k}$ represents a \emph{congestion} cost: $\tilde{\Theta}(\alpha^{1+1/k})$ bits of information may have to traverse the same edge. Balancing these two terms gives
\(\min_\alpha\left\{\frac{n}{\alpha}+\alpha^{1+1/k}\right\}
    \in
    \Theta\left(n^{\frac{k+1}{2k+1}}\right)\).
    
Both our upper and lower bounds realize this tradeoff. On the lower bound side, we construct hard instances showing that any $\CONGEST$ algorithm must incur either $\tilde{\Omega}(n/\alpha)$ dilation or $\tilde{\Omega}(\alpha^{1+1/k})$ congestion. On the upper bound side, we employ different algorithmic approaches for long-hop and short-hop minimum weight cycles. By choosing an appropriate threshold separating these two cases, together with the parameter $\alpha$ controlling the congestion--dilation tradeoff of the approximate SSSP computing underlying the MPX low-diameter decomposition, we obtain the desired congestion--dilation tradeoff.

%On the algorithmic side, we design a distributed \((k+1)\)-approximation for the minimum weight cycle in the $\CONGEST$ model whose round complexity matches the tradeoff discussed above, up to polylogarithmic factors and omitting the dependence on the diameter~\(D\).

\subsection{Capturing Cycles via MPX Low-Diameter Decompositions}

We begin by describing a connection between $\apxMWC$ and MPX low-diameter decompositions, which is a key ingredient underlying our algorithms.

First, since the parameter $k \geq 1$ in \Cref{thm:mainUB,thm:parallelUB,thm:cliqueUB} can be any real number, any additional $(1+\epsilon)$ factor in the approximation ratio can be absorbed by decreasing $k$ by a $1-O(\epsilon)$ factor. Inspecting the round complexity bounds in these theorems, as long as $\epsilon \in O(\log\log n / \log n)$, the resulting increase in the round complexity is at most a polylogarithmic factor and is therefore absorbed into the $\tilde{O}(\cdot)$ notation. Thus, in the following discussion, it suffices to explain how to obtain a $(1+\epsilon)(k+1)$-approximation for a sufficiently small $\epsilon$.

For simplicity, throughout the technical overview, we assume that we are given a parameter $d$ such that $\OPT = 2d$, and our goal is to find a cycle of weight at most $(1+\epsilon)(k+1)\cdot 2d$. In the actual algorithm, we try $O(\epsilon^{-1}\log n)$ candidate values of $d$, ensuring that for one of them, the estimate $2d$ is within a $1+O(\epsilon)$ factor of $\OPT$.

\paragraph{MPX low-diameter decompositions.}
A key ingredient of our algorithm is a \emph{weighted} variant of the MPX low-diameter decomposition~\cite{miller2013parallel}. Each node $v$ independently samples a starting time of $-\delta_v$, where $\delta_v$ is drawn from the exponential distribution with parameter $\beta$. Starting from this time, node $v$ grows an SSSP tree.

It is helpful to view each SSSP tree as a continuously expanding wavefront, where traversing an edge $e$ of weight $w(e)$ takes time $w(e)$. Every node joins the cluster of the source whose wavefront reaches it first. In this way, the graph is partitioned into node-disjoint clusters, each equipped with a local SSSP tree rooted at its cluster center.

\paragraph{Analysis.}
We fix an arbitrary minimum weight cycle $C^\star$ of weight $2d$ and analyze how the decomposition behaves around it. We show how to choose $\beta$, as a function of $k$, $d$, and $n$, so that with probability $\Omega(n^{-1/k})$, one cluster captures all of $C^\star$. Moreover, the center of this cluster is close to the cycle: its distance to the nearest node of $C^\star$ is at most $kd(1+O(1/\log n))$.

\paragraph{Searching for a short cycle.}
Whenever the good event above occurs, we can find a cycle inside the cluster with weight at most
\[
    2 kd(1+O(1/\log n)) + 2d
    = 
    (1+\epsilon)(k+1)\cdot \OPT
\]
for some $\epsilon \in O(1/\log n)$, which meets our target approximation guarantee.

It remains to explain why such a cycle can be found within the cluster. It suffices to search over all cycles that consist of exactly one non-tree edge and otherwise only edges of the local SSSP tree. Indeed, at least one edge of $C^\star$ is not a tree edge. Taking such a non-tree edge $e$ together with the unique tree path between its endpoints the local SSSP tree gives a cycle, and the good event ensures that this cycle satisfies the weight bound above.

Finally, to boost the success probability to $1-1/\poly(n)$, we repeat the decomposition independently $\tilde{O}(n^{1/k})$ times and output the lightest cycle found over all repetitions.

\subsection{Implementation via Approximate SSSP Computation}

The MPX low-diameter decomposition can be implemented using an undirected weighted SSSP computation. Conceptually, we add a virtual \emph{super source} connected to every node, where the weight of the edge incident to a node encodes its random starting time. To the best of our knowledge, all existing distributed SSSP algorithms can accommodate such a virtual super source.

A faithful implementation of the MPX low-diameter decomposition would use \emph{exact} SSSP. For our purposes, however, a sufficiently accurate $(1+\epsilon)$-approximation is enough. This is not immediate and requires careful analysis. In fact, some properties of the MPX low-diameter decomposition no longer hold if exact SSSP is replaced by $(1+\epsilon)$-approximate SSSP; see~\cite[Appendix~A]{RozhonHMGZ23} for a discussion. We identify the properties that remain valid under $(1+\epsilon)$-approximation and show that they suffice for our application. This is crucial for our algorithm, as the ability to use $(1+\epsilon)$-approximate rather than exact SSSP is what leads to the clean complexity upper bounds in \Cref{thm:mainUB,thm:parallelUB,thm:cliqueUB}.

\paragraph{Parallel computation.} We first consider the parallel setting. For $\epsilon \in \log^{-O(1)} n$, $(1+\epsilon)$-approximate SSSP in undirected weighted graphs can be computed with $\tilde{O}(m)$ work and $\tilde{O}(1)$ depth~\cite{andoni2020parallel,li2020faster,rozhovn2022undirected}. Since our approach repeats the decomposition $\tilde{O}(n^{1/k})$ times and these repetitions can be executed in parallel, the total cost becomes $\tilde{O}(mn^{1/k})$ work and $\tilde{O}(1)$ depth. This proves \Cref{thm:parallelUB}.

\paragraph{The broadcast congested clique model.}  The same idea also gives an immediate algorithm in the broadcast congested clique model. For $\epsilon \in \log^{-O(1)} n$, $(1+\epsilon)$-approximate SSSP in undirected weighted graphs can be computed in $\tilde{O}(1)$ rounds in this model~\cite{becker2021near}. Repeating the decomposition $\tilde{O}(n^{1/k})$ times therefore gives an overall round complexity of $\tilde{O}(n^{1/k})$, as stated in \Cref{thm:cliqueUB}.

\paragraph{The $\CONGEST$ model.}  It remains to understand what this direct implementation gives in the $\CONGEST$ model. For $\epsilon \in \log^{-O(1)} n$, the same approximate SSSP computation can be implemented using $\tilde{O}(n/\alpha+\alpha+D)$ rounds with congestion $\tilde{O}(\alpha)$, for any choice of parameter $\alpha$~\cite{becker2021near}. Since the decomposition is repeated $\tilde{O}(n^{1/k})$ times, the resulting collection of algorithms has congestion $\tilde{O}(\alpha n^{1/k})$ and dilation $\tilde{O}(n/\alpha+\alpha+D)$. Using \Cref{thm: dilation and congestion}, optimizing $\alpha$ to balance these two terms yields the round complexity
\[
    \tilde{O}\left(n^{\frac{k+1}{2k}} + D\right).
\]
This is already a nontrivial upper bound in the $\CONGEST$ model, but it falls short of our target
\[
    \tilde{O}\left(n^{\frac{k+1}{2k+1}} + D\right).
\]

\subsection{Sharpening the Round Complexity: Short-Hop Cycles}

We now explain how to sharpen the upper bound to achieve our target round complexity. The key idea is to handle \emph{short-hop} and \emph{long-hop} minimum weight cycles separately.

Let $\alpha$ be the parameter in our guiding congestion--dilation tradeoff, and define
\[
    h \in \tilde{\Theta}\left(\frac{n}{\alpha}\right)
\]
to be the threshold separating short-hop and long-hop cycles.

In the following discussion, we focus on the \emph{short-hop} regime, where we assume that an optimal minimum weight cycle $C^\star$ contains at most $h$ edges.

\paragraph{Hop-bounded approximate SSSP computation.}
For short-hop cycles, the key idea is to make the approximate SSSP computation \emph{hop-bounded}. We do so by adding a small perturbation $+\epsilon\cdot 2d/h$ to every edge weight. This changes $\OPT$ by at most a factor of $1+\epsilon$, while ensuring that every shortest path relevant to the approximate SSSP computation uses only $O(h/\epsilon)$ hops. For $\epsilon \in \log^{-O(1)} n$, such a hop-bounded $(1+\epsilon)$-approximate SSSP computation can be done with congestion $\tilde{O}(1)$ in $\tilde{O}(h)=\tilde{O}(n/\alpha)$ rounds~\cite{nanongkai2014distributed}.

This is the main advantage over the standard, non-hop-bounded approximate SSSP routine: the congestion drops from $\tilde{O}(\alpha)$ to $\tilde{O}(1)$. As discussed earlier, we repeat the decomposition independently for $\tilde{O}(n^{1/k})$ iterations, so the total congestion is now only $\tilde{O}(n^{1/k})$, while the dilation remains $\tilde{O}(h)=\tilde{O}(n/\alpha)$. Consequently, the overall round complexity is
\[
    \tilde{O}\left(n^{1/k}+n/\alpha\right),
\]
which is within our target bound. Indeed, the term $\tilde{O}(n/\alpha)$ is exactly the desired dilation term in the congestion--dilation tradeoff, and the additional term $\tilde{O}(n^{1/k})$ is dominated by our target bound $\tilde{O}\left(n^{\frac{k+1}{2k+1}}\right)$ whenever $k\ge (1+\sqrt{5})/2$. This proves \Cref{thm:mainUB} in the short-hop regime.

\subsection{Sharpening the Round Complexity: Long-Hop Cycles}

Next, we turn to the \emph{long-hop} regime, where a minimum weight cycle $C^\star$ contains at least $h$ edges, for
\(h\in\tilde{\Theta}(n/\alpha)\),
where $\alpha$ is the parameter in our target congestion--dilation tradeoff.

\paragraph{Sharpening the upper bound via congestion reduction.}
Recall that implementing each MPX low-diameter decomposition via approximate SSSP costs $\tilde{O}(n/\alpha+\alpha+D)$ rounds and incurs congestion $\tilde{O}(\alpha)$. Since the decomposition is repeated $\tilde{O}(n^{1/k})$ times, the total cost is $\tilde{O}(\alpha n^{1/k})$ congestion and $\tilde{O}(n/\alpha+\alpha+D)$ dilation.
Thus, to achieve our target congestion--dilation tradeoff, and hence prove \Cref{thm:mainUB}, it suffices to reduce the congestion from $\tilde{O}(\alpha n^{1/k})$ to $\tilde{O}(\alpha^{1+1/k})$. Equivalently, it suffices to reduce the number of repetitions from $\tilde{O}(n^{1/k})$ to $\tilde{O}(\alpha^{1/k})$. We show that this is indeed achievable in the long-hop regime.

\paragraph{Restricting the MPX low-diameter decomposition.}
For long-hop cycles, the key observation is that, when $C^\star$ contains many edges, it is unnecessary to initiate a cluster from every node in the MPX low-diameter decomposition. Instead, we randomly sample only $\tilde{\Theta}(\alpha)$ \emph{skeleton nodes} and initiate clusters exclusively from these nodes. Since $C^\star$ contains at least $h\in\tilde{\Theta}(n/\alpha)$ edges, a standard sampling argument shows that, with high probability, it contains $\tilde{\Omega}(1)$ skeleton nodes. Conceptually, restricting the decomposition to the sampled skeleton nodes replaces the parameter $n$ by $\alpha$ in the MPX analysis. As a result, the number of repetitions decreases from $\tilde{O}(n^{1/k})$ to $\tilde{O}(\alpha^{1/k})$, yielding the desired improvement in congestion.

This reduction in repetitions comes at the expense of a weaker capture guarantee. Previously, we showed that a single execution of the decomposition captures the entire minimum weight cycle $C^\star$ with probability $\Omega(n^{-1/k})$. Once cluster centers are restricted to the sampled skeleton nodes, such a guarantee is no longer possible. Instead, we prove that all skeleton nodes on $C^\star$ are captured by a single cluster with probability $\Omega(\alpha^{-1/k})$. In fact, our analysis establishes an even stronger structural property. The sampled skeleton nodes partition $C^\star$ into \emph{segments}, each containing only $\tilde{O}(n/\alpha)$ edges. We show that every segment, except the one farthest from the cluster center, is guaranteed to lie entirely within the cluster.

\paragraph{Patching the missing segment.}
It remains to recover the missing segment. We do this by running $(1+\epsilon)$-approximate hop-bounded SSSP from every skeleton node, with hop bound $\tilde{O}(n/\alpha)$. This allows the two endpoints of the missing segment to detect the segment and obtain a sufficiently accurate estimate of its weight.

As discussed earlier, a single hop-bounded approximate SSSP computation has congestion $\tilde{O}(1)$ and takes $\tilde{O}(n/\alpha)$ rounds~\cite{nanongkai2014distributed}. Since we have only $\tilde{O}(\alpha)$ skeleton nodes and perform $\tilde{O}(\alpha^{1/k})$ repetitions, the total patching cost is congestion $\tilde{O}(\alpha^{1+1/k})$ and dilation $\tilde{O}(n/\alpha)$. This is exactly the desired congestion--dilation tradeoff.

There is one final subtlety: the approximate shortest paths used for patching must create simple cycles, rather than merely retracing paths already present in the tree associated with the cluster. We rule out this degeneracy by running the hop-bounded approximate SSSP computation on a  modified graph, which can still be simulated efficiently in the original graph.

\subsection{Lower Bound}

We conclude the technical overview by sketching the proofs of our lower bounds, \Cref{thm:klower,thm:cliqueLB}. While these lower bounds hold both for directed unweighted graphs and for undirected weighted graphs, in the technical overview we only consider the undirected weighted setting.

\paragraph{Hard instances.}
Assuming the Erd\H{o}s girth conjecture, we begin with a base graph $H$ on $\Theta(\alpha)$ nodes with girth at least $2k+2$ and $\Theta(\alpha^{1+1/k})$ edges. Every edge of $H$ is assigned weight~1.

From $H$, we derive two subgraphs $G_1$ and $G_2$. For each edge of $H$, we independently decide whether it appears in $G_1$, in $G_2$, in both, or in neither. Thus, the edge sets of $G_1$ and $G_2$ encode two $\Theta(\alpha^{1+1/k})$-bit strings, where the presence of an edge represents bit~1 and its absence represents bit~0.

Finally, for every node $v$ of $H$, we connect its two copies $v_1$ and $v_2$ in $G_1$ and $G_2$ by a simple path of  $\Theta(n/\alpha)$ nodes. All edges on these paths are assigned negligible  weight. For simplicity, in the following discussion we assume they have zero weight.

\paragraph{Model-specific settings.}
For the broadcast congested clique model, we set $\alpha=n$. In fact, the lower bound already holds when every connecting path is a single edge, so $G_1$ and $G_2$ are simply joined by a matching.

For the $\CONGEST$ model, $\alpha$ is the parameter in our guiding congestion--dilation tradeoff. We additionally attach a standard overlay tree to reduce the diameter to $O(\log n)$, following the framework underlying $\tilde{\Omega}(\sqrt{n})$ lower bounds in the $\CONGEST$ model~\cite{das2011distributed}. The overlay tree edges are assigned extremely large weight so that they never participate in a minimum weight cycle.

\paragraph{Analysis.}
Consider any edge $\{u,v\}$ of the base graph $H$. If the corresponding edges $\{u_1,v_1\}$ in $G_1$ and $\{u_2,v_2\}$ in $G_2$ are both present, then together with the two zero-weight connecting paths between $u_1,u_2$ and $v_1,v_2$, they form a cycle of total weight~2.

Otherwise, no such pair exists. By the girth assumption on $H$, every cycle contained in $G_1\cup G_2$ together with the connecting paths must traverse at least $2k+2$ base graph edges, and therefore has weight at least $2k+2$. The overlay tree cannot create a lighter cycle because its edges are prohibitively expensive.

Any $(k+1-\varepsilon)$-$\apxMWC$ algorithm must distinguish between the cases
\[
    \OPT\le2
    \qquad\text{and}\qquad
    \OPT\ge2k+2.
\]
The above discussion implies that such an algorithm must determine whether the edge sets of $G_1$ and $G_2$ intersect, thereby solving a set-disjointness instance of size $\Theta(\alpha^{1+1/k})$ across the natural cut separating $G_1$ and $G_2$.

\paragraph{Lower bound in the broadcast congested clique model.}
Recall that $\alpha=n$ in the broadcast congested clique model, so the resulting set-disjointness instance has size $\Theta(n^{1+1/k})$. By the classical $\Omega(N)$ communication lower bound for randomized set-disjointness on $N$-bit inputs~\cite{Razborov1992}, solving this instance requires $\Omega(n^{1+1/k})$ bits of communication. On the other hand, the broadcast congested clique can transmit only $\tilde{O}(n)$ bits per round over the entire network. It follows that every algorithm requires $\widetilde{\Omega}(n^{1/k})$ rounds, proving \Cref{thm:cliqueLB}.

\paragraph{Lower bound in the $\CONGEST$ model.}
In the $\CONGEST$ model, there are essentially two ways to communicate this information. One option is to send it along the connecting paths of  $\Theta(n/\alpha)$ nodes, incurring a \emph{dilation} cost of $\Omega(n/\alpha)$ rounds. The other is to route it through the overlay tree, which incurs a \emph{congestion} cost of $\tilde{\Omega}(\alpha^{1+1/k})$ rounds.

Intuitively, every algorithm must pay the smaller of these two costs, matching our guiding congestion--dilation tradeoff. The moving-cut framework~\cite{haeupler2020network,haeupler2021universally}, which generalizes the original construction of \citet{das2011distributed}, formalizes this intuition and yields the lower bound
\[
    \tilde{\Omega}\left(
        \max_{\alpha}
        \min\left\{
            \frac{n}{\alpha},
            \alpha^{1+1/k}
        \right\}
    \right)
    =
    \tilde{\Omega}\left(
        n^{\frac{k+1}{2k+1}}
    \right),
\]
thereby proving \Cref{thm:klower}.

\paragraph{Comparison with prior work.}
The previous lower bound of \citet{10.1145/3662158.3662801} also used a reduction from set-disjointness to establish an $\tilde{\Omega}(\sqrt{n})$ lower bound for any-factor approximation. Their construction follows the framework of \citet{das2011distributed}, originally developed for proving $\tilde{\Omega}(\sqrt{n})$ lower bounds for MST, SSSP, and many other distributed graph problems.

The same work also established an $\tilde{\Omega}(n)$ lower bound for $(2-\epsilon)$-$\apxMWC$. Their construction connects two bipartite graphs by a perfect matching, without using long paths or an overlay tree. Since the resulting set-disjointness instance has size $\Omega(n^2)$ and the two graphs are connected by only $O(n)$ edges, the reduction yields an $\tilde{\Omega}(n)$ lower bound.

Our lower bound instead draws inspiration from the recent $\tilde{\Omega}(n^{2/3})$ lower bound for the $\RP$ problem by \citet{chang2025optimal}, which showed that encoding information using the edges of a graph allows the framework of \citet{das2011distributed} to go beyond the $\tilde{\Omega}(\sqrt{n})$ barrier. More broadly, encoding quadratic amounts of information using edges has appeared in several recent lower bounds in the $\CONGEST$ model~\cite{10.5555/2095116.2095207,10.1145/3662158.3662801,manoharan2024computing}. Our lower bound combines this idea with the construction of \citet{10.1145/3662158.3662801} and the Erd\H{o}s girth conjecture to obtain the desired result.

\section{Preliminaries}\label{sec: preliminaries}

In this section, we define the distributed models considered in this paper, introduce the notions of congestion and dilation, and review a standard scheduling tool for distributed algorithms.

\paragraph{The $\CONGEST$ model.}
In the $\CONGEST$ model~\cite{peleg2000distributed}, the communication network is represented by a graph $G=(V,E)$, where each node corresponds to a computational device and each edge corresponds to a bidirectional communication link. Computation proceeds in synchronous rounds. In each round, every node performs arbitrary local computation, exchanges an $O(\log n)$-bit message with each of its neighbors, and updates its local state. Throughout the paper, we assume that each node $v$ has a unique identifier $\ID(v)$ of $O(\log n)$ bits and initially knows only its own identifier and the weights $w(e)$ of its incident edges $e$. 

\paragraph{The broadcast congested clique model.}
In this paper, we also consider the broadcast congested clique model~\cite{drucker2014power}, where the input graph is still $G=(V,E)$, but the communication network is the complete graph on $V$. Thus, unlike in $\CONGEST$, communication is not restricted to the edges of the input graph $G$. However, communication is restricted to \emph{broadcast}: in each round, each node sends a single $O(\log n)$-bit message, and the same message is received by all other nodes.  As in $\CONGEST$, computation proceeds in synchronous rounds with unlimited local computation.

\paragraph{Congestion and dilation.}
We use the following notions of congestion and dilation to describe the cost of scheduling a collection of distributed algorithms together. For a distributed algorithm $\mathcal{A}$ in the $\CONGEST$ model and an edge $e \in E$, let $c_{\mathcal{A}}(e)$ be the number of rounds in which $\mathcal{A}$ sends a message over $e$. The congestion of $\mathcal{A}$ is
$\congestion(\mathcal{A}) = \max_{e \in E} c_{\mathcal{A}}(e)$,
that is, the maximum number of times any single edge is used during the execution of $\mathcal{A}$. The dilation of $\mathcal{A}$, denoted $\dilation(\mathcal{A})$, is the number of communication rounds of $\mathcal{A}$.

We extend these definitions to a collection of algorithms $\{\mathcal{A}_1,\mathcal{A}_2,\ldots,\mathcal{A}_k\}$. We define
\begin{align*}
   \congestion(\mathcal{A}_1,\mathcal{A}_2,\ldots,\mathcal{A}_k)
    &=
    \max_{e \in E} \sum_{i=1}^k c_{\mathcal{A}_i}(e), \\
    \dilation(\mathcal{A}_1,\mathcal{A}_2,\ldots,\mathcal{A}_k)
    &=
    \max_{i \in [k]} \dilation(\mathcal{A}_i).
\end{align*}
Observe that we always have
\[
    \congestion(\mathcal{A}_1,\mathcal{A}_2,\ldots,\mathcal{A}_k)
    \leq
    \sum_{i=1}^k \congestion(\mathcal{A}_i).
\]
Suppose now that we want to execute $\mathcal{A}_1,\mathcal{A}_2,\ldots,\mathcal{A}_k$ concurrently, treating each algorithm as a black box. The congestion term is a lower bound on the number of rounds because an edge can carry only one $O(\log n)$-bit message per round, while the dilation term is a lower bound on the number of rounds because each individual algorithm must still complete its own communication pattern. Thus, any such schedule requires at least
\[
    \Omega(
        \congestion(\mathcal{A}_1,\mathcal{A}_2,\ldots,\mathcal{A}_k)
        +     \dilation(\mathcal{A}_1,\mathcal{A}_2,\ldots,\mathcal{A}_k)
    )
\]
communication rounds. The following well-known result shows that this lower bound can be matched up to a polylogarithmic factor.

\begin{proposition}[Scheduling distributed algorithms~\cite{ghaffari2015near}]\label{thm: dilation and congestion}
Any collection of independent distributed algorithms $\{\mathcal{A}_1, \mathcal{A}_2, \ldots, \mathcal{A}_k\}$ in the $\CONGEST$ model can be executed in 
\[\tilde{O}(\congestion(\mathcal{A}_1, \mathcal{A}_2, \ldots, \mathcal{A}_k)+\dilation(\mathcal{A}_1, \mathcal{A}_2, \ldots, \mathcal{A}_k))\]
rounds with high probability.
\end{proposition}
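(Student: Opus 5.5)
The natural approach is random-delay scheduling in the style of Leighton--Maggs--Rao, adapted to distributed algorithms as in \cite{ghaffari2015near}. Write $c=\congestion(\mathcal{A}_1,\ldots,\mathcal{A}_k)$ and $d=\dilation(\mathcal{A}_1,\ldots,\mathcal{A}_k)$.

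First I would handle the case where $c$ and $d$ are known (up to a factor $2$) to all nodes; the general case follows by doubling guesses or by computing $d$ and a bound on $c$ via a BFS tree, at an additive $O(D)$ cost absorbed in the bound. Each algorithm $\mathcal{A}_i$ picks a random start delay $\delta_i$ uniformly from $\{0,1,\ldots,\lceil c/\log n\rceil\}$. Shared randomness is required so that all nodes of $\mathcal{A}_i$ agree on $\delta_i$; I would obtain it by broadcasting $O(\log^2 n)$ random bits over a BFS tree and using a $\Theta(\log n)$-wise independent hash family to derive each $\delta_i$ from the algorithm index $i$. Under this delay, round $r$ of $\mathcal{A}_i$ is assigned to global ``phase'' $\delta_i+r$, so the number of phases is $O(c/\log n+d)$.

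The key step is to bound the load in a single phase. Fix an edge $e$ and a phase $t$. The number of messages scheduled on $e$ in phase $t$ is a sum over pairs $(i,r)$ with $\mathcal{A}_i$ using $e$ in round $r$. Each pair lands in phase $t$ with probability at most $\log n/c$, and there are at most $c$ such pairs in total. The expected load is therefore $O(\log n)$. By a Chernoff bound with $O(\log n)$-wise independence (indicators of distinct $i$ are independent, and for a fixed $i$ at most one $r$ hits $t$), the load is $O(\log n)$ with high probability. A union bound over all $\poly(n)$ edge--phase pairs extends this to every pair simultaneously. Each phase is then simulated in $O(\log n)$ real rounds, with each node forwarding messages tagged by $(i,r)$, so the total is $O((c/\log n+d)\log n)=\tilde{O}(c+d)$.

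The main obstacle I expect is correctness of the black-box simulation. A node's round-$r$ messages in $\mathcal{A}_i$ depend on everything it received in rounds $<r$, and phases are executed strictly in order: phase $\delta_i+r$ completes before $\delta_i+r+1$ begins. Every node therefore holds the inputs it needs on time, provided all nodes agree on phase boundaries. This holds because each phase is given a fixed length $O(\log n)$ chosen in advance; in the low-probability event of overflow the algorithm simply fails. Independence of the algorithms matters here, since it makes the congestion a deterministic quantity unaffected by the schedule and lets the Chernoff argument go through. The only remaining care is the shared-randomness step and the estimate of $c$, both standard.
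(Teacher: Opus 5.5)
Your random-delay analysis is correct as far as it goes. With $\Theta(\log n)$-wise independent delays from $\{0,\ldots,\lceil c/\log n\rceil\}$, every edge--phase pair carries $O(\log n)$ messages w.h.p., and phases of fixed length $O(\log n)$ can be run in order. This yields $O(c+d\log n)$ rounds \emph{provided} all nodes already share a random seed and know an upper bound on $c$.

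The gap is in how you supply these two things. Broadcasting the seed over a BFS tree costs $\Theta(D)$ rounds. So does aggregating $c$ and $d$, and so does globally detecting the failure of a doubling guess. The proposition has no $D$ term, and $D$ can be much larger than $c+d$ (e.g., many short, local computations on a long path). What you have proved is therefore only $\tilde{O}(c+d+D)$; your claim that the $O(D)$ cost is absorbed in the bound is false for the statement as written. Separately, nodes generally cannot compute $c$ before running the algorithms. The proposition is meant to be applied with upper bounds on congestion and dilation that are known in advance, as in every use in this paper.

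The missing idea is to generate the shared randomness \emph{locally}, at a cost depending only on the dilation. The cited result does this in a precomputation phase. Here is one way to do it:
\begin{enumerate}
\item Compute $O(\log n)$ independent MPX-style clusterings with shift parameter $\Theta(1/d)$. Clusters then have radius $O(d\log n)$, and w.h.p.\ every node's $d$-hop ball lies inside a single cluster in at least one clustering.
\item Each cluster center pipelines an $O(\log^2 n)$-bit seed through its cluster in $\tilde{O}(d)$ rounds.
\item For each clustering, simulate every $\mathcal{A}_i$ inside every cluster, using delays derived from that cluster's seed. A node sends its round-$r$ messages only if its $(r-1)$-hop ball lies inside the cluster. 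Then every message sent also occurs in the original execution, so your load analysis applies unchanged.
\item A node's output in $\mathcal{A}_i$ depends only on its $d$-hop ball, provided each node reuses the same private coins across clusterings. So each node reads its output from a clustering whose cluster contains its $d$-ball.
\end{enumerate}
This gives $\tilde{O}(c+d)$ with no dependence on $D$.

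Your weaker $\tilde{O}(c+d+D)$ bound does already suffice for every application of the proposition in this paper, since all the resulting round complexities carry an additive $D$ anyway.
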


\section{Capturing Cycles via MPX Low-Diameter Decompositions}

In this section, we show how a variant of the MPX low-diameter decomposition can be used to extract cycles. In \Cref{subsect:mpx}, we present the variant of the MPX low-diameter decomposition used in our algorithm. In \Cref{subsect:capture}, we formalize what it means for a set of nodes to be captured by a cluster in the decomposition. In \Cref{subsect:capture-cycle-safe}, we study what the capture event implies when the captured set lies on a cycle. In \Cref{subsect:prob}, we analyze the probability of the capture event.

The connection to $\apxMWC$ is as follows. Informally, with an appropriate choice of parameters, if a cluster captures the node set of a minimum weight cycle, then the entire cycle is contained in the cluster and is not too far from the cluster center. This allows us to extract a cycle whose weight is within the desired approximation factor of $\OPT$.

\subsection{Decompositions via Random Shifts}\label{subsect:mpx}
We first recall the standard unweighted MPX low-diameter decomposition~\cite{miller2013parallel}, parameterized by a real number $\beta>0$. Each node $v$ independently samples a random shift $\delta_v$ from the exponential distribution $\exponential(\beta)$. Then each node $x$ joins the cluster of a node $v$ minimizing
\[
    \dist(v,x)-\delta_v.
\]
Equivalently, each node $v$ starts growing a BFS ball at time $-\delta_v$, and each node joins the cluster of the first growing ball that reaches it. 

\paragraph{Our MPX low-diameter decomposition.} In our setting, we need two modifications of this standard procedure. First, the input graph is weighted, so BFS is replaced by weighted SSSP. Second, as discussed in the technical overview, for the long-hop part of our algorithm for \Cref{thm:mainUB}, clusters are initiated only from a subset $S\subseteq V$ of skeleton nodes, rather than from all nodes. We therefore use the following weighted MPX low-diameter decomposition, parameterized by $(G,S,k,d)$.

Informally, $k$ corresponds to the parameter $k$ in \Cref{thm:mainUB,thm:parallelUB,thm:cliqueUB}, while $d$ should be viewed as a parameter for which $2d$ approximates $\OPT$. Both parameters, together with $|S|$, determine the choice of the exponential distribution parameter $\beta$.

Let $G=(V,E)$ be an undirected weighted graph, let $S\subseteq V$ be the set of candidate cluster centers, and let $k\geq 1$ and $d>0$ be real numbers. Define
\[
    \beta=\frac{\ln |S|}{kd}
\]
and
\[
\delta_{\max}=\left(k+1+\frac{k}{\ln |S|}\right)d.
\]
Each node $s\in S$ independently samples
\[
    \delta_s \sim \exponential(\beta),
\]
and then truncates the shift by setting
\[
    \hat{\delta}_s=\min\{\delta_s,\delta_{\max}\}.
\]

Each node $v\in V$ joins the cluster of a node $s\in S$ minimizing
\[
    \dist_G(s,v)-\hat{\delta}_s.
\]
The resulting clusters form a partition of $V$, where each cluster is associated with a center in $S$.

Equivalently, each center $s\in S$ starts growing a ball at time $-\hat{\delta}_s$, where traversing an edge $e$ takes time $w(e)$. Each node joins the cluster of the first growing ball that reaches it. The truncation of the shifts ensures that all nodes can agree on a global starting time for the entire process that is no later than the time at which any cluster starts growing.

The following calculation shows that, with good probability, no shift is truncated, and hence the clustering based on the truncated shifts $\{\hat{\delta}_s\}_{s\in S}$ coincides with the clustering based on the original shifts $\{{\delta}_s\}_{s\in S}$.

\begin{lemma}[Probability of no truncation]\label{lem:no-truncation-prob} We have \[ \Pr\left[ \forall s\in S,\; \hat{\delta}_s=\delta_s \right] = \Pr\left[ \forall s\in S,\; \delta_s\leq \delta_{\max} \right] \geq 1-\frac{1}{e}|S|^{-1/k}. \] \end{lemma}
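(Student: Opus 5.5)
The plan is a direct union bound over the centers in $S$, using the exponential tail. The first equality in the statement is immediate from the definition $\hat{\delta}_s=\min\{\delta_s,\delta_{\max}\}$: the truncated shift coincides with the original one exactly when $\delta_s\leq\delta_{\max}$. So it remains to lower bound the probability that no shift exceeds $\delta_{\max}$.

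For a single $s\in S$ with $\delta_s\sim\exponential(\beta)$, we have $\Pr[\delta_s>\delta_{\max}]=e^{-\beta\delta_{\max}}$. Substituting $\beta=\frac{\ln|S|}{kd}$ and $\delta_{\max}=\left(k+1+\frac{k}{\ln|S|}\right)d$ gives
\[
\beta\delta_{\max}=\frac{\ln|S|}{k}\left(k+1+\frac{k}{\ln|S|}\right)=\ln|S|+\frac{\ln|S|}{k}+1,
\]
so that
\[
\Pr[\delta_s>\delta_{\max}]=\frac{1}{e}\,|S|^{-1}\,|S|^{-1/k}.
\]

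A union bound over the $|S|$ centers then gives
\[
\Pr[\exists s\in S:\ \delta_s>\delta_{\max}]\leq |S|\cdot\frac{1}{e}|S|^{-1-1/k}=\frac{1}{e}|S|^{-1/k}.
\]
Taking the complement yields the claimed bound. Independence of the shifts is not even needed for this step.

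The only potential obstacle is the degenerate case $|S|=1$, where $\ln|S|=0$ and $\beta$, $\delta_{\max}$ are undefined. I would note that the lemma implicitly assumes $|S|\geq 2$, or else treat $|S|=1$ by convention: a single center absorbs all nodes, so truncation does not affect the clustering. Apart from this, the argument is a routine calculation.
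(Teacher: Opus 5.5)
Your proposal is correct and is exactly the paper's proof: compute $\Pr[\delta_s>\delta_{\max}]=e^{-\beta\delta_{\max}}=\frac{1}{e}|S|^{-(1+1/k)}$ and take a union bound over the $|S|$ centers. Your side remarks are also accurate, though the paper leaves them implicit: the union bound does not use independence, and $|S|\geq 2$ is needed for $\beta$ and $\delta_{\max}$ to be well defined.
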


\begin{proof}
By the definition of the exponential distribution, for each $s\in S$,
\[
    \Pr[\delta_s>t]=e^{-\beta t}.
\]
Substituting the values of $\beta$ and $\delta_{\max}$, we get
\[
\begin{aligned}
    \Pr[\delta_s>\delta_{\max}]
    &=
    e^{
        -\frac{\ln |S|}{kd}
        \cdot
        \left(k+1+\frac{k}{\ln |S|}\right)d
    } \\
    &=
    e^{
        -\frac{k+1}{k}\ln |S|-1
    } \\
    &=
    \frac{1}{e}|S|^{-(1+1/k)}.
\end{aligned}
\]
Taking a union bound over all $s\in S$ gives
\[
    \Pr[\exists s\in S:\delta_s>\delta_{\max}]
    \leq
    |S|\cdot \frac{1}{e}|S|^{-(1+1/k)}
    =
    \frac{1}{e}|S|^{-1/k}.
\]
Thus, with probability at least $1-\frac{1}{e}|S|^{-1/k}$, no shift is truncated, and hence $\hat{\delta}_s=\delta_s$ for all $s\in S$.
\end{proof}

\subsection{Capturing a Node Set}\label{subsect:capture}

For any node $v\in V$ and any subset $U \subseteq V$, we write
\[
    \dist(v,U)=\min_{u\in U}\dist(v,u).
\]
 
We now define the key event used in the analysis.

\begin{definition}[Capture event]
For a non-empty subset $R\subseteq S$, let $\capture(R)$ be the event that there exists a node $s\in S$ such that
\[
    \dist(s,R)-\hat{\delta}_s
    <
    \dist(s',R)-\hat{\delta}_{s'} - d
\]
for every $s'\in S\setminus\{s\}$. When this happens, we say that \underline{the cluster centered at $s$ captures $R$}.
\end{definition}

Equivalently, $\capture(R)$ says the following. After the shifts are truncated, the ball grown from some center $s$ reaches the set $R$ first, and every other growing ball reaches $R$ more than $d$ time units later. 

In particular, if $R$ itself has weak diameter at most $d$, then all nodes in $R$ must join the cluster centered at $s$. 
If $2d=\OPT$, then the node set $R$ of a minimum weight cycle $C^\star$ has weak diameter at most $d$, so $\capture(R)$ implies that the entire cycle $C^\star$ is within a cluster.

\begin{proposition}[Distance between cluster center and $R$]\label{prop:capture-close}
Suppose $\capture(R)$ occurs, and let $s\in S$ be the center of a cluster that captures $R$. Then
\[
    \dist(s,R)< \hat{\delta}_s - d \leq \delta_{\max}-d.
\]
\end{proposition}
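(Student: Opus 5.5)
The plan is to compare the capturing center $s$ against a competing center whose ball reaches $R$ at a time we can bound; the natural choice is a node of $R$ itself. The second inequality, $\hat{\delta}_s - d \le \delta_{\max}-d$, needs no work: the truncation $\hat{\delta}_s=\min\{\delta_s,\delta_{\max}\}$ gives $\hat{\delta}_s\le\delta_{\max}$ by definition. So the content is the first inequality, $\dist(s,R)<\hat{\delta}_s-d$.

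Since $R\subseteq S$ is non-empty, pick any $r\in R$; then $r\in S$ is a candidate center with $\dist(r,R)=0$. I split into two cases. If $r\neq s$, the definition of $\capture(R)$ applied with $s'=r$ gives
\[
\dist(s,R)-\hat{\delta}_s<\dist(r,R)-\hat{\delta}_r-d=-\hat{\delta}_r-d\le -d,
\]
using $\hat{\delta}_r\ge 0$ because exponential shifts are nonnegative. Rearranging yields $\dist(s,R)<\hat{\delta}_s-d$, as required.

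The step needing care is the case where every node of $R$ equals $s$, that is, $R=\{s\}$. Then $\dist(s,R)=0$, and the claim reduces to $\hat{\delta}_s>d$. Here I would use the capture condition against any other center $s'\in S\setminus\{s\}$ (assuming $|S|\ge 2$, which is implicit since $\beta=\ln|S|/(kd)$ must be positive):
\[
-\hat{\delta}_s<\dist(s',s)-\hat{\delta}_{s'}-d.
\]
This alone does not force $\hat{\delta}_s>d$, because $\dist(s',s)$ may be large. So this case cannot be closed from the stated hypotheses. The honest plan is therefore to note that in all uses of the proposition $R$ is the set of skeleton nodes on a cycle, or at least has $|R|\ge 2$. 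In that situation some $r\in R$ differs from $s$, and the first case applies verbatim. If the singleton case must be covered, I would state it as a harmless exclusion, assuming $R\neq\{s\}$. This subtlety is the only real obstacle; everything else is a direct unpacking of the definition.
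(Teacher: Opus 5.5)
Your argument is the paper's own: choose $r\in R\subseteq S$, apply the capture inequality with $s'=r$, and use $\dist(r,R)=0$, $\hat{\delta}_r\ge 0$, and $\hat{\delta}_s\le\delta_{\max}$. Your caveat is also correct and catches something the paper's proof glosses over, since it applies the inequality with $s'=r$ without ensuring $r\neq s$. For $R=\{s\}$ the statement genuinely fails, e.g.\ when $\hat{\delta}_s<d$ and every other center is farther than $d+\delta_{\max}$ from $s$. The exclusion is nevertheless harmless, because every application has $|R|\ge 2$: when $S=V$ we have $R=V(C^\star)$, and in the long-hop case the segment bound of \Cref{lem:skeleton-sampling} forces $|R|=\Omega(\log n)$ with high probability.
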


\begin{proof}
Since the cluster of $s$ captures $R$, by the definition of $\capture(R)$, we have
\[
    \dist(s,R)-\hat{\delta}_s
    <
    \dist(s',R)-\hat{\delta}_{s'}-d
\]
for every $s'\in S\setminus\{s\}$.

Choose any node $r\in R$. Since $R\subseteq S$, we may apply the above inequality with $s'=r$. Using $\dist(r,R)=0$, we get
\[
    \dist(s,R)-\hat{\delta}_s
    <
    \dist(r,R)-\hat{\delta}_r-d
    =
    -\hat{\delta}_r-d
    \leq
    -d.
\]
Hence
\[
    \hat{\delta}_s>\dist(s,R)+d.
\]
By definition of the truncation, $\hat{\delta}_s\leq \delta_{\max}$. Therefore,
\[
    \dist(s,R)< \hat{\delta}_s - d \leq \delta_{\max}-d. \qedhere
\]
\end{proof}

\paragraph{Connection to $\apxMWC$.}
\Cref{prop:capture-close} is the key link between the capture event and the $(k+1+\epsilon)$-$\apxMWC$ problem. Suppose $2d=\OPT$ and that the node set of a minimum weight cycle $C^\star$ is captured by some cluster. Then \Cref{prop:capture-close} implies that the cluster center $s$ is within distance at most $kd(1+O(1/\log n))$ of $C^\star$. Let $r$ be a node of $C^\star$ closest to $s$.

Consider the local SSSP tree rooted at $s$ that is computed during the MPX low-diameter decomposition. Since $C^\star$ is a cycle, at least one edge $e$ of $C^\star$ is not an edge of this tree. The cycle formed by $e$ together with the unique tree path between its endpoints has weight at most
\[
    2\dist(s,r)+w(C^\star)
    \leq
    2kd\left(1+O\left(\frac{1}{\log n}\right)\right)+2d
    =
    (k+1)\left(1+O\left(\frac{1}{\log n}\right)\right)\cdot \OPT.
\]
Thus, in this cluster, by examining all cycles consisting of exactly one non-tree edge with respect to this local SSSP tree, and tree edges otherwise, we can recover a $(k+1)(1+O(1/\log n))$-approximate minimum weight cycle.

\subsection{Safely Capturing a Cycle}\label{subsect:capture-cycle-safe}

For a node $v\in V$ and a center $s\in S$, define the \emph{arrival time} of the growing ball from $s$ at $v$ by
\[
    \arrival(s,v)=\dist(s,v)-\hat{\delta}_s.
\]
Thus, in the MPX low-diameter decomposition, each node $v$ joins the cluster centered at a node minimizing $\arrival(s,v)$.

Later, we will implement the MPX low-diameter decomposition using $(1+\epsilon)$-approximate SSSP rather than exact SSSP. To ensure that the cluster assignment remains unchanged for some nodes of interest, we introduce the following robustness notion.

\begin{definition}[$\epsilon$-safe nodes]\label{def:safety}
Let $\epsilon>0$. We say that a node $v$ is \emph{$\epsilon$-safe} for a center $s\in S$ if
\[
    \arrival(s,v)
    =
    \dist(s,v)-\hat{\delta}_s
    <
    \dist(s',v)-\hat{\delta}_{s'}-2\epsilon\delta_{\max}
    =
    \arrival(s',v)-2\epsilon\delta_{\max}
\]
for every $s'\in S\setminus\{s\}$.
\end{definition}

Equivalently, $v$ is $\epsilon$-safe for $s$ if the growing ball from $s$ reaches $v$ more than $2\epsilon\delta_{\max}$ time units before any other growing ball. Consequently, the assignment of $v$ to the cluster centered at $s$ is preserved even when the arrival times are computed only approximately, up to an additive error of at most $2\epsilon\delta_{\max}$.

For the long-hop part of the algorithm, we do not try to capture the entire minimum weight cycle. Instead, we sample skeleton nodes and aim to capture only the skeleton nodes lying on the cycle. We therefore need some additional terminology to describe how these skeleton nodes partition the cycle, and which parts of the cycle are certified to be $\epsilon$-safe once the capture event occurs.

\paragraph{Segments.}
Let $C$ be a cycle, and let
\[
    R=S\cap V(C).
\]
Assume that $R$ is non-empty. The nodes of $R$ partition $C$ into \emph{$R$-segments}: each $R$-segment is a maximal subpath of $C$ whose endpoints are in $R$ and whose internal nodes are not in $R$.

\paragraph{Visibility and antipodal segment.}
Suppose $R$ is captured by the cluster centered at $s\in S$. Let $q\in R$ be any node satisfying
\[
    \dist(s,q)=\dist(s,R).
\]
For an $R$-segment $P$, we say that $P$ is \emph{visible from $q$} if there exists an endpoint $r\in R$ of $P$ such that some $q$-$r$ subpath of $C$ containing $P$ has weight at most
\[
    d-2\epsilon\delta_{\max}.
\]
If $w(C)\leq 2(d-2\epsilon\delta_{\max})$, then at most one $R$-segment is not visible from $q$. If such a segment exists, we call it the \emph{antipodal segment}; otherwise, we say that there is no antipodal segment. See \Cref{fig:visibility-antipodal-minimal} for an illustration.

\begin{figure}[t]
\centering

\definecolor{Pone}{RGB}{31,119,180}
\definecolor{Ptwo}{RGB}{23,190,207}
\definecolor{Pthree}{RGB}{44,160,44}
\definecolor{Pfour}{RGB}{214,39,40}
\definecolor{Pfive}{RGB}{255,127,14}
\definecolor{Psix}{RGB}{148,103,189}

\begin{tikzpicture}[
    scale=1,
    every node/.style={font=\normalsize},
    edge/.style={gray!35, line width=0.6pt},
    P1/.style={Pone, line width=1.9pt},
    P2/.style={Ptwo, line width=1.9pt},
    P3/.style={Pthree, line width=1.9pt},
    P4/.style={Pfour, line width=2.5pt},
    P5/.style={Pfive, line width=1.9pt},
    P6/.style={Psix, line width=1.9pt},
    vtx/.style={circle, fill=black, inner sep=1.2pt},
    Rvtx/.style={circle, draw=black, fill=white, thick, inner sep=3.8pt},
    Qvtx/.style={circle, draw=black, fill=orange!20, very thick, inner sep=4.0pt},
    center/.style={circle, draw=black, fill=gray!15, inner sep=2pt},
    plabel/.style={fill=white, inner sep=1.2pt}
]

% -------------------------------------------------
% A concrete 20-node cycle C
% -------------------------------------------------
\def\r{3.65}
\def\rlab{4.45}
\def\pin{3}

\foreach \i in {0,...,19}{
    \pgfmathsetmacro{\ang}{-90 + 18*\i}
    \coordinate (v\i) at (\ang:\r);
    \coordinate (vlab\i) at (\ang:\rlab);
}

% Base cycle
\foreach \i in {0,...,18}{
    \pgfmathtruncatemacro{\j}{\i+1}
    \draw[edge] (v\i) -- (v\j);
}
\draw[edge] (v19) -- (v0);

% -------------------------------------------------
% R-segments
% R = {v_0, v_3, v_6, v_9, v_12, v_16}
% -------------------------------------------------

% P_1: v_0 -> v_3
\foreach \a/\b in {0/1,1/2,2/3}{
    \draw[P1] (v\a) -- (v\b);
}

% P_2: v_3 -> v_6
\foreach \a/\b in {3/4,4/5,5/6}{
    \draw[P2] (v\a) -- (v\b);
}

% P_3: v_6 -> v_9
\foreach \a/\b in {6/7,7/8,8/9}{
    \draw[P3] (v\a) -- (v\b);
}

% P_4: v_9 -> v_12, antipodal segment
\foreach \a/\b in {9/10,10/11,11/12}{
    \draw[P4] (v\a) -- (v\b);
}

% P_5: v_12 -> v_16
\foreach \a/\b in {12/13,13/14,14/15,15/16}{
    \draw[P5] (v\a) -- (v\b);
}

% P_6: v_16 -> v_0
\foreach \a/\b in {16/17,17/18,18/19,19/0}{
    \draw[P6] (v\a) -- (v\b);
}

% Vertices and labels
\foreach \i in {0,...,19}{
    \node[vtx] at (v\i) {};
    \node[font=\small] at (vlab\i) {$v_{\i}$};
}

% Highlight R vertices
\foreach \i in {3,6,9,12,16}{
    \node[Rvtx] at (v\i) {};
}
\node[Qvtx] at (v0) {};

% Cluster center and closest R-node
\node[center] (s) at (0,-5.25) {$s$};
\draw[dashed, gray!70] (s) -- (v0);
\node[font=\normalsize, below] at (0,-3.90) {$q=v_0$};

% Segment labels inside the cycle
\node[plabel, text=Pone]   at (-63:\pin) {$P_1$};
\node[plabel, text=Ptwo]   at ( -9:\pin) {$P_2$};
\node[plabel, text=Pthree] at ( 45:\pin) {$P_3$};

\node[plabel, text=Pfour] at (99:\pin) {$P_4$};
\node[text=Pfour, font=\small] at (99:2.3) {antipodal};

\node[plabel, text=Pfive] at (162:\pin) {$P_5$};
\node[plabel, text=Psix]  at (234:\pin) {$P_6$};

\end{tikzpicture}

\caption{
Consider a 20-node cycle $C$ with unit-weight edges: the set $R=\{v_0,v_3,v_6,v_9,v_{12},v_{16}\}$ partitions $C$ into six $R$-segments; with $q=v_0$ and $w(C)=2(d-2\epsilon\delta_{\max})=20$, the segment $P_4$ is the antipodal segment.
}
\label{fig:visibility-antipodal-minimal}
\end{figure}

\begin{proposition}[Safely capturing a cycle]\label{lem:capture-cycle-safe}
Let $C$ be a cycle such that
\[
    w(C)\leq 2(d-2\epsilon\delta_{\max})
    \qquad\text{and}\qquad
    R=S\cap V(C)\neq\emptyset.
\]
Suppose $R$ is captured by the cluster centered at $s\in S$. Then every node of $C$, except possibly for the internal nodes of the antipodal segment, is $\epsilon$-safe for $s$.
In particular, if $R=V(C)$, then every node of $C$ is $\epsilon$-safe for $s$.
\end{proposition}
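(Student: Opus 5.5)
The proof combines the capture inequality at the nearest node $q$ with the triangle inequality along the cycle. Fix $q\in R$ with $\dist(s,q)=\dist(s,R)$. Since $R$ is captured by the cluster centered at $s$, for every $s'\in S\setminus\{s\}$ we have
\[
\dist(s,q)-\hat\delta_s<\dist(s',R)-\hat\delta_{s'}-d .
\]
To show that a node $v$ of $C$ is $\epsilon$-safe for $s$, it suffices to find a node $r\in R$ such that
\[
\dist(s,v)\le \dist(s,q)+a \qquad\text{and}\qquad \dist(s',v)\ge \dist(s',r)-b
\]
for every $s'$, with $a+b\le d-2\epsilon\delta_{\max}$. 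Chaining these with the capture inequality and $\dist(s',R)\le\dist(s',r)$ gives
\[
\arrival(s,v)<\dist(s',r)-\hat\delta_{s'}-d+a\le \arrival(s',v)+b+a-d\le \arrival(s',v)-2\epsilon\delta_{\max},
\]
which is exactly $\epsilon$-safety.

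Now let $v$ be a node of $C$ that is not an internal node of the antipodal segment. Then $v$ lies on some $R$-segment $P$ that is visible from $q$. The case $v\in R$ is covered as well: such a $v$ is an endpoint of some segment, and each segment endpoint $r$ lies on a path of weight at most $w(C)/2$ from $q$. By visibility there is an endpoint $r$ of $P$ and a $q$-$r$ subpath $Q$ of $C$ that contains $P$ and has weight at most $d-2\epsilon\delta_{\max}$. Since $v\in P\subseteq Q$, write $Q=Q_1Q_2$, where $Q_1$ is the $q$-$v$ piece and $Q_2$ is the $v$-$r$ piece. Take $a=w(Q_1)$, using $\dist(s,v)\le\dist(s,q)+w(Q_1)$, and $b=w(Q_2)$, using $\dist(s',r)\le\dist(s',v)+w(Q_2)$. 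Then $a+b=w(Q)\le d-2\epsilon\delta_{\max}$, and the chain above yields safety.

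Two points need care. First, $s'$ may itself lie in $R$, but the capture inequality holds for all $s'\ne s$, so nothing changes. Second, we must justify that at most one segment is invisible, so that the antipodal segment is well defined. Every point of $C$ is within cycle distance $w(C)/2\le d-2\epsilon\delta_{\max}$ of $q$. A segment whose two endpoints lie on the same side of the antipode of $q$ can be reached in that direction via its far endpoint, so only the segment containing the antipode can fail to be visible. The "in particular" statement is immediate: if $R=V(C)$, then every segment is a single edge and has no internal nodes. The main obstacle is just getting the direction of the triangle inequality right for $s'$, which is why $b$ must be measured to the far endpoint $r$ of $Q$ rather than to $q$.
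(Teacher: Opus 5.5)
Your proof is correct and follows essentially the same route as the paper. Like the paper, you fix the nearest node $q$, split the visibility path $Q$ at $v$ into a $q$-$v$ piece and a $v$-$r$ piece, and chain the two triangle inequalities with the capture condition to obtain the $2\epsilon\delta_{\max}$ margin. Your extra justification that at most one segment can be invisible, and your direct treatment of nodes of $R$ via the shorter arc from $q$, fill in points the paper only asserts; they do not change the argument.
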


\begin{proof}
Let $q\in R$ satisfy $\dist(s,q)=\dist(s,R)$. Since $R$ is captured by the cluster centered at $s$, for every $s'\in S\setminus\{s\}$ we have
\begin{equation}
    \dist(s,R)-\hat{\delta}_s
    <
    \dist(s',R)-\hat{\delta}_{s'}-d.
    \label{eq:capture-condition-safe}
\end{equation}

We first show that every node on every $R$-segment visible from $q$ is $\epsilon$-safe for $s$. Let $P$ be an $R$-segment visible from $q$, and let $u$ be any node on $P$. By visibility, there is an endpoint $r\in R$ of $P$ and a $q$-$r$ subpath $Q$ of $C$ containing $P$ such that
\begin{equation}
    w(Q)\leq d-2\epsilon\delta_{\max}.
    \label{eq:visible-subpath-short}
\end{equation}
Since $u$ lies on $Q$, this implies
\begin{equation}
    \dist_Q(q,u)+\dist_Q(u,r)
    =
    w(Q)
    \leq
    d-2\epsilon\delta_{\max}.
    \label{eq:path-distance-sum}
\end{equation}
Moreover, since $\dist(s,q)=\dist(s,R)$, we have
\begin{equation}
    \dist(s,u)
    \leq
    \dist(s,q)+\dist_Q(q,u)
    =
    \dist(s,R)+\dist_Q(q,u).
    \label{eq:s-to-u-via-q}
\end{equation}
Also, for every $s'\in S\setminus\{s\}$,
\begin{equation}
    \dist(s',R)
    \leq
    \dist(s',r)
    \leq
    \dist(s',u)+\dist_Q(u,r).
    \label{eq:sp-to-R-via-u}
\end{equation}

Now fix any $s'\in S\setminus\{s\}$. Combining the above inequalities, we get
\[
\begin{aligned}
    \dist(s,u)-\hat{\delta}_s
    &\leq
    \dist(s,R)+\dist_Q(q,u)-\hat{\delta}_s
    && \text{by \eqref{eq:s-to-u-via-q}} \\
    &<
    \dist(s',R)-\hat{\delta}_{s'}-d+\dist_Q(q,u)
    && \text{by \eqref{eq:capture-condition-safe}} \\
    &\leq
    \dist(s',u)+\dist_Q(u,r)-\hat{\delta}_{s'}-d+\dist_Q(q,u)
    && \text{by \eqref{eq:sp-to-R-via-u}} \\
    &=
    \dist(s',u)-\hat{\delta}_{s'}
    -\left(d-\dist_Q(q,u)-\dist_Q(u,r)\right) \\
    &\leq
    \dist(s',u)-\hat{\delta}_{s'}-2\epsilon\delta_{\max}
    && \text{by \eqref{eq:path-distance-sum}}.
\end{aligned}
\]
Thus $u$ is $\epsilon$-safe for $s$.

By the definition of the antipodal segment, all nodes of $C$, except possibly for the internal nodes of the antipodal segment, lie on $R$-segments visible from $q$. Hence all such nodes are $\epsilon$-safe for $s$.

Finally, if $R=V(C)$, then every $R$-segment has no internal nodes. Thus even if an antipodal segment exists, it has no internal nodes to exclude, and every node of $C$ is $\epsilon$-safe for $s$.
\end{proof}

The full strength of \Cref{lem:capture-cycle-safe} is needed only for the long-hop case in the proof of \Cref{thm:mainUB}. In all other applications, we use the simpler case $R=V(C)$, where \Cref{lem:capture-cycle-safe} implies that every node of $C$ is $\epsilon$-safe for the center $s$.

\subsection{Probability Analysis}\label{subsect:prob}

In the following discussion, we show that the capture event occurs with probability $\Omega\left(|S|^{-1/k}\right)$. Consequently, $\tilde{O}\left(|S|^{1/k}\right)$ independent repetitions suffice to ensure that the event occurs in at least one decomposition with high probability.

The following lemma, a simple consequence of the memoryless property of the exponential distribution, is a key ingredient in analyzing the probability of the capture event. Variants of this lemma appear already in the original MPX paper~\cite{miller2013parallel} and have since become standard tools in analyses of MPX low-diameter decompositions. For example, it follows as a special case of \cite[Lemma 3.6]{haeupler2016faster} by setting $k=2$. We therefore omit the proof.

\begin{lemma}[Gap between the 1st and 2nd arrivals~\cite{miller2013parallel}]
\label{lem:mpx-gap}
Let $S$ be a finite non-empty set, and let $\{a_s\}_{s\in S}$ be fixed real numbers.
For each $s\in S$, let $\delta_s\sim\exponential(\beta)$ be sampled independently.
Then, for every $d\geq 0$,
\[
    \Pr\left[
        \exists s\in S\ \text{such that}\ 
        a_s-\delta_s
        <
        a_{s'}-\delta_{s'}-d
        \text{ for every } s'\in S\setminus\{s\}
    \right]
    \geq
    e^{-\beta d}.
\]
\end{lemma}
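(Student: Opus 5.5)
We prove \Cref{lem:mpx-gap} by conditioning on the identity and value of the winner and using the memoryless property of the exponential distribution. Write $X_s = \delta_s - a_s$, so that $a_s-\delta_s = -X_s$, and the event in question is that the largest value among $\{X_s\}_{s\in S}$ exceeds the second largest by more than $d$. If $|S|=1$, the event holds trivially with probability $1\geq e^{-\beta d}$, so we assume $|S|\geq 2$. Since the $\delta_s$ are continuous, ties occur with probability zero, and the maximizer $s^\star=\arg\max_s X_s$ is almost surely unique.

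The key step computes, for each fixed $s\in S$, the probability of the event $A_s$ that $X_s > \max_{s'\neq s} X_{s'} + d$, and compares it with the probability of the event $B_s$ that $X_s > \max_{s'\neq s} X_{s'}$. We condition on all $\delta_{s'}$ with $s'\neq s$, and let $M=\max_{s'\neq s}X_{s'}$. Then $B_s$ is the event $\delta_s > M+a_s$ and $A_s$ is the event $\delta_s > M+a_s+d$. We split into two cases according to the threshold $t=M+a_s$.
\begin{itemize}
\item If $t\geq 0$, memorylessness gives $\Pr[\delta_s>t+d]=e^{-\beta d}\Pr[\delta_s>t]$.
\item If $t<0$, then $\Pr[\delta_s>t]=1$, while $\Pr[\delta_s>t+d]\geq e^{-\beta d}$ because $t+d<d$.
\end{itemize}
In both cases we obtain the conditional bound $\Pr[A_s\mid M]\geq e^{-\beta d}\Pr[B_s\mid M]$. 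Taking expectations over $M$ then yields $\Pr[A_s]\geq e^{-\beta d}\Pr[B_s]$.

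To finish, we note that the events $A_s$ are disjoint for distinct $s$, since at most one $s$ can be the strict maximizer. Their union is exactly the event in the lemma. The events $B_s$ partition the probability space up to a null set, so $\sum_s\Pr[B_s]=1$. Summing the per-$s$ inequality over $s\in S$ gives
\[
\Pr\left[\bigcup_s A_s\right]=\sum_s\Pr[A_s]\geq e^{-\beta d}\sum_s \Pr[B_s]=e^{-\beta d}.
\]

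The only delicate point is the case where the threshold $M+a_s$ is negative, because there memorylessness does not apply verbatim; the case analysis above handles it. Measurability and the disjointness argument are routine.
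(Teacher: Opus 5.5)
Your proof is correct. The key steps all check out:
\begin{itemize}
\item For every real $u$ and every $d\geq 0$, $\Pr[\delta_s>u+d]\geq e^{-\beta d}\Pr[\delta_s>u]$, since $\max\{u+d,0\}\leq\max\{u,0\}+d$. This single inequality covers both of your cases.
\item The events $A_s$ are disjoint, and their union is exactly the event in the lemma.
\item $\sum_{s\in S}\Pr[B_s]=1$, because ties occur with probability zero.
\end{itemize}

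The paper gives no proof of this lemma to compare against. It omits the proof, pointing to the original MPX gap lemma and to Lemma~3.6 of Haeupler and Wajc with $k=2$. That lemma bounds the probability that the largest and the $k$-th largest shifted values are close to each other.

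General-$k$ statements of that kind are typically proved by conditioning on an order statistic, namely the value of the $k$-th arrival, and applying memorylessness to the excesses above it. Your route differs: you condition only on the independent shifts of the other centers, then sum over the candidate winners.

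Your argument gives a short, self-contained proof that never conditions on a measure-zero order-statistic event. It also handles arbitrary real offsets $a_s$ explicitly. Your case $t<0$, where memorylessness does not apply verbatim, genuinely arises in the paper's application: with $a_s=\dist(s,R)$, the threshold $M+a_s$ is negative whenever, for example, $s\in R$ and every other shift $\delta_{s'}$ falls below $a_{s'}$.

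The citation route buys brevity and the more general $k$-th-arrival bound, which this paper does not need.
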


We are ready to lower bound the probability of the capture event.

\begin{proposition}[Probability of the capture event]\label{lem:capture-probability}
For every non-empty subset $R\subseteq S$,
\[
    \Pr[\capture(R)]
    \geq
    \left(1-\frac{1}{e}\right)|S|^{-1/k}.
\]
\end{proposition}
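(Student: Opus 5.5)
Let $\mathcal{T}$ denote the event that no shift is truncated, i.e.\ $\hat{\delta}_s=\delta_s$ for all $s\in S$. The plan is to lower bound $\Pr[\capture(R)\cap\mathcal{T}]$ by combining \Cref{lem:mpx-gap} with \Cref{lem:no-truncation-prob}. Set $a_s=\dist(s,R)$ for each $s\in S$; these are fixed reals independent of the randomness. Let $\mathcal{G}$ be the event that some $s\in S$ satisfies $a_s-\delta_s<a_{s'}-\delta_{s'}-d$ for all $s'\neq s$. This event is stated in terms of the untruncated shifts. Applying \Cref{lem:mpx-gap} with $\beta=\ln|S|/(kd)$ gives
\[
\Pr[\mathcal{G}]\geq e^{-\beta d}=e^{-\ln|S|/k}=|S|^{-1/k}.
\]
On $\mathcal{G}\cap\mathcal{T}$ the untruncated and truncated shifts coincide, so $\capture(R)$ holds. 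Hence $\Pr[\capture(R)]\geq \Pr[\mathcal{G}\cap\mathcal{T}]$.

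The remaining step, which is the main obstacle, is to bound $\Pr[\mathcal{G}\cap\mathcal{T}]$ multiplicatively rather than through a lossy union bound. A naive estimate $\Pr[\mathcal{G}]-\Pr[\overline{\mathcal{T}}]\geq |S|^{-1/k}-\frac1e|S|^{-1/k}=(1-\frac1e)|S|^{-1/k}$ already gives exactly the claimed bound. This works because \Cref{lem:no-truncation-prob} bounds the failure probability by $\frac1e|S|^{-1/k}$, which is the same order as $\Pr[\mathcal{G}]$. So I would simply write
\[
\Pr[\capture(R)]\geq\Pr[\mathcal{G}\cap\mathcal{T}]\geq \Pr[\mathcal{G}]-\Pr[\overline{\mathcal{T}}]\geq |S|^{-1/k}-\tfrac1e|S|^{-1/k}.
\]

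Two small points need checking. First, $\capture(R)$ requires strict inequalities, and \Cref{lem:mpx-gap} provides strict inequalities as stated. Second, the degenerate case $|S|=1$ gives $\beta=0$, which is not a valid exponential parameter. I would treat it separately: then $R=S=\{s\}$ and the condition ``for every $s'\in S\setminus\{s\}$'' is vacuous. So $\capture(R)$ holds with probability $1\geq(1-\frac1e)$, provided one fixes any convention for $\delta_s$ in this case.
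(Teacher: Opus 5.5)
Your proof is correct and follows the same argument as the paper. You apply \Cref{lem:mpx-gap} with $a_s=\dist(s,R)$ to get $|S|^{-1/k}$, subtract the truncation failure probability $\frac{1}{e}|S|^{-1/k}$ from \Cref{lem:no-truncation-prob}, and note that the two events together imply $\capture(R)$. Calling the subtraction step an ``obstacle'' is unnecessary, since the plain union bound is exactly what is needed. Your separate treatment of $|S|=1$, where $\beta$ and $\delta_{\max}$ are undefined, covers a degenerate case the paper leaves implicit.
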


\begin{proof}
By setting $a_s=\dist(s,R)$ for each $s\in S$, \Cref{lem:mpx-gap} implies that, with probability at least
\[
    e^{-\beta d}=|S|^{-1/k},
\]
there exists a node $s\in S$ such that
\[
    \dist(s,R)-\delta_s
    <
     \dist(s',R)-\delta_{s'} - d
\]
for every $s'\in S\setminus\{s\}$.

Let $\mathcal{E}_{\mathrm{gap}}(R)$ denote this event, and let $\mathcal{E}_{\mathrm{no\text{-}trunc}}$ denote the event that no shift is truncated, i.e.,
\[
    \delta_s\leq \delta_{\max}
\]
for every $s\in S$. By \Cref{lem:no-truncation-prob},
\[
    \Pr[\overline{\mathcal{E}_{\mathrm{no\text{-}trunc}}}]
    \leq
    \frac{1}{e}|S|^{-1/k}.
\]
Therefore,
\[
\begin{aligned}
    \Pr[\mathcal{E}_{\mathrm{gap}}(R)\cap \mathcal{E}_{\mathrm{no\text{-}trunc}}]
    &\geq
    \Pr[\mathcal{E}_{\mathrm{gap}}(R)]
    -
    \Pr[\overline{\mathcal{E}_{\mathrm{no\text{-}trunc}}}] \\
    &\geq
    |S|^{-1/k}
    -
    \frac{1}{e}|S|^{-1/k} \\
    &=
    \left(1-\frac{1}{e}\right)|S|^{-1/k}.
\end{aligned}
\]

On the event $\mathcal{E}_{\mathrm{no\text{-}trunc}}$, we have
\[
    \hat{\delta}_s=\delta_s
\]
for every $s\in S$. Hence, if $\mathcal{E}_{\mathrm{gap}}(R)$ also occurs, then there exists $s\in S$ such that
\[
    \dist(s,R)-\hat{\delta}_s
    <
    -d+\dist(s',R)-\hat{\delta}_{s'}
\]
for every $s'\in S\setminus\{s\}$. This is exactly the event $\capture(R)$. Therefore,
\[
    \Pr[\capture(R)]
    \geq
    \Pr[\mathcal{E}_{\mathrm{gap}}(R)\cap \mathcal{E}_{\mathrm{no\text{-}trunc}}]
    \geq
    \left(1-\frac{1}{e}\right)|S|^{-1/k}. \qedhere
\]
\end{proof}

\section{Implementation via Approximate SSSP Computation}

We next explain how to implement the MPX low-diameter decomposition using approximate SSSP. Besides distance estimates, we also need the approximate SSSP algorithm to output a tree, which will be used both to define the clusters and to support cycle detection within each cluster. In \Cref{subsect:treelike}, we review the tree-like version of approximate SSSP that provides this additional structure. In \Cref{subsect:decomposition}, we use it to define the $(1+\epsilon)$-approximate MPX low-diameter decomposition $\LDD(G,S,k,d,\epsilon)$ and prove that $\epsilon$-safe nodes are assigned to the intended clusters. Finally, in \Cref{subsect:cycle}, we show how the resulting tree is used in the cycle detection step of our $\apxMWC$ algorithm.

\subsection{Tree-Like Distance Approximation}\label{subsect:treelike}

We begin by defining tree-like $(1+\epsilon)$-approximate SSSP.

\begin{definition}[$(1+\epsilon)$-approximate SSSP]
Let $G=(V,E,w)$ be a connected undirected graph with positive edge weights, let $x\in V$ be a source node, and let $\epsilon>0$. A function $\tilde{\dist}(x,\cdot):V\to\mathbb{R}_{\geq 0}$ is a $(1+\epsilon)$-approximate SSSP distance estimate from $x$ if, for every node $v\in V$,
\[
    \dist_G(x,v)
    \leq
    \tilde{\dist}(x,v)
    \leq
    (1+\epsilon)\dist_G(x,v).
\]
\end{definition}

\begin{definition}[Tree-like distance estimate]
Let $G=(V,E,w)$ be a connected undirected graph with positive edge weights, and let $x\in V$ be a source node. A distance estimate $\tilde{\dist}(x,\cdot):V\to\mathbb{R}_{\geq 0}$ is tree-like with respect to $x$ if $\tilde{\dist}(x,x)=0$ and, for every node $v\neq x$, there exists an edge $\{u,v\}\in E$ such that
\[
    \tilde{\dist}(x,u)
    \leq
    \tilde{\dist}(x,v)-w(u,v).
\]
\end{definition}

The tree-like condition allows us to recover a spanning tree $T$ rooted at the source node $x$: for each node $v\neq x$, choose one neighbor $u$ satisfying $\tilde{\dist}(x,u)
    \leq
    \tilde{\dist}(x,v)-w(u,v)$,
and make $u$ the parent of $v$. Since edge weights are positive, the estimates strictly decrease along parent pointers, so this process defines a spanning tree rooted at $x$. Moreover, if $\tilde{\dist}(x,\cdot)$ is also a $(1+\epsilon)$-approximate SSSP distance estimate, then the tree distance from $x$ to every node $v$ is at most $\tilde{\dist}(x,v)$, and hence at most $(1+\epsilon)\dist_G(x,v)$. Thus $T$ is a $(1+\epsilon)$-approximate SSSP tree.

We use the transformation of \citet{RozhonHMGZ23}, which shows how to convert any $(1+\epsilon)$-approximate SSSP algorithm into one with the tree-like property using only polylogarithmically many calls to the original algorithm. Since our bounds suppress polylogarithmic factors, we may essentially assume tree-likeness for free.

The tree-like property is important because it provides an approximate shortest-path tree, not merely approximate distance labels. This allows us to use the same cycle extraction argument as in the exact MPX low-diameter decomposition; see \Cref{subsect:capture}.

\subsection{The Decomposition \texorpdfstring{$\LDD(G,S,k,d,\epsilon)$}{LDD(G,S,k,d,epsilon)}}\label{subsect:decomposition}
We now define the approximate implementation of the  MPX low-diameter decomposition using tree-like $(1+\epsilon)$-approximate SSSP. Recall that in the  decomposition algorithm, each center $s\in S$ starts growing a ball at time $-\hat{\delta}_s$. Equivalently, the arrival time of $s$ at a node $v$ is
\[
    \arrival(s,v) = \dist_G(s,v)-\hat{\delta}_s.
\]
To implement this using SSSP with positive edge weights, we add a virtual super source $x$ and connect it to each center $s\in S$ by a virtual edge of weight
\[
    2\delta_{\max}-\hat{\delta}_s.
\]
Since $\hat{\delta}_s\in[0,\delta_{\max}]$, these virtual edges have weights in $[\delta_{\max},2\delta_{\max}]$. Moreover, minimizing
\[
    2\delta_{\max}-\hat{\delta}_s+\dist_G(s,v)
\]
over $s\in S$ is equivalent to minimizing the original arrival time $\arrival(s,v) = \dist_G(s,v)-\hat{\delta}_s$.

Let $G^+$ denote the augmented graph obtained from $G$ by adding the virtual super source $x$ and the virtual edges $\{x,s\}$ of weight $2\delta_{\max}-\hat{\delta}_s$ for all $s\in S$. We write
\[
    \LDD(G,S,k,d,\epsilon)
\]
for the decomposition obtained by running tree-like $(1+\epsilon)$-approximate SSSP from $x$ in $G^+$. 

specifically, each node $v\in V$ joins the cluster of the center $s\in S$ that is the second node on the $x$-$v$ path in the resulting spanning tree $T$ of $G^+$. For each center $s\in S$ whose cluster is non-empty, let $T_s$ denote the subtree of $T$ rooted at $s$. We call $T_s$ the \emph{local approximate SSSP tree} of the cluster centered at $s$. Observe that $T_s$ spans exactly the nodes assigned to the cluster centered at $s$.

\paragraph{Remark on edge weights and the safety margin.}
In this work, as in many distributed graph algorithms, we assume that edge weights are polynomially bounded positive integers. This ensures that each edge weight can be represented using $O(\log n)$ bits.

This assumption motivates our choice of assigning weight $2\delta_{\max}-\hat{\delta}_s$, rather than the more natural $\delta_{\max}-\hat{\delta}_s$, to the edge between the virtual super source and each center $s\in S$. The latter choice could produce a zero-weight edge, whereas the former guarantees that every virtual edge has weight in the range $[\delta_{\max},2\delta_{\max}]$. Since this simply adds the same offset $\delta_{\max}$ to every shifted distance, the resulting decomposition is unchanged.

This choice is also compatible with the margin $2\epsilon\delta_{\max}$ in the definition of $\epsilon$-safety (\Cref{def:safety}). Indeed, for every node $v$ to which we apply the safety guarantee, (as we will later see) we already have
\[
    \dist_G(s,v)-\hat{\delta}_s\leq 0.
\]
Hence the exact distance from the virtual super source to $v$ through $s$ is at most
\[
    2\delta_{\max}-\hat{\delta}_s+\dist_G(s,v)
    \leq
    2\delta_{\max}.
\]
Therefore, a $(1+\epsilon)$-approximate SSSP computation incurs an additive error of at most $2\epsilon\delta_{\max}$ on these relevant distances, exactly matching the safety margin.

Finally, the sampled shifts are real numbers, so the virtual edge weights may also be real rather than integral. To obtain polynomially bounded integer weights, we may round all edge weights to a sufficiently fine granularity and then scale them. By choosing a sufficiently small inverse-polynomial granularity, every relevant distance changes by at most an additive $1/\poly(n)$ term, which can be absorbed into the $(1+\epsilon)$ approximation factor.

The discussion above suggests that $\epsilon$-safety should be enough to preserve the cluster assignment under the approximate SSSP implementation. The next lemma makes this formal.

\begin{lemma}[Safety preserves cluster assignment]\label{lem:safe-preserves-cluster}
Let $v\in V$ be $\epsilon$-safe for a center $s\in S$. Suppose 
\[
    \dist_G(s,v)-\hat{\delta}_s \leq 0.
\]
Then $v$ is assigned to the cluster centered at $s$ in $\LDD(G,S,k,d,\epsilon)$, regardless of the choice of the tree-like $(1+\epsilon)$-approximate SSSP tree.
\end{lemma}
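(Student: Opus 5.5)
Let $\tilde{\dist}(x,\cdot)$ be the tree-like $(1+\epsilon)$-approximate distance estimate computed from the super source $x$ in $G^+$, and let $T$ be the resulting spanning tree. Write $D^+(s',v)=2\delta_{\max}-\hat{\delta}_{s'}+\dist_G(s',v)$ for the exact length of the shortest $x$-$v$ path in $G^+$ whose second node is $s'$. Then $\dist_{G^+}(x,v)=\min_{s'\in S}D^+(s',v)$, and $D^+(s',v)-D^+(s,v)=\arrival(s',v)-\arrival(s,v)$. The plan is to compare the tree path to $v$ with this exact optimum and show that its second node must be $s$.

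First, I will bound the tree path length from above. As noted after the definition of tree-likeness, the tree distance from $x$ to $v$ in $T$ is at most $\tilde{\dist}(x,v)\le(1+\epsilon)\dist_{G^+}(x,v)$. By hypothesis $\dist_G(s,v)-\hat{\delta}_s\le 0$, so $\dist_{G^+}(x,v)\le D^+(s,v)\le 2\delta_{\max}$. Hence
\[
w_T(x\leadsto v)\le \dist_{G^+}(x,v)+\epsilon\,\dist_{G^+}(x,v)\le D^+(s,v)+2\epsilon\delta_{\max}.
\]

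Second, I will argue by contradiction. Suppose the second node on the tree path is some $s'\ne s$. The tree path consists of the virtual edge $\{x,s'\}$ followed by a path in $G$ from $s'$ to $v$. I must check that it never returns to $x$; this holds because $x$ is the root of $T$ and $T$ is a tree, so the path is simple. Therefore its weight is at least $2\delta_{\max}-\hat{\delta}_{s'}+\dist_G(s',v)=D^+(s',v)$. By $\epsilon$-safety,
\[
D^+(s',v)=D^+(s,v)+\arrival(s',v)-\arrival(s,v)>D^+(s,v)+2\epsilon\delta_{\max}.
\]
This contradicts the upper bound from the first step, so the second node is $s$ and $v$ joins the cluster of $s$. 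The argument only used the generic properties of $T$, so it holds for every valid tree-like $(1+\epsilon)$-approximate SSSP tree.

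The main subtlety I expect is the first step: the tree distance must be bounded by $\tilde{\dist}(x,v)$. This follows by telescoping the tree-like inequality along parent pointers, using $\tilde{\dist}(x,x)=0$. The other place needing care is converting the multiplicative error into an additive $2\epsilon\delta_{\max}$, which uses $\dist_{G^+}(x,v)\le 2\delta_{\max}$ and hence relies on the hypothesis $\arrival(s,v)\le 0$. The strict inequality in the safety definition is what makes the contradiction strict, so no tie-breaking issue arises.
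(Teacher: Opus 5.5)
Your proposal is correct and matches the paper's proof. Both arguments bound the tree path weight by $(1+\epsilon)\dist_{G^+}(x,v)\le D^+(s,v)+2\epsilon\delta_{\max}$, using the hypothesis $\arrival(s,v)\le 0$, and then note that a tree path whose second node is some $s'\neq s$ has weight at least $D^+(s',v)$, which $\epsilon$-safety places strictly above that bound; the paper additionally records that $\dist_{G^+}(x,v)=D^+(s,v)$ exactly, but your inequality $\dist_{G^+}(x,v)\le D^+(s,v)$ is all that is needed.
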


\begin{proof}
Let $G^+$ be the augmented graph with virtual super source $x$, where the edge $\{x,s'\}$ has weight $2\delta_{\max}-\hat{\delta}_{s'}$ for each $s'\in S$. For each center $s'\in S$, write
\[
    D_{s'}(v)
    =
    2\delta_{\max}-\hat{\delta}_{s'}+\dist_G(s',v).
\]
This is the weight of the shortest $x$-$v$ path in $G^+$ that goes through $s'$.

Since $v$ is $\epsilon$-safe for $s$, for every $s'\neq s$ we have
\begin{equation}
     D_s(v)<D_{s'}(v)-2\epsilon\delta_{\max}. \label{eq:safe}  
\end{equation}

Thus, among all centers, the exact shortest $x$-$v$ path in $G^+$ goes through $s$, with a margin of more than $2\epsilon\delta_{\max}$ over every other center. In particular,
\[
    \dist_{G^+}(x,v)=D_s(v).
\]
Moreover, by the assumption $\dist_G(s,v)-\hat{\delta}_s\leq 0$,
\[
    D_s(v)
    =
    2\delta_{\max}+\dist_G(s,v)-\hat{\delta}_s
    \leq
    2\delta_{\max}.
\]

Now consider the tree $T$ produced by the tree-like $(1+\epsilon)$-approximate SSSP computation from $x$. Since the exact shortest $x$-$v$ distance in $G^+$ is $D_s(v)$, the approximation guarantee implies that the weight of the $x$-$v$ path in $T$ is at most $(1+\epsilon)D_s(v)$. Moreover, as shown above, $D_s(v)\leq 2\delta_{\max}$. Therefore, \[ \dist_T(x,v) \leq (1+\epsilon)D_s(v) \leq D_s(v)+2\epsilon\delta_{\max}. \]

Suppose, for contradiction, that $v$ is assigned to a cluster centered at some $s'\neq s$. Then the $x$-$v$ path in $T$ goes through $s'$, so its weight is at least $D_{s'}(v)$. Therefore,
\[
    D_{s'}(v)
    \leq
    D_s(v)+2\epsilon\delta_{\max},
\]
contradicting \eqref{eq:safe}. 
Hence the $x$-$v$ path in $T$ must go through $s$, and so $v$ is assigned to the cluster centered at $s$.
\end{proof}

\subsection{Cycle Detection Inside Clusters}\label{subsect:cycle}

For every node $v\in V$ assigned to the cluster centered at $s\in S$, define
\[
    \tilde{\dist}(s,v)
    =
    \tilde{\dist}(x,v)-\tilde{\dist}(x,s).
\]
We use this quantity as the approximate distance from $s$ to $v$ within its cluster. By the tree-like property, if $p(z)$ is the parent of a node $z$ in the tree $T$, then
\[
    \tilde{\dist}(x,p(z))
    \leq
    \tilde{\dist}(x,z)-w(p(z),z).
\]
Therefore, along the unique $s$-$v$ path in $T_s$, the values $\tilde{\dist}(x,\cdot)$ decrease toward $s$ by at least the corresponding edge weights. Telescoping over this path gives
\[
    \dist_{T_s}(s,v)
    \leq
    \tilde{\dist}(x,v)-\tilde{\dist}(x,s)
    =
    \tilde{\dist}(s,v).
\]
In particular, $\tilde{\dist}(s,v)\geq \dist_{T_s}(s,v) \geq \dist_G(s,v)$.

\paragraph{Cycle detection step.}
We run the \emph{cycle detection step} over all clusters of $\LDD(G,S,k,d,\epsilon)$. For each cluster centered at $s$, it enumerates all edges $e=\{u,v\}$ such that both endpoints $u$ and $v$ lie in the cluster, but $e$ is not an edge of the local approximate SSSP tree $T_s$. For each such edge $e$, the algorithm forms the cycle consisting of $e$ together with the unique $u$-$v$ path in $T_s$. It assigns this cycle the \emph{weight estimate}
\[
    \tilde{\dist}(s,u)+\tilde{\dist}(s,v)+w(u,v).
\]

The estimate assigned to each constructed cycle is an upper bound on its true weight. Indeed, the unique $u$-$v$ path in $T_s$ is contained in the union of the $s$-$u$ path and the $s$-$v$ path in $T_s$, and hence has weight at most
\[
    \dist_{T_s}(s,u)+\dist_{T_s}(s,v)
    \leq
    \tilde{\dist}(s,u)+\tilde{\dist}(s,v).
\]
Therefore, the true weight of the cycle formed by $e=\{u,v\}$ and the $u$-$v$ tree path is indeed at most the weight estimate $\tilde{\dist}(s,u)+\tilde{\dist}(s,v)+w(u,v)$. The following result gives a sufficient condition under which the cycle detection step returns a cycle of small weight estimate.

\begin{proposition}[A non-tree edge $\rightarrow$ a cycle of small weight]\label{lem:safe-nontree-edge-gives-cycle}
Run $\LDD(G,S,k,d,\epsilon)$, and let $C$ be a cycle such that
\[
    w(C)\leq 2(d-2\epsilon\delta_{\max})
    \qquad\text{and}\qquad
    R=S\cap V(C)\neq\emptyset.
\]
Suppose $R$ is captured by the cluster centered at $s\in S$. Let $e=\{u,v\}$ be an edge of $C$ such that both $u$ and $v$ are $\epsilon$-safe for $s$. If $e$ is not an edge of the local approximate SSSP tree $T_s$, then the cycle detection step finds a cycle with weight estimate at most $2\delta_{\max}$.
\end{proposition}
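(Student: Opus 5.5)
The plan is to show that both endpoints of $e$ lie in the cluster of $s$, so that $e$ is enumerated by the cycle detection step. We then bound the weight estimate $\tilde{\dist}(s,u)+\tilde{\dist}(s,v)+w(u,v)$ by comparing each approximate distance with the exact shifted distance $D_s(\cdot)=2\delta_{\max}-\hat{\delta}_s+\dist_G(s,\cdot)$ from the proof of \Cref{lem:safe-preserves-cluster}.

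\textbf{Step 1: exact distances.} Let $q\in R$ be a node with $\dist(s,q)=\dist(s,R)$. By \Cref{prop:capture-close}, $\dist(s,R)<\hat{\delta}_s-d\leq\delta_{\max}-d$. Removing $e$ from $C$ leaves a path through $q$. Split it at $q$ into an arc of weight $a$ from $q$ to $u$ and an arc of weight $b$ from $q$ to $v$, so that $a+w(u,v)+b=w(C)$. Then
\[
\dist(s,u)\le \dist(s,R)+a
\qquad\text{and}\qquad
\dist(s,v)\le \dist(s,R)+b .
\]
For the membership step we also need the symmetric per-node bound. Going around $C$ the shorter way gives $\dist(s,u)\le \dist(s,R)+w(C)/2<\hat{\delta}_s-d+d=\hat{\delta}_s$, so $\dist_G(s,u)-\hat{\delta}_s\le 0$, and likewise for $v$. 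Together with the assumed $\epsilon$-safety of $u$ and $v$, \Cref{lem:safe-preserves-cluster} puts both endpoints in the cluster of $s$. Hence $e$, which is not in $T_s$ by assumption, is examined by the cycle detection step.

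\textbf{Step 2: approximate distances.} I will show $\tilde{\dist}(s,y)\le \dist_G(s,y)+2\epsilon\delta_{\max}$ for $y\in\{u,v\}$.

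\emph{Upper bound on $\tilde{\dist}(x,y)$.} Since $s$ is the second node on the tree path to $y$, we have $\tilde{\dist}(x,y)\le(1+\epsilon)\dist_{G^+}(x,y)$. Safety gives $\dist_{G^+}(x,y)=D_s(y)$, and Step 1 gives $D_s(y)\le 2\delta_{\max}$. So $\tilde{\dist}(x,y)\le D_s(y)+2\epsilon\delta_{\max}$.

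\emph{Lower bound on $\tilde{\dist}(x,s)$.} Here I will use a slightly subtle point. Since $s$ is the second node on $x$-$y$ tree paths, the parent of $s$ in $T$ is $x$. The tree-like property then gives $0=\tilde{\dist}(x,x)\le\tilde{\dist}(x,s)-(2\delta_{\max}-\hat{\delta}_s)$. This does not follow from $\tilde{\dist}(x,s)\ge\dist_{G^+}(x,s)$ alone, since that distance could be smaller than $2\delta_{\max}-\hat{\delta}_s$.

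Subtracting the two bounds gives $\tilde{\dist}(s,y)\le \dist_G(s,y)+2\epsilon\delta_{\max}$, as claimed.

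\textbf{Step 3: sum.} Combining the bounds,
\[
\tilde{\dist}(s,u)+\tilde{\dist}(s,v)+w(u,v)\le 2\dist(s,R)+w(C)+4\epsilon\delta_{\max}.
\]
Using $2\dist(s,R)<2(\delta_{\max}-d)$ and $w(C)\le 2d-4\epsilon\delta_{\max}$, the right-hand side is less than $2\delta_{\max}$.

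I expect the main obstacle to be Step 2, specifically the lower bound on $\tilde{\dist}(x,s)$. The rest is bookkeeping with \Cref{prop:capture-close}, the arc decomposition, and the $2\epsilon\delta_{\max}$ safety margin, which is tailored exactly to absorb the two approximation errors.
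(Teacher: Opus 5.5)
Your proof is correct and follows the paper's argument essentially step for step. You use \Cref{prop:capture-close} for the center's proximity to $R$, \Cref{lem:safe-preserves-cluster} to put $u$ and $v$ in the cluster of $s$, a per-endpoint additive error of $2\epsilon\delta_{\max}$, and the bound $\dist(s,u)+\dist(s,v)+w(u,v)\le 2\dist(s,q)+w(C)$, which your arc decomposition justifies explicitly. Your derivation of $\tilde{\dist}(x,s)\ge 2\delta_{\max}-\hat{\delta}_s$ from the tree-like property at the parent edge $\{x,s\}$ is in fact more careful than the paper, which simply asserts the equality $\tilde{\dist}(x,s)=2\delta_{\max}-\hat{\delta}_s$ at that step.
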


\begin{proof}
Let $q\in R$ be a node closest to $s$, so
\begin{equation}\label{eq:q-closest}
    \dist_G(s,q)=\dist_G(s,R).
\end{equation}
By \Cref{prop:capture-close},
\begin{equation}\label{eq:capture-close-lower}
    \hat{\delta}_s>\dist_G(s,R)+d
\end{equation}
and
\begin{equation}\label{eq:capture-close-upper}
    \dist_G(s,q) = \dist_G(s,R) <\hat{\delta}_s-d   \leq \delta_{\max}-d.
\end{equation}

We first show that both endpoints of $e$ are assigned to the cluster centered at $s$. Since $u$ and $v$ lie on $C$, each of them is within distance at most $w(C)/2$ from $q$ along the cycle. Hence, for each $y\in\{u,v\}$,
\begin{align}
    \dist_G(s,y)-\hat{\delta}_s
    &<
    \left(\dist_G(s,q)+\frac{w(C)}{2}\right)
    -
    \left(\dist_G(s,R)+d\right) \notag\\
    &=
    \frac{w(C)}{2}-d \notag\\
    &\leq
    -2\epsilon\delta_{\max}
    \leq 0.
    \label{eq:uy-negative-shift}
\end{align}
Here the first inequality uses \eqref{eq:capture-close-lower}, the equality uses \eqref{eq:q-closest}, and the final inequality uses the assumption
\(w(C)\leq 2(d-2\epsilon\delta_{\max})\).

Since $u$ and $v$ are $\epsilon$-safe for $s$, \Cref{lem:safe-preserves-cluster} and \eqref{eq:uy-negative-shift} imply that both $u$ and $v$ are assigned to the cluster centered at $s$.

We next bound the approximate distances from $s$ to $u$ and $v$ used by the cycle detection step. Let $G^+$ be the augmented graph with virtual super source $x$. For each $y\in\{u,v\}$, the exact $x$-$y$ distance through $s$ is
\begin{equation}\label{eq:exact-through-s}
    2\delta_{\max}-\hat{\delta}_s+\dist_G(s,y).
\end{equation}
By \eqref{eq:uy-negative-shift}, this quantity is at most $2\delta_{\max}$. Since the tree-like SSSP tree is a $(1+\epsilon)$-approximate SSSP tree, and since
\[
    \tilde{\dist}(x,s)=2\delta_{\max}-\hat{\delta}_s,
\]
we have, by the definition of our distance estimate $\tilde{\dist}(\cdot,\cdot)$,
\begin{align}
    \tilde{\dist}(s,y)
    &=
    \tilde{\dist}(x,y)-\tilde{\dist}(x,s) \notag\\
    &\leq
    (1+\epsilon)\left(2\delta_{\max}-\hat{\delta}_s+\dist_G(s,y)\right)
    -
    (2\delta_{\max}-\hat{\delta}_s) \notag\\
    &=
    \dist_G(s,y)
    +
    \epsilon\left(2\delta_{\max}-\hat{\delta}_s+\dist_G(s,y)\right) \notag\\
    &<
    \dist_G(s,y)+2\epsilon\delta_{\max},
    \label{eq:approx-distance-bound}
\end{align}
where the last inequality uses \eqref{eq:uy-negative-shift}.

Since $e\notin E(T_s)$, the cycle detection step considers the cycle formed by $e$ together with the unique $u$-$v$ path in $T_s$. By definition of the weight estimate and by \eqref{eq:approx-distance-bound}, the assigned estimate is  
\begin{align}
    \tilde{\dist}(s,u)+\tilde{\dist}(s,v)+w(u,v)
    &<
    \dist_G(s,u)+\dist_G(s,v)+w(u,v)+4\epsilon\delta_{\max}.
    \label{eq:estimated-cycle-bound}
\end{align}

It remains to relate the right-hand side to the weight of $C$. Since $u,v,q\in V(C)$ and $e=\{u,v\}$ is an edge of $C$,
\begin{equation}\label{eq:cycle-metric-bound}
    \dist_G(s,u)+\dist_G(s,v)+w(u,v)
    \leq
    2\dist_G(s,q)+w(C).
\end{equation}
Combining \eqref{eq:estimated-cycle-bound} and \eqref{eq:cycle-metric-bound}, the weight estimate of the considered cycle is at most
\begin{equation}\label{eq:before-final-bound}
    2\dist_G(s,q)+w(C)+4\epsilon\delta_{\max}.
\end{equation}
Finally, by \eqref{eq:capture-close-upper} and the assumption \(w(C)\leq 2(d-2\epsilon\delta_{\max})\), 
the bound in \eqref{eq:before-final-bound} is smaller than
\[
    2(\delta_{\max}-d)+2(d-2\epsilon\delta_{\max})+4\epsilon\delta_{\max}
    =
    2\delta_{\max}.
\]
Therefore the cycle detection step finds a cycle with weight estimate at most $2\delta_{\max}$.
\end{proof}

\Cref{lem:safe-nontree-edge-gives-cycle} captures the only property of the cycle detection step that we will need: if $\capture(R)$ occurs and the cycle contains a non-tree edge whose endpoints are $\epsilon$-safe for the cluster center, then the algorithm detects a cycle of weight estimate at most $2\delta_{\max}$.

We first apply \Cref{lem:safe-nontree-edge-gives-cycle} in the special case $S=V$. Then, for a minimum weight cycle $C^\star$, we have $R=S\cap V(C^\star)=V(C^\star)$. Thus, when $\capture(V(C^\star))$ occurs, all nodes of $C^\star$ are $\epsilon$-safe for the capturing center by \Cref{lem:capture-cycle-safe}. Moreover, since the local approximate SSSP tree is a tree, it cannot contain all edges of $C^\star$; hence $C^\star$ contains a non-tree edge whose endpoints are $\epsilon$-safe. Therefore, \Cref{lem:safe-nontree-edge-gives-cycle} applies directly.

Later, in the long-hop part of the algorithm for \Cref{thm:mainUB}, we apply \Cref{lem:safe-nontree-edge-gives-cycle} with $S$ equal to the sampled skeleton nodes. The difficulty there is that the antipodal segment may contain nodes that are not guaranteed to be $\epsilon$-safe by \Cref{lem:capture-cycle-safe}, so a suitable non-tree edge may not exist.

The following proposition records the resulting guarantee for the case $S=V$, including both its success probability and approximation ratio. 

\begin{proposition}[Approximation ratio and success probability]\label{lem:ldd-cycle-detection-short}
Let $C^\star$ be a minimum weight cycle. Run $\LDD(G,S,k,d,\epsilon)$ with $S=V$, followed by the cycle detection step. Suppose $\epsilon \in O(1/\log n)$ and
\[
    2d\left(1-O\left(\frac{1}{\log n}\right)\right)
    \leq
    \OPT
    \leq
    2(d-2\epsilon\delta_{\max}).
\]
With probability $\Omega(n^{-1/k})$, the cycle detection step finds a cycle whose weight estimate is at most
\[
    (k+1)\left(1+O\left(\frac{1}{\log n}\right)\right)\OPT.
\]
\end{proposition}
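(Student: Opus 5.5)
The proof combines \Cref{lem:capture-probability}, \Cref{lem:capture-cycle-safe}, and \Cref{lem:safe-nontree-edge-gives-cycle} in the special case $S=V$, and then converts the bound $2\delta_{\max}$ into a bound in terms of $\OPT$.

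Take $C=C^\star$. Since $S=V$, we have $R=S\cap V(C^\star)=V(C^\star)\neq\emptyset$. By \Cref{lem:capture-probability}, $\capture(R)$ occurs with probability at least $(1-1/e)|S|^{-1/k}=\Omega(n^{-1/k})$. We condition on this event and let $s$ be the capturing center.

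The hypothesis $\OPT\leq 2(d-2\epsilon\delta_{\max})$ is exactly the condition $w(C^\star)\leq 2(d-2\epsilon\delta_{\max})$ required in \Cref{lem:capture-cycle-safe}. In the case $R=V(C^\star)$, that proposition gives that every node of $C^\star$ is $\epsilon$-safe for $s$.

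Next, we need an edge of $C^\star$ that is not in $T_s$. If some edge of $C^\star$ has an endpoint outside the cluster of $s$, it is certainly not in $T_s$. This case cannot actually occur: safety together with \eqref{eq:uy-negative-shift} and \Cref{lem:safe-preserves-cluster} places all of $V(C^\star)$ in the cluster. Otherwise $T_s$ is a tree and cannot contain every edge of the cycle $C^\star$. Either way, there is an edge $e=\{u,v\}$ of $C^\star$ with $e\notin E(T_s)$, and both $u,v$ are $\epsilon$-safe. \Cref{lem:safe-nontree-edge-gives-cycle} then shows that the cycle detection step finds a cycle with weight estimate at most $2\delta_{\max}$.

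The remaining step, which is the only one needing care, is the arithmetic bounding $2\delta_{\max}$ by $\OPT$. By definition,
\[
2\delta_{\max}=2\left(k+1+\frac{k}{\ln n}\right)d=(k+1)\,2d\left(1+\frac{k}{(k+1)\ln n}\right)\leq (k+1)\,2d\left(1+\frac{1}{\ln n}\right),
\]
using $|S|=n$. The lower-bound hypothesis $2d(1-O(1/\log n))\leq\OPT$ gives $2d\leq \OPT\,(1+O(1/\log n))$ for $n$ large. Multiplying the two $(1+O(1/\log n))$ factors yields
\[
2\delta_{\max}\leq (k+1)\left(1+O\left(\frac{1}{\log n}\right)\right)\OPT.
\]
The constants in $O(\cdot)$ here do not depend on $k$, because $k/(k+1)<1$. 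The assumption $\epsilon\in O(1/\log n)$ is not used directly in this computation. Its role is to make the two hypotheses on $\OPT$ jointly satisfiable: $4\epsilon\delta_{\max}=O(kd/\log n)$, which is consistent with the stated window only when $k$ is treated as bounded relative to the implicit constants. I would remark on this but not otherwise rely on it.
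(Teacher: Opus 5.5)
Your proof is correct and follows the paper's argument step for step. You obtain capture with probability $(1-1/e)n^{-1/k}$, show that every node of $C^\star$ is $\epsilon$-safe via \Cref{lem:capture-cycle-safe} with $R=V(C^\star)$, find a non-tree edge of $C^\star$, apply \Cref{lem:safe-nontree-edge-gives-cycle}, and finish with $2\delta_{\max}\le(k+1)\,2d(1+1/\ln n)$. Your remark that $\epsilon\in O(1/\log n)$ is never used also matches the paper, whose proof does not use it either; and your case split for the non-tree edge is harmless but unnecessary, since a tree cannot contain every edge of a cycle regardless of cluster membership.
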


\begin{proof}
Since $S=V$, we have
\[
    R=S\cap V(C^\star)=V(C^\star).
\]
By \Cref{lem:capture-probability}, the event $\capture(V(C^\star))$ occurs with probability $\Omega(n^{-1/k})$. Condition on this event, and let $s\in S$ be the center of the cluster that captures $V(C^\star)$.

By \Cref{lem:capture-cycle-safe}, applied with $R=V(C^\star)$ and using the assumption
\[
    w(C^\star)=\OPT\leq 2(d-2\epsilon\delta_{\max}),
\]
every node of $C^\star$ is $\epsilon$-safe for $s$.

We claim that at least one edge of $C^\star$ is not an edge of the local approximate SSSP tree $T_s$. Indeed, by \Cref{lem:safe-preserves-cluster}, every node of $C^\star$ is assigned to the cluster centered at $s$. Since $T_s$ is a tree, it cannot contain all edges of the cycle $C^\star$. Thus, there exists an edge $e=\{u,v\}$ of $C^\star$ that is not an edge of $T_s$.

Both endpoints $u$ and $v$ are $\epsilon$-safe for $s$, and $e$ is a non-tree edge. Therefore, by \Cref{lem:safe-nontree-edge-gives-cycle}, the cycle detection step finds a cycle with weight estimate at most $2\delta_{\max}$.

Since $S=V$, we have $\delta_{\max} \in (k+1)(1+O(1/\log n))d$. Together with
\[
    2d\leq \left(1+O\left(\frac{1}{\log n}\right)\right)\OPT,
\]
this implies
\[
    2\delta_{\max}
    \leq
    (k+1)\left(1+O\left(\frac{1}{\log n}\right)\right)\OPT.
\]
This proves the lemma.
\end{proof}

\section{Parallel and Distributed Algorithms}\label{sec:upperbounds}

In this section, we turn the one-run guarantee (\Cref{lem:ldd-cycle-detection-short}) from the previous section into parallel and distributed algorithms for $\apxMWC$. The first step is to guess the scale parameter $d$ and amplify the success probability by independent repetitions; this is done in \Cref{subsect:reduction}. We then describe the implementation of the full procedure in \Cref{subsect:impl}, where all steps except for global aggregation and broadcasting are reduced to $(1+\epsilon)$-approximate SSSP computations. These ingredients are finally instantiated in the work-depth, broadcast congested clique, and $\CONGEST$ models in \Cref{subsect:upperbounds}.

\subsection{Parameter Guessing and Success Probability Amplification}\label{subsect:reduction}

To turn the one-run guarantee of \Cref{lem:ldd-cycle-detection-short} into an $(k+1)\left(1+O\left(\frac{1}{\log n}\right)\right)$-$\apxMWC$ algorithm that succeeds with high probability, we try geometrically spaced candidate values for the scale parameter $d$, and for each value of $d$ we repeat the decomposition enough times to amplify the success probability. In total, this uses $\tilde{O}(n^{1/k})$ independent calls to $\LDD(G,V,k,d,\epsilon)$, each followed by the cycle detection step, and returns with high probability a cycle of weight at most
\[
    (k+1)\left(1+O\left(\frac{1}{\log n}\right)\right)\OPT.
\]

It suffices to consider $1\leq k\in O(\log n)$. Indeed, if $k>C\log n$ for a sufficiently large constant $C$, then we may instead run the algorithm with parameter $k'=C\log n$. This only improves the approximation ratio, and all the upper bounds considered in this paper (\Cref{thm:mainUB,thm:parallelUB,thm:cliqueUB}) change by at most a constant factor.

We set $\epsilon=1/(k\log^2 n)$. Since $k\in O(\log n)$, this is inverse-polylogarithmic.

\paragraph{Parameter guessing.}
Since edge weights are positive integers and $w_{\max}\in n^{O(1)}$, we have $1\leq \OPT\leq n w_{\max} \in n^{O(1)}$. Let $\eta=1/\log n$, and try all candidate values $2d=(1+\eta)^i$ between $1$ and $n w_{\max}$. The number of candidates is $O(\log(nw_{\max})/\eta)=O(\log^2 n)$.

It remains to observe that one of these candidates satisfies the two conditions of \Cref{lem:ldd-cycle-detection-short}. Since $\delta_{\max}=(k+1+k/\ln n)d$ and $\epsilon=1/(k\log^2 n)$, we have $2\epsilon\delta_{\max}/d\in O(1/\log^2 n)$. Thus the condition
\[
    \OPT\leq 2(d-2\epsilon\delta_{\max})
\]
is satisfied whenever $2d\geq (1+\omega(1/\log^2 n))\OPT$. On the other hand, the condition
\[
    2d\left(1-O\left(\frac{1}{\log n}\right)\right)\leq \OPT
\]
is satisfied whenever $2d\leq (1+O(1/\log n))\OPT$.

Therefore, the admissible interval for $2d$ has multiplicative width $1+\Omega(1/\log n)$. Since consecutive values in our candidate sequence differ by a factor of $1+\eta=1+1/\log n$, the sequence must contain some value $2d$ in this interval. More specifically, the smallest candidate satisfying $2d\geq (1+\omega(1/\log^2 n))\OPT$ also satisfies $2d\leq (1+O(1/\log n))\OPT$, and hence meets both requirements of \Cref{lem:ldd-cycle-detection-short}.

\paragraph{Success probability amplification.}
For this candidate $d$, \Cref{lem:ldd-cycle-detection-short} shows that one run of $\LDD(G,V,k,d,\epsilon)$ followed by the cycle detection step succeeds with probability $\Omega(n^{-1/k})$. Thus $\Theta(n^{1/k}\log n)$ independent repetitions succeed with high probability. Since there are only $O(\log^2 n)$ candidate values of $d$, the total number of repetitions over all candidates is $\tilde{O}(n^{1/k})$.

Returning the cycle with minimum weight estimate over all repetitions and all candidate values of $d$ gives, with high probability, a cycle of weight at most
\[
    (k+1)\left(1+O\left(\frac{1}{\log n}\right)\right)\OPT,
\]
because every weight estimate produced by the cycle detection step is an upper bound on the true cycle weight.

\subsection{Implementation of the Full Procedure}\label{subsect:impl}

We now go through the procedure described above and explain how its steps are implemented. The main point is that, except for global aggregation and broadcasting, all required operations can be reduced to $(1+\epsilon)$-approximate SSSP computations. Throughout this section, we use the transformation of \citet{RozhonHMGZ23} to assume that all approximate SSSP computations are tree-like, at the cost of only polylogarithmic overhead.

\paragraph{Implementing $\LDD(G,V,k,d,\epsilon)$.}
Recall that $\LDD(G,V,k,d,\epsilon)$ is implemented by adding a virtual super source $x$ and connecting it to each node $s\in V$ by an edge of weight $2\delta_{\max}-\hat{\delta}_s$. Running tree-like $(1+\epsilon)$-approximate SSSP from $x$ in the augmented graph $G^+$ gives a rooted tree $T$. Each node $v$ joins the cluster of the first real node on the $x$-$v$ path in $T$.

The SSSP computation immediately gives each node $v$ its estimate $\tilde{\dist}(x,v)$ and its parent in $T$. However, for the cycle detection step, the useful quantity is not $\tilde{\dist}(x,v)$ but the estimate from the cluster center. Namely, if $v$ is assigned to the cluster centered at $s$, then we need
\[
    \tilde{\dist}(s,v)=\tilde{\dist}(x,v)-\tilde{\dist}(x,s).
\]
Thus, every node $v$ has to learn two pieces of information about its cluster center $s$: the identity $\ID(s)$ and the value $\tilde{\dist}(x,s)$. Since both can be represented using $O(\log n)$ bits, this task is simply to let each center $s$ broadcast $O(\log n)$ bits of information to all nodes in its subtree $T_s$.

We explain how to broadcast one bit from each center to its cluster using one additional $(1+\epsilon)$-approximate SSSP computation. After the tree $T$ is computed, keep only the edges of $T$ and give all other edges infinite weight. Add the virtual super source $x$ as before. To broadcast one bit $b_s\in\{0,1\}$ from each center $s$, set the weight of the virtual edge $\{x,s\}$ to one of two sufficiently separated values depending on $b_s$. Then run $(1+\epsilon)$-approximate SSSP from $x$ in this graph. Since the only finite paths from $x$ to nodes in $T_s$ enter $T_s$ through the virtual edge $\{x,s\}$, the resulting distance estimates allow every node in $T_s$ to recover the bit $b_s$.

Repeating this procedure for $O(\log n)$ times lets every node $v$ learn $\ID(s)$ and $\tilde{\dist}(x,s)$ for its cluster center $s$. Consequently, every node can compute $\tilde{\dist}(s,v)=\tilde{\dist}(x,v)-\tilde{\dist}(x,s)$. At the end of this step, every node knows its cluster center, its parent in the local tree $T_s$, and its approximate distance estimate from the cluster center.

\paragraph{Cycle detection.}
The cycle detection step is \emph{local} once the above information is available. Each node $v$ sends the same $O(\log n)$-bit message to all its neighbors, containing its cluster center, its parent in $T$, and its value $\tilde{\dist}(s,v)$, where $s$ is the center of the cluster containing $v$. Hence this step is directly implementable in one round not only in the $\CONGEST$ model, but also in the broadcast congested clique model.

After this message exchange, each edge $e=\{u,v\}$ can be inspected locally. If $u$ and $v$ belong to different clusters, then $e$ is ignored. If they belong to the same cluster centered at $s$ and $e$ is not a tree edge of $T_s$, then $e$ defines a candidate cycle: the cycle consisting of $e$ together with the unique $u$-$v$ path in $T_s$. Its weight estimate is
\[
    \tilde{\dist}(s,u)+\tilde{\dist}(s,v)+w(u,v).
\]
Thus each edge can locally decide whether it gives a candidate cycle and, if so, compute its weight estimate.

\paragraph{Global aggregation.}
After the cycle detection step, each candidate cycle is represented by its non-tree edge $e=\{u,v\}$, together with its weight estimate. We need to find a candidate cycle of minimum weight estimate over all edges, all repetitions, and all candidate values of $d$.

This aggregation is straightforward in the models considered here. In the work-depth model, there are at most $\tilde{O}(m n^{1/k})$ candidate cycles in total, so the minimum can be found using $\tilde{O}(m n^{1/k})$ work and $\tilde{O}(1)$ depth. In the broadcast congested clique model, each node first locally computes the best candidate cycle among the edges incident to it, over all repetitions and all candidate values of $d$. Then all nodes broadcast their local minima, and the global minimum is found in one additional round. In the $\CONGEST$ model, the same aggregation can be performed along a BFS tree in $O(D)$ rounds once the candidates are known.

\paragraph{Recovering the actual cycle and its weight.}
The preceding steps identify an edge $e=\{u,v\}$ whose associated candidate cycle has minimum weight estimate. To output the actual cycle, it remains to identify the tree path between $u$ and $v$ in the local tree $T_s$. This can again be reduced to $(1+\epsilon)$-approximate SSSP computations.

Run SSSP in the tree $T_s$ from $u$ and from $v$. For each tree edge, compare the directions of the parent pointers induced by these two SSSP trees. An edge lies on the unique $u$-$v$ path in $T_s$ if and only if these two directions are opposite. Therefore, the nodes can mark exactly the tree edges on the $u$-$v$ path, and together with the non-tree edge $\{u,v\}$ this gives the desired cycle.

Once the edges of the output cycle are marked, computing its actual weight is another aggregation task, now using summation rather than minimization. Broadcasting the resulting weight to all nodes is also straightforward: it takes one round in the broadcast congested clique model and $O(D)$ rounds in the $\CONGEST$ model. In the work-depth model, the computation is centralized, so no broadcasting is needed.
 
\subsection{Consequences in Parallel and Distributed Models}\label{subsect:upperbounds}

The remaining task is to instantiate the procedure above in three models. The work-depth and broadcast congested clique implementations give \Cref{thm:parallelUB,thm:cliqueUB}. The direct $\CONGEST$ implementation gives a slightly weaker bound than \Cref{thm:mainUB}, which will later be improved by by refining the approach with additional ideas.

Recall from \Cref{subsect:reduction} that we use $\epsilon=1/(k\log^2 n)$. Since it suffices to consider $1\leq k\in O(\log n)$, we have $\epsilon\in\log^{-O(1)} n$. By the transformation of \citet{RozhonHMGZ23}, we may assume that all approximate SSSP computations are tree-like, at the cost of only polylogarithmic overhead. By the previous discussion, $\tilde{O}(n^{1/k})$ independent executions of the decomposition and the cycle detection step suffice to obtain, with high probability, a cycle of weight at most $(k+1)(1+O(1/\log n))\OPT$.

\parallelupper*

\begin{proof}
For $\epsilon\in\log^{-O(1)} n$, $(1+\epsilon)$-approximate SSSP in undirected weighted graphs can be computed with $\tilde{O}(m)$ work and $\tilde{O}(1)$ depth~\cite{andoni2020parallel,li2020faster,rozhovn2022undirected}. Hence one execution of the decomposition, together with the auxiliary SSSP computations described above, costs $\tilde{O}(m)$ work and $\tilde{O}(1)$ depth.

We perform the $\tilde{O}(n^{1/k})$ executions in parallel. The total work is therefore $\tilde{O}(m n^{1/k})$, while the depth remains $\tilde{O}(1)$. Recovering the candidate cycle with the smallest estimated weight and computing its actual weight can also be done within the same asymptotic bounds. Therefore, with high probability, the algorithm solves the $(k+1)(1+O(1/\log n))$-$\apxMWC$ problem with $\tilde{O}(m n^{1/k})$ work and $\tilde{O}(1)$ depth.
 
To remove the $1+O(1/\log n)$ factor in the approximation ratio, run the algorithm with $k'=k-\Theta((k+1)/\log n)$, chosen so that $(k'+1)(1+O(1/\log n))\leq k+1$. Since $k\in O(\log n)$, this changes $n^{1/k}$ by only a constant factor. 
\end{proof}

\cliqueupper*

\begin{proof}
For $\epsilon\in\log^{-O(1)} n$, $(1+\epsilon)$-approximate SSSP in undirected weighted graphs can be computed in $\tilde{O}(1)$ rounds in this model~\cite{becker2021near}. Hence one execution of the decomposition and the associated cycle detection costs $\tilde{O}(1)$ rounds.

Running the required $\tilde{O}(n^{1/k})$ executions sequentially takes $\tilde{O}(n^{1/k})$ rounds. Recovering a candidate cycle with the smallest estimated weight, computing its actual weight, and broadcasting the result can all be performed within the same asymptotic round complexity. Therefore, with high probability, the algorithm solves the $(k+1)(1+O(1/\log n))$-$\apxMWC$ problem in $\tilde{O}(n^{1/k})$ rounds.

As in the proof of \Cref{thm:parallelUB}, running with $k'=k-\Theta((k+1)/\log n)$ absorbs the $1+O(1/\log n)$ factor in the approximation ratio and changes the round complexity by only a constant factor. 
\end{proof}

\paragraph{The direct $\CONGEST$ implementation.}
The same approach gives a direct $\CONGEST$ algorithm, but with a weaker round complexity than \Cref{thm:mainUB}. For $\epsilon \in\log^{-O(1)} n$, the $(1+\epsilon)$-approximate SSSP algorithm of \citet{becker2021near} can be implemented with congestion $\tilde{O}(\alpha)$ and using $\tilde{O}(n/\alpha+\alpha+D)$ rounds, for any parameter $\alpha$.

We need $\tilde{O}(n^{1/k})$ executions. The resulting collection of algorithms have congestion $\tilde{O}(\alpha n^{1/k})$ and dilation $\tilde{O}(n/\alpha+\alpha+D)$. By \Cref{thm: dilation and congestion}, these computations can be scheduled in
\[
    \tilde{O}\left(\alpha n^{1/k}+\frac{n}{\alpha}+D\right)
\]
rounds. The remaining aggregation and broadcasting steps take only ${O}(D)$ additional rounds.

Balancing $\alpha n^{1/k}$ and $n/\alpha$ gives $\alpha=n^{(k-1)/(2k)}$, and hence the direct implementation runs in
\[
    \tilde{O}\left(n^{\frac{k+1}{2k}}+D\right)
\]
rounds. As above, the $1+O(1/\log n)$ factor in the approximation ratio can be absorbed by a constant-factor change in the round complexity. Therefore, the direct implementation solves the $(k+1)$-$\apxMWC$ problem in $\tilde{O}(n^{(k+1)/(2k)}+D)$ rounds in the $\CONGEST$ model.

\section{Sharpening the Round Complexity}

In this section, we show how to sharpen the round complexity bound $\tilde{O}(n^{(k+1)/(2k)}+D)$ above to achieve our target round complexity $\tilde{O}(n^{(k+1)/(2k+1)}+D)$ for any real number $k \geq (1+\sqrt{5})/2 \approx 1.618$. The key idea is to handle \emph{short-hop} and \emph{long-hop} minimum weight cycles separately. In \Cref{subsect:hop-bounded-sssp}, we describe a hop-bounded approximate SSSP primitive that will be used in both cases. In \Cref{subsect:short-hop-cycles}, we handle the case where some minimum weight cycle has few edges, by slightly perturbing the edge weights and then using the hop-bounded SSSP primitive to implement the MPX low-diameter decomposition. In \Cref{subsect:sampling-skeleton-nodes}, we describe the sampling of skeleton nodes used in the long-hop case. In \Cref{subsect:extended-cycles}, we introduce the extended cycle detection step, which patches the possible antipodal segment. In \Cref{subsect:long-hop-cycles}, we combine this patching step with the sampling of skeleton nodes to handle long-hop minimum weight cycles.

\subsection{Hop-Bounded Approximate SSSP}\label{subsect:hop-bounded-sssp}

We use the following hop-bounded version of approximate SSSP in both short-hop and long-hop parts of the algorithm. In the short-hop case, after a small perturbation of the edge weights, the promise that some minimum weight cycle has at most $h$ edges implies that all shortest paths have few hops. In the long-hop case, the same primitive will be used to obtain approximate distance information between the endpoints of the missing segment, namely the antipodal segment whose internal nodes are not guaranteed to be $\epsilon$-safe by \Cref{lem:capture-cycle-safe}.

\begin{definition}[$h$-hop distance]
Let $G=(V,E,w)$ be an undirected graph with positive edge weights, and let $h\geq 1$. For two nodes $u,v\in V$, the $h$-hop distance between $u$ and $v$ is
\[
    \dist_G^{(h)}(u,v)
    =
    \min\{w(P): \text{$P$ is a $u$-$v$ path in $G$ with at most $h$ edges}\}.
\]
If no such path exists, we set $\dist_G^{(h)}(u,v)=\infty$.
\end{definition}

\begin{definition}[$h$-hop-bounded approximate SSSP]\label{def-hop-sssp}
Let $G=(V,E,w)$ be an undirected graph with positive edge weights, let $x\in V$, and let $\epsilon>0$. A function
\[
    \tilde{\dist}(x,\cdot):V\to\mathbb{R}_{\geq 0}\cup\{\infty\}
\]
is a $(1+\epsilon)$-approximate $h$-hop-bounded SSSP distance estimate from $x$ if, for every node $v\in V$,
\[
    \dist_G(x,v)
    \leq
    \tilde{\dist}(x,v)
    \leq
    (1+\epsilon)\dist_G^{(h)}(x,v).
\]
\end{definition}

The hop-bounded SSSP primitive underlying the lemma below is already known~\cite{nanongkai2014distributed}. We include a proof for completeness, and because we need two additional properties that are not stated in the prior work~\cite{nanongkai2014distributed}: tree-likeness of the distance estimates and a bound on the height of the corresponding approximate SSSP tree $T$. Both properties will be useful later.

\begin{lemma}[Tree-like hop-bounded approximate SSSP~\cite{nanongkai2014distributed}]\label{lem:tree-like-hop-bounded-sssp}
Let $G=(V,E,w)$ be an undirected graph with positive polynomially bounded integer edge weights, let $x\in V$ be a source node, and let $h\geq 1$ and $\epsilon\in(0,1)$. There is a deterministic distributed algorithm in the $\CONGEST$ model that computes a $(1+\epsilon)$-approximate $h$-hop-bounded SSSP distance estimate $\tilde{\dist}(x,\cdot)$ from $x$.

Moreover, the algorithm outputs parent pointers defining a tree $T$ rooted at $x$ with height $O((h/\epsilon)\log n)$ certifying the tree-likeness property: for every node $v\neq x$ with $\tilde{\dist}(x,v) < \infty$, if $p(v)$ is the parent of $v$ in $T$, then
\[
    \tilde{\dist}(x,p(v))
    \leq
    \tilde{\dist}(x,v)-w(p(v),v).
\] 

The algorithm has congestion ${O}(\log n)$ and takes  ${O}((h/\epsilon) \log n)$ rounds. 
\end{lemma}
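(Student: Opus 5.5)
We follow the standard weight-rounding approach of \citet{nanongkai2014distributed}. For each distance scale $D_i = 2^i$, $i = 0,1,\ldots,O(\log n)$ (enough since weights are polynomially bounded), set the granularity $\rho_i = \epsilon D_i / h$ and the rounded weights $w_i(e) = \lceil w(e)/\rho_i \rceil$. Any $x$-$v$ path $P$ with at most $h$ edges and $w(P)\le D_i$ then has
\[
    w_i(P)\le w(P)/\rho_i + h \le (1+\epsilon)h/\epsilon .
\]
In the rounded integer-weighted graph we run a BFS-like computation from $x$ up to depth $L=\lceil(1+\epsilon)h/\epsilon\rceil$, where an edge of weight $w_i(e)$ delays a message by $w_i(e)$ rounds. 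This is Dijkstra/BFS on a subdivided graph, simulated by having each node, at time $t$, forward its label to neighbors only when the arrival $t+w_i(e)$ becomes current. It finishes in $O(h/\epsilon)$ rounds, and each edge carries $O(1)$ messages, namely the single moment when its endpoint's final distance is announced. Summing over the $O(\log n)$ scales gives congestion $O(\log n)$ and $O((h/\epsilon)\log n)$ rounds. The scales run sequentially; alternatively they can be interleaved, but sequential execution already meets both bounds.

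For the estimate, let $d_i(v)$ be the BFS distance at scale $i$, set to $\infty$ if it exceeds $L$. Define $\tilde{\dist}_i(x,v)=\rho_i d_i(v)$ and $\tilde{\dist}(x,v)=\min_i \tilde{\dist}_i(x,v)$. The lower bound $\dist_G(x,v)\le \rho_i d_i(v)$ holds because $\rho_i w_i(e)\ge w(e)$. For the upper bound, take $i$ minimal with $D_i\ge \dist^{(h)}_G(x,v)$. Then
\[
    \rho_i d_i(v)\le w(P)+\rho_i h \le \dist^{(h)}_G(x,v)+\epsilon D_i \le (1+2\epsilon)\dist^{(h)}_G(x,v),
\]
so we rescale $\epsilon$ by a factor of $2$.

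For tree-likeness we use the BFS parents. At scale $i$, the parent $p_i(v)$ is the neighbor whose message realized $d_i(v)$, so
\[
    d_i(p_i(v))+w_i(p_i(v),v)=d_i(v), \qquad\text{hence}\qquad \tilde{\dist}_i(x,p_i(v))\le\tilde{\dist}_i(x,v)-w(p_i(v),v).
\]
Each node $v$ records the scale $i^*(v)$ attaining its minimum and takes $p(v)=p_{i^*(v)}(v)$. Then $\tilde{\dist}(x,p(v))\le \tilde{\dist}_{i^*}(x,p(v))\le \tilde{\dist}(x,v)-w(p(v),v)$, as required. The parent has finite estimate at that scale, since $d_{i^*}(p(v))<d_{i^*}(v)\le L$.

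The main obstacle is the height bound on $T$, because mixing scales could in principle produce long parent chains. Within a single scale, $d_i$ strictly decreases by at least $1$ along parent pointers, so such a chain has length at most $L$. Along a chain $v\to p(v)\to\cdots$, the chosen scale can change, but only with a nonincreasing index. The reason is that $\tilde{\dist}$ strictly decreases along the chain, and a node whose best estimate is small enough to be caught at a finer scale uses that finer or an equal scale. To make this airtight, I would break ties toward the smallest $i$ and, if needed, redefine the rule as: $v$ picks the smallest $i$ with $\tilde{\dist}_i(x,v)\le(1+\epsilon)$ times its best estimate. This keeps the monotonicity argument while costing only another constant factor in $\epsilon$. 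Each scale then contributes at most $L$ steps to any chain, giving height $O(L\log n)=O((h/\epsilon)\log n)$. I expect verifying this scale-monotonicity, or substituting an argument that charges each hop to a decrease in $\tilde{\dist}$ of at least $\rho_{i}$ at the current scale, to be the delicate step.
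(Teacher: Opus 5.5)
\textbf{Everything except the height bound matches the paper's proof.} Specifically:
\begin{itemize}
\item per-scale rounding with granularity $\epsilon D_i/h$;
\item a depth-bounded BFS on the subdivided graph;
\item the minimum over scales, with the parent taken from a scale attaining the minimum;
\item tree-likeness following from $\rho_i w_i(e)\geq w(e)$.
\end{itemize}
Taking the minimal $i$ with $D_i\geq \dist^{(h)}_G(x,v)$ and rescaling $\epsilon$ is a harmless variant of the paper's choice $d\leq \dist^{(h)}_G(x,v)\leq 2d$ with depth $\lceil 3h/\epsilon\rceil$.

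\textbf{The gap is the height of $T$.} You rest it on a scale-monotonicity claim along parent chains, which you do not prove and flag yourself as unverified. As written, the bound $O((h/\epsilon)\log n)$ is therefore not established.

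The missing idea is that no monotonicity or contiguity of scales is needed. Fix a scale $i$ and consider the nodes on a parent chain whose chosen scale is $i$.
\begin{itemize}
\item Each such node $u$ has estimate $\tilde{\dist}(x,u)=\rho_i d_i(u)$ with $d_i(u)\in\{0,1,\ldots,L\}$.
\item These estimates are pairwise distinct, because $\tilde{\dist}(x,\cdot)$ strictly decreases along every parent chain (tree-likeness plus positive weights).
\end{itemize}
Hence at most $L+1$ nodes of any chain use scale $i$, however the scales interleave. The height is then at most $(L+1)$ times the number of scales, i.e., $O((h/\epsilon)\log n)$. This is exactly the paper's argument.

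Your monotonicity claim is in fact true under smallest-index tie-breaking, but proving it needs a fact your informal justification omits. Since $\rho_j/\rho_i=2^{j-i}$ is an integer, the granularities are nested, so $\rho_i w_i(e)\leq \rho_j w_j(e)$ for $i\leq j$. This implies that the smallest scale attaining the minimum at $u$ is $\min\{i:\tilde{\dist}(x,u)\leq \rho_i L\}$, which is nondecreasing in $\tilde{\dist}(x,u)$ and hence nonincreasing toward the root. The counting argument above avoids all of this.

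Your $(1+\epsilon)$-slack fallback is unnecessary. You also did not show that it preserves the certificate, which is stated relative to the minimum estimate.
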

\begin{proof}
We use rounding and scaling, together with guessing of the distance scale. Observe that every finite $h$-hop distance lies in $[1,h\cdot w_{\max}]$, where
\[
    w_{\max}=\max_{e\in E} w(e) \in n^{O(1)},
\]
so $h\cdot w_{\max}\in n^{O(1)}$. We try all $O(\log n)$ scales
\[
    d\in \{1,2,4,\ldots,2^{\lceil \log(hw_{\max})\rceil}\}.
\]

\paragraph{One distance scale.}
Fix a scale $d$. Set
\[
    \mu_d=\epsilon\cdot\frac{d}{h},
\]
and replace each edge weight $w(e)$ by the scaled integer weight
\[
    w_d(e)=\left\lceil \frac{w(e)}{\mu_d}\right\rceil .
\]
For every path $P$ with at most $h$ edges, we have
\begin{equation}\label{eq:hop-bounded-rounding}
    \mu_d w_d(P)
    \leq
    w(P)+h\mu_d
    =
    w(P)+\epsilon d .
\end{equation}

For this fixed scale $d$, consider the unweighted graph obtained by subdividing each edge $e$ into a path of $w_d(e)$ nodes. We run BFS from $x$ in this subdivided graph up to depth
\[
    R=\left\lceil \frac{3h}{\epsilon}\right\rceil .
\]
Equivalently, this BFS can be simulated in the original graph by forwarding the BFS wave across an edge $e$ with delay $w_d(e)$. Let $D_d(v)$ be the BFS distance from $x$ to $v$ in the subdivided graph, if this distance is at most $R$. The corresponding distance estimate in the original graph $G$ is $\mu_d D_d(v)$.

\paragraph{Approximation guarantee.}
Let $v$ be any node. If $\dist_G^{(h)}(x,v)=\infty$, then the upper bound requirement in \Cref{def-hop-sssp} is vacuous, so assume $\dist_G^{(h)}(x,v)<\infty$. The case $v=x$ is trivial, so assume $v\neq x$. Choose a scale $d$ such that
\[
    d\leq \dist_G^{(h)}(x,v) \leq 2d.
\]
Let $P$ be an $h$-hop shortest path from $x$ to $v$, so
\[
    w(P)=\dist_G^{(h)}(x,v).
\]
Since $P$ has at most $h$ edges, \eqref{eq:hop-bounded-rounding} gives
\[
    \mu_d w_d(P)
    \leq
    w(P)+\epsilon d
    =
    \dist_G^{(h)}(x,v)+\epsilon d
    \leq
    (1+\epsilon)\dist_G^{(h)}(x,v),
\]
where the last inequality uses $d\leq \dist_G^{(h)}(x,v)$.

We check that this path is within the BFS depth for scale $d$. Since $\dist_G^{(h)}(x,v)\leq 2d$, \eqref{eq:hop-bounded-rounding} gives
\[
    w_d(P)
    \leq
    \frac{w(P)+\epsilon d}{\mu_d}
    =
    \frac{\dist_G^{(h)}(x,v)+\epsilon d}{\mu_d}
    \leq
    \frac{(2+\epsilon)d}{\epsilon d/h}
    =
    \frac{(2+\epsilon)h}{\epsilon}
    \leq
    R,
\]
where the last inequality uses $\epsilon\leq 1$ and the definition of $R$.
Therefore the BFS for scale $d$ reaches $v$, and since BFS computes the
shortest distance in the subdivided graph up to depth $R$, we have
\[
    D_d(v)\leq w_d(P).
\]
Multiplying by $\mu_d$ and using the bound above gives
\[
    \mu_dD_d(v)
    \leq
    \mu_dw_d(P)
    \leq
    (1+\epsilon)\dist_G^{(h)}(x,v).
\]

After running this BFS procedure for all scale guesses $d$, the algorithm defines $\tilde{\dist}(x,v)$ to be the minimum value of $\mu_dD_d(v)$ over all scales for which $D_d(v)$ is finite. Therefore,
\[
    \tilde{\dist}(x,v)
    \leq
    (1+\epsilon)\dist_G^{(h)}(x,v).
\]
On the other hand, for every scale $d$, the value $\mu_dD_d(v)$ is the rounded weight of some actual $x$-$v$ path, and rounded edge weights only increase original edge weights. Hence $\mu_dD_d(v)\geq \dist_G(x,v)$ for every finite label, and so
\[
    \tilde{\dist}(x,v)\geq \dist_G(x,v).
\]
Thus $\tilde{\dist}(x,\cdot)$ is a $(1+\epsilon)$-approximate $h$-hop-bounded SSSP distance estimate.

\paragraph{Tree-likeness.}
For each node $v\neq x$ with finite final estimate, let $d(v)$ be any scale $d$ such that $\tilde{\dist}(x,v) = \mu_dD_d(v)$. In the BFS for scale $d(v)$, suppose the BFS wave first reaches $v$ by crossing the original edge $\{u,v\}$ from $u$ to $v$. We set $p(v)=u$. Then
\[
    D_{d(v)}(u)
    \leq
    D_{d(v)}(v)-w_{d(v)}(u,v).
\]
Multiplying by $\mu_{d(v)}$ gives
\[
    \mu_{d(v)}D_{d(v)}(u)
    \leq
    \mu_{d(v)}D_{d(v)}(v)-\mu_{d(v)}w_{d(v)}(u,v)
    \leq
    \mu_{d(v)}D_{d(v)}(v)-w(u,v),
\]
where the last inequality uses $\mu_{d(v)}w_{d(v)}(u,v)\geq w(u,v)$. Since the final estimate of $u$ is the minimum over all scales,
\[
    \tilde{\dist}(x,u)\leq \mu_{d(v)}D_{d(v)}(u).
\]
Therefore,
\[
    \tilde{\dist}(x,p(v))
    \leq
    \tilde{\dist}(x,v)-w(p(v),v),
\]
which is the desired tree-like property.

\paragraph{Tree height.}
The parent pointers define a tree rooted at $x$, since the estimate strictly decreases along every parent edge. We bound its height as follows. Consider any parent chain. Fix one scale $d$. Among the nodes on the chain whose final estimate is attained at scale $d$, the scaled labels $D_d(\cdot)$ strictly decrease along the chain, because the final estimates strictly decrease and the scaling factor $\mu_d$ is fixed. All such labels are integers between $0$ and $R$. Hence each scale appears at most $R+1$ times on the chain.

There are $O(\log n)$ scales, and $R\in O(h/\epsilon)$. Thus every parent chain has at most $O(R\log n)
    =
    O((h/\epsilon)\log n)$
 nodes. Therefore, $T$ has height $O((h/\epsilon)\log n)$.

\paragraph{Complexity.}
For each scale $d$, the bounded-depth BFS takes $O(h/\epsilon)$ rounds and has congestion $O(1)$: each directed edge forwards the BFS wave at most once for that scale. Since there are $O(\log n)$ scales, performing all BFS procedures sequentially takes
\[
    O((h/\epsilon)\log n)
\]
rounds and has total congestion $O(\log n)$.
\end{proof}

In the proof above, we do not attempt to optimize the tree height or the round complexity. For example, a simple pipelining argument can improve the round complexity to $O(h/\epsilon+\log n)$, but this improvement is not important for our application.

\subsection{Short-Hop Cycles}\label{subsect:short-hop-cycles}

We now handle the case where there exists a minimum weight cycle with few edges. We use the same MPX low-diameter decomposition and cycle detection procedure as in \Cref{subsect:impl}, with two changes. First, for each guessed parameter $d$, we slightly perturb the edge weights by adding a small amount to every edge. Under the promise that some minimum weight cycle has at most $h$ edges, this changes $\OPT$ by only a negligible factor. Second, we implement the required approximate SSSP computations using the hop-bounded primitive of \Cref{lem:tree-like-hop-bounded-sssp}. The perturbation ensures that the shortest paths relevant to these SSSP computations have few hops. In the proof below, we focus on these differences and avoid repeating the parts of the algorithm that were already discussed in detail in \Cref{sec:upperbounds}.

\begin{lemma}[Short-hop cycles]\label{lem:short-hop-cycles}
Suppose there exists a minimum weight cycle with at most $h$ edges. For every real number $k\geq 1$, $(k+1)$-$\apxMWC$ can be solved with high probability in
\[
    \tilde{O}\left(n^{1/k}+h+D\right)
\]
rounds in the $\CONGEST$ model.
\end{lemma}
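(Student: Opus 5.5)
The plan is to reuse the pipeline of \Cref{sec:upperbounds} with $S=V$, changing two things. We perturb the weights, and we replace the general approximate SSSP routine by the hop-bounded primitive of \Cref{lem:tree-like-hop-bounded-sssp}.

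\textbf{Perturbation.} Fix a candidate value $d$ from the geometric sequence of \Cref{subsect:reduction}. Define $w'(e)=w(e)+\mu$ with $\mu=\eta\cdot 2d/h$, where $\eta=1/\log n$. After scaling by a polynomial, these weights remain polynomially bounded integers.

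Let $C^\star$ be a minimum weight cycle with at most $h$ edges. Then $w'(C^\star)\leq \OPT+\eta\cdot 2d$. Every cycle's weight only increases, so $\OPT'$ lies within a $1+O(1/\log n)$ factor of $\OPT$ whenever $2d\approx\OPT$. Also, any cycle found has true weight at most its $w'$-weight. So it suffices to approximate $\OPT'$ in $G'=(V,E,w')$, where $C^\star$ is still (near-)optimal.

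\textbf{Hop bound.} The key observation is that in $G'$ every path of weight at most $W$ has at most $W/\mu$ edges. All distances relevant to the analysis of $\LDD$ are at most $2\delta_{\max}=O(kd)$ in $G'^+$. This covers the distances in \Cref{lem:safe-preserves-cluster} and \Cref{lem:safe-nontree-edge-gives-cycle}, namely $x$ to nodes $v$ with $\dist(s,v)-\hat\delta_s\le 0$. The corresponding paths therefore have $h'=O(kd/\mu)+1=O(kh\log n)=\tilde O(h)$ hops, using $k\in O(\log n)$.

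We then run \Cref{lem:tree-like-hop-bounded-sssp} from the super source $x$ in $G'^+$ with hop bound $h'$. The virtual edges are simulated locally, since each $s$ starts its wave with delay given by its virtual edge. Next, I re-check that the proofs of \Cref{lem:safe-preserves-cluster} and \Cref{lem:safe-nontree-edge-gives-cycle} only use two facts. The first is the lower bound $\tilde\dist\ge\dist$. The second is the upper bound $(1+\epsilon)\dist$ for the specific nodes $v$ with $D_s(v)\le 2\delta_{\max}$. For those nodes $\dist=\dist^{(h')}$, so the hop-bounded estimate suffices. The tree-likeness and the parent pointers come directly from the lemma. Nodes with infinite estimate are simply unclustered, which is harmless. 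The cluster-center broadcast step and the recovery of the $u$-$v$ tree path are run as BFS-like passes over the tree $T$. The height of $T$ is $\tilde O(h)$ by \Cref{lem:tree-like-hop-bounded-sssp}, so these passes take $\tilde O(h)$ rounds with $\tilde O(1)$ congestion. Hence \Cref{lem:ldd-cycle-detection-short} holds verbatim for the perturbed instance.

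\textbf{Complexity.} Each execution has congestion $\tilde O(1)$ and dilation $\tilde O(h)$. We need $\tilde O(n^{1/k})$ repetitions in total over all $O(\log^2 n)$ candidate values of $d$. By \Cref{thm: dilation and congestion}, the whole collection runs in $\tilde O(n^{1/k}+h)$ rounds. Global aggregation and the final broadcast add $O(D)$ rounds. As before, the $1+O(1/\log n)$ loss is absorbed by shifting $k$ slightly.

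\textbf{Main obstacle.} The delicate step is confirming that hop-bounded estimates cannot misassign the relevant nodes. The concern is that the bound $\tilde\dist\le(1+\epsilon)\dist^{(h')}$ might fail for some node on the decisive comparisons in \Cref{lem:safe-preserves-cluster}. It does not, for two reasons. The competing centers $s'$ only require the lower bound $\tilde\dist\ge\dist$. The intended center's path has weight at most $2\delta_{\max}$, which in $G'$ forces at most $h'$ hops.
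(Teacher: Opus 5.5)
Your proposal is correct and follows essentially the same route as the paper. You perturb each edge by a small multiple of $2d/h$ and observe that the relevant shortest paths from the super source then have $\tilde O(h)$ hops. You substitute the tree-like hop-bounded primitive, run the auxiliary steps over $T$ (whose height is $\tilde O(h)$), and schedule $\tilde O(n^{1/k})$ executions of congestion $\tilde O(1)$ and dilation $\tilde O(h)$ before an $O(D)$ aggregation.

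The differences are minor.
\begin{itemize}
\item The paper perturbs by $\epsilon\cdot 2d/h$ with $\epsilon=1/(k\log^2 n)$, so the shift is $o(1/\log n)$ and the candidate-$d$ argument of \Cref{subsect:reduction} carries over unchanged. Your $\Theta(1/\log n)$ shift is comparable to the grid spacing. It therefore needs a somewhat larger constant in the hypothesis $2d(1-O(1/\log n))\le\OPT^{(d)}$, which is harmless, but you should state that check rather than writing only ``whenever $2d\approx\OPT$''.
\item Since $S=V$, every node has its own virtual edge of weight at most $2\delta_{\max}$. The paper therefore gets the hop bound for all nodes at once, instead of re-checking the safety lemmas only for the relevant nodes as you do.
\end{itemize}
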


\begin{proof}
As before, it suffices to consider $1\leq k\in O(\log n)$, and we use the same parameter
\[
    \epsilon=\frac{1}{k\log^2 n}.
\]
We use the same guessing and repetition framework as in \Cref{subsect:reduction}. Thus, over all candidate values of $d$, we perform $\tilde{O}(n^{1/k})$ independent executions of the MPX low-diameter decomposition and the cycle detection step. Compared with \Cref{subsect:impl}, the execution for a candidate $d$ is run on a perturbed graph, and each approximate SSSP computation is implemented using \Cref{lem:tree-like-hop-bounded-sssp}.

\paragraph{Perturbing the weights.}
Fix a candidate value $d$. Define
\[
    \tau_d=\epsilon\cdot\frac{2d}{h},
\]
and let $G^{(d)}$ be the graph obtained from $G$ by replacing each edge weight $w(e)$ by
\[
    w^{(d)}(e)=w(e)+\tau_d.
\]
Let $\OPT^{(d)}$ be the minimum cycle weight in $G^{(d)}$.

Let $C^\star$ be a minimum weight cycle in $G$ with at most $h$ edges. Then
\[
    w^{(d)}(C^\star)
    =
    w(C^\star)+\tau_d |E(C^\star)|
    \leq
    \OPT+h\tau_d
    =
    \OPT+2\epsilon d.
\]
Hence, for every candidate $d$ with $2d\in \Theta(\OPT)$,
\[
    \OPT
    \leq
    \OPT^{(d)}
    \leq
    (1+O(\epsilon))\OPT.
\]
Thus, for the relevant candidates, the perturbation changes the optimum by only a $1+O(\epsilon)$ factor.

Recall from \Cref{subsect:reduction} that the admissible interval for $2d$ has multiplicative width $1+\Omega(1/\log n)$. Since $O(\epsilon) \subseteq o(1/\log n)$, the perturbation changes the optimum by much less than this width. Therefore, the same argument still guarantees that some candidate $d$ satisfies the hypotheses of \Cref{lem:ldd-cycle-detection-short} with respect to the perturbed graph $G^{(d)}$.

\paragraph{Hop-bounded SSSP implementation.}
We next explain why the approximate SSSP computations used in \Cref{subsect:impl} can be made hop-bounded after the perturbation.

Consider the augmented graph used to implement the MPX low-diameter decomposition on $G^{(d)}$: we add a virtual super source $x$ and connect it to each center $s \in S=V$ by a virtual edge whose weight is at most $2\delta_{\max}$. Therefore, for every $v \in V$, every shortest $x$-$v$ path in the augmented graph has total weight at most $2\delta_{\max}$. On the other hand, every virtual edge has weight at least $\delta_{\max}$. Therefore, the part of any such shortest path lying in $G^{(d)}$ has weight at most $\delta_{\max}$.

Every edge of $G^{(d)}$ has weight at least $\tau_d$. Thus the part of every shortest $x$-$v$ path lying in $G^{(d)}$ has at most
\[
    \frac{\delta_{\max}}{\tau_d}
    \in
    O\left(\frac{kh}{\epsilon}\right)
    \subseteq
    \tilde{O}(h)
\]
edges, where we use $\delta_{\max}=(k+1+k/\ln n)d$ and $k\in O(\log n)$. Therefore, the SSSP computation implementing the MPX low-diameter decomposition can be replaced by \Cref{lem:tree-like-hop-bounded-sssp}, with hop parameter $\tilde{O}(h)$ and accuracy $\epsilon$.

As discussed in \Cref{subsect:impl}, except for the final global aggregation and broadcasting steps, all remaining tasks reduce to approximate SSSP computations, which are performed either in the augmented graph above or inside the approximate SSSP tree $T$, or one of its subtrees. By \Cref{lem:tree-like-hop-bounded-sssp}, the tree $T$ has height $\tilde{O}(h)$. Hence all of these computations can also be implemented using \Cref{lem:tree-like-hop-bounded-sssp} with hop parameter $\tilde{O}(h)$ and accuracy $\epsilon$.

\paragraph{Approximation guarantee.}
For some candidate value $d$, \Cref{lem:ldd-cycle-detection-short} applied to $G^{(d)}$ shows that, with probability $\Omega(n^{-1/k})$, one execution finds a cycle $C$ whose perturbed weight is at most
\[
    (k+1)\left(1+O\left(\frac{1}{\log n}\right)\right)\OPT^{(d)}.
\]
Since $\epsilon=1/(k\log^2 n)$, we have
\[
    \OPT^{(d)}
    \leq
    (1+O(\epsilon))\OPT
    \leq
    \left(1+O\left(\frac{1}{\log n}\right)\right)\OPT.
\]
Moreover, the original weight of any cycle is at most its perturbed weight. Hence a successful execution returns a cycle whose original weight is at most
\[
    (k+1)\left(1+O\left(\frac{1}{\log n}\right)\right)\OPT.
\]
Repeating over all candidates for a total of $\tilde{O}(n^{1/k})$ independent executions boosts the success probability to $1 - 1/\poly(n)$. The extra factor $1+O(1/\log n)$ in the approximation ratio can be absorbed, without affecting the round complexity asymptotically, by running the algorithm with a parameter $k'\in k-\Theta((k+1)/\log n)$. Thus we obtain a $(k+1)$-approximation.

\paragraph{Round complexity.}
There are $\tilde{O}(n^{1/k})$ executions. In each execution, every approximate SSSP computation described above is implemented using \Cref{lem:tree-like-hop-bounded-sssp} with hop parameter $\tilde{O}(h)$, and hence has congestion $\tilde{O}(1)$ and dilation $\tilde{O}(h)$. Therefore, by \Cref{thm: dilation and congestion}, these computations can be executed using
\(
    \tilde{O}\left(n^{1/k}+h\right)
\)
rounds.
As in \Cref{subsect:impl}, each node keeps the best candidate cycle among all executions and incident edges. Aggregating these local minima over a BFS tree and broadcasting the selected value take $O(D)$ additional rounds. The total round complexity is therefore
\(\tilde{O}\left(n^{1/k}+h+D\right)\).
\end{proof}

\subsection{Sampling Skeleton Nodes}\label{subsect:sampling-skeleton-nodes}

Let $\alpha$ be a parameter. We sample each node independently with probability $\alpha/n$, call the sampled nodes \emph{skeleton nodes}, and write $S$ for the set of skeleton nodes.

The purpose of the sampling is to ensure that, on any sufficiently long cycle, consecutive skeleton nodes are not too far apart. Recall that, for $R=S\cap V(C)$, the nodes of $R$ partition a cycle $C$ into $R$-segments, namely maximal subpaths of $C$ whose endpoints are in $R$ and whose internal nodes are not in $R$. The lemma below shows that if $C$ has $h\in \Omega((n/\alpha)\log^2 n)$ edges, then all its $R$-segments have only $O((n/\alpha)\log n)$ edges with high probability. This is the property we need later: the possible antipodal segment that has to be patched is an $R$-segment, and hence it can be handled by hop-bounded SSSP.

\begin{lemma}[Skeleton nodes]\label{lem:skeleton-sampling}
Let $C$ be a fixed cycle with $h$ edges, and suppose
\[
    h\in \Omega\left(\frac{n \log^2 n}{\alpha}\right).
\]
Let $S$ be obtained by sampling each node independently with probability $\alpha/n$, and let $R=S\cap V(C)$. With high probability, the following hold.
\begin{description}
    \item[Non-empty intersection:] $R\neq \emptyset$.
    \item[Short segments:] Every $R$-segment of $C$ has at most $O((n/\alpha)\log n)$ edges.
    \item[Sample size:] $|S|\in \Theta(\alpha)$.
\end{description}
\end{lemma}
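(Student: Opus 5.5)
The plan is to prove the three properties separately with elementary probability bounds and then combine them by a union bound. Throughout, $c$ denotes the constant in the segment-length bound, chosen large enough to make the failure probability $n^{-\Omega(1)}$ with any desired constant in the exponent. We may assume $\alpha \geq \log n$, so that $|S|$ concentrates; in every use of this lemma $\alpha$ is polynomial in $n$. We may also assume $\alpha\le n$, since sampling probabilities are at most $1$. Then $h \le n$ forces $n\log^2 n/\alpha \le O(n)$, which is consistent with $\alpha \geq \log^2 n$ up to constants.

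\textbf{Sample size.} $|S|$ is a sum of $n$ independent Bernoulli$(\alpha/n)$ variables with mean $\alpha$. A multiplicative Chernoff bound gives $\Pr[|S|\notin[\alpha/2,2\alpha]]\le 2e^{-\Omega(\alpha)}$. This is $1/\poly(n)$ as soon as $\alpha\ge C\log n$, which holds because $\alpha \in \Omega(\log^2 n)$ follows from $h\le n$ and the hypothesis on $h$.

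\textbf{Short segments and non-empty intersection.} Let $L=\lceil c(n/\alpha)\ln n\rceil$. Since $h\ge L$, we can cover the cycle by windows. For each node $v$ of $C$, let $W_v$ be the set of the $L$ consecutive nodes of $C$ following $v$ in a fixed orientation. There are $h\le n$ such windows, and each has at least $\min(L,h)=L$ distinct nodes. Any single window satisfies
\[
\Pr[W_v\cap S=\emptyset]=(1-\alpha/n)^{L}\le e^{-\alpha L/n}\le n^{-c}.
\]
A union bound over the at most $n$ windows shows that, with probability at least $1-n^{1-c}$, every window of $L$ consecutive cycle nodes contains a skeleton node. In particular $R\neq\emptyset$. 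Moreover, an $R$-segment has no internal node in $R$. If some segment had more than $L+1$ edges, its internal nodes would contain a window of $L$ consecutive nodes disjoint from $S$, which is a contradiction. Hence every $R$-segment has at most $L+1\in O((n/\alpha)\log n)$ edges.

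A final union bound over the three failure events gives the lemma with high probability. There is no real obstacle here. The only point needing care is that the hypothesis $h\in\Omega(n\log^2 n/\alpha)$ guarantees both that windows of length $L$ fit inside the cycle, so the short-segment argument is meaningful, and that $\alpha$ is large enough for the Chernoff bound on $|S|$. The extra $\log n$ slack in $h$ beyond $L$ is not needed for this lemma itself, which only requires $h\ge L$. Presumably that slack is reserved for later use, for example to ensure $\Omega(\log n)$ skeleton nodes on $C$.
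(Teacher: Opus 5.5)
Your proposal is correct and follows essentially the same route as the paper. Both arguments take a union bound over the at most $n$ windows of $L=\Theta((n/\alpha)\log n)$ consecutive cycle nodes to get non-emptiness and short segments, then use $h\le n$ to force $\alpha\in\Omega(\log^2 n)$ so that a Chernoff bound gives $|S|\in\Theta(\alpha)$; your segment bound of $L+1$ edges is slightly looser than the paper's $L$, which is immaterial.
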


\begin{proof}
Let $L=c(n/\alpha)\ln n$ for a sufficiently large constant $c$. Since $h\in \Omega((n/\alpha)\log^2 n)$, we have $L<h$ for sufficiently large $n$. Consider any fixed block $B$ of $L$ consecutive nodes on $C$. The probability that $B$ contains no sampled node is
\[
    (1-\alpha/n)^L
    \leq
    e^{-\alpha L/n}
    =
    e^{-c\ln n}
    =
    n^{-c}.
\]
There are at most $n$ possible starting nodes for such a block, so by a union bound, with high probability, every block of $L$ consecutive nodes on $C$ contains a sampled node.

On this event, $R$ is non-empty, and no $R$-segment can contain more than $L$ edges. Hence every $R$-segment has $O((n/\alpha)\log n)$ edges.

It remains to bound $|S|$. We have $\mathbb{E}[|S|]=\alpha$. In the only regime where the lemma is relevant, $\alpha\in\Omega(\log^2 n)$; otherwise no cycle can satisfy the assumption $h\in\Omega((n/\alpha)\log^2 n)$, since $h\leq n$. Thus a Chernoff bound gives $|S|\in\Theta(\alpha)$ with high probability.
\end{proof}

\subsection{The Extended Cycle Detection Step}\label{subsect:extended-cycles}

We next describe the cycle detection procedure used in the long-hop case, for one fixed execution of $\LDD(G,S,k,d,\epsilon)$ with $S$ equal to the sampled skeleton nodes. The procedure extends the basic cycle detection step from \Cref{subsect:cycle} by adding a \emph{patching} step. 

The reason for this extension is the following. Let $C^\star$ be a minimum weight cycle, let $R=S\cap V(C^\star)$, and suppose $R$ is captured by the cluster centered at $s$. By \Cref{lem:capture-cycle-safe}, all nodes of $C^\star$, except possibly the internal nodes of the antipodal segment, are $\epsilon$-safe for $s$. Thus, if $C^\star$ contains an edge that is not in the local tree $T_s$ and whose endpoints are both $\epsilon$-safe, then the cycle detection step of \Cref{subsect:cycle} already succeeds by \Cref{lem:safe-nontree-edge-gives-cycle}. The remaining case is that every edge of $C^\star$ outside the antipodal segment is an edge of $T_s$. In this case, we patch the antipodal segment by finding an approximate patching path between its two skeleton endpoints.

\paragraph{The doubled graph $H_r$ for patching.}
We next define the auxiliary graph used for this patching step. Fix one execution of $\LDD(G,S,k,d,\epsilon)$, and let $T$ be the approximate SSSP tree produced by the implementation. For a skeleton node $r\in S$, let $s$ be the center of the cluster containing $r$, and let $T_s$ be the local tree of this cluster.

We define an auxiliary graph $H_r$ as follows. The graph $H_r$ has two copies of the original graph $G$, denoted by $G^{(0)}$ and $G^{(1)}$. For each edge $\{u,v\}\in E$, both copies contain the edges $\{u^{(0)},v^{(0)}\}$ and $\{u^{(1)},v^{(1)}\}$ with weight $w(u,v)$. In addition, for each edge $\{u,v\}\in E$ that is not an edge of the local tree $T_s$, we add the two crossing edges
\[
    \{u^{(0)},v^{(1)}\}
    \qquad\text{and}\qquad
    \{v^{(0)},u^{(1)}\},
\]
again with weight $w(u,v)$.

Thus, any path in $H_r$ from a node in $G^{(0)}$ to a node in $G^{(1)}$ must use at least one edge that is not in $T_s$. This is exactly what the patching step needs: the patching path is forced to use a non-tree edge with respect to $T_s$, and therefore cannot simply reproduce the tree path inside $T_s$.

\paragraph{Additional candidates from patching.}
Let
\[
    L_{\mathrm{seg}}=c_{\mathrm{seg}}\cdot \frac{n \log n}{\alpha}
\]
for a sufficiently large constant $c_{\mathrm{seg}}$, so that $L_{\mathrm{seg}}$ is at least the segment hop bound from \Cref{lem:skeleton-sampling}. For every skeleton node $r\in S$, we run $L_{\mathrm{seg}}$-hop $(1+\epsilon)$-approximate SSSP from $r^{(0)}$ in $H_r$. For every skeleton node $r'\in S \setminus \{s\}$ such that both $r$ and $r'$ belong to the same cluster, let
\[
    \tilde{\dist}_{r}^{\mathrm{patch}}(r,r')
\]
denote the resulting distance estimate from $r^{(0)}$ to $(r')^{(1)}$.

Suppose $s$ is the center of the cluster that contain both $r$ and $r'$. Together with the tree path between $r$ and $r'$ in $T_s$, the above distance estimate gives a candidate cycle. Its weight estimate is
\[
    \tilde{\dist}(s,r)
    +
    \tilde{\dist}(s,r')
    +
    \tilde{\dist}_{r}^{\mathrm{patch}}(r,r').
\]
The first two terms upper-bound the weight of the tree path between $r$ and $r'$ in $T_s$, and the last term upper-bound the weight of a path from $r$ to $r'$ that uses at least one edge outside $T_s$. Together they form a closed walk, and since all edge weights are positive, this closed walk contains a simple cycle of no larger weight.

The extended cycle detection step runs the cycle detection step from \Cref{subsect:cycle}, together with computing all patching candidates defined above. As before, after all candidates are generated, the algorithm keeps the candidate of minimum weight estimate.

The following lemma extends \Cref{lem:safe-nontree-edge-gives-cycle} by showing that, conditioned on $\capture(R)$, the extended cycle detection step is guaranteed to return a cycle of small weight.

\begin{lemma}[$\capture(R) \rightarrow$ a cycle of small weight]\label{lem:patching-under-capture}
Let $C^\star$ be a minimum weight cycle, let $S\subseteq V$, and let $R=S\cap V(C^\star)$. Suppose $R\neq \emptyset$ and every $R$-segment of $C^\star$ has at most $L_{\mathrm{seg}}$ edges. Run $\LDD(G,S,k,d,\epsilon)$, followed by the extended cycle detection step.
Assume that
\[
    w(C^\star)\leq 2(d-2\epsilon\delta_{\max}),
\]
and that $R$ is captured by the cluster centered at $s$. Then the extended cycle detection step finds a cycle whose weight estimate is at most
\[
    (1+O(\epsilon))\,2\delta_{\max}.
\]
\end{lemma}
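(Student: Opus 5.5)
The proof splits into two cases according to whether $C^\star$ has a non-tree edge with both endpoints $\epsilon$-safe.

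Let $q\in R$ be a node closest to $s$. By \Cref{lem:capture-cycle-safe}, every node of $C^\star$ is $\epsilon$-safe for $s$, except possibly the internal nodes of the antipodal segment.

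\emph{Case 1.} Suppose some edge of $C^\star$ lies outside $T_s$ and both its endpoints are $\epsilon$-safe. Then \Cref{lem:safe-nontree-edge-gives-cycle} already gives a candidate of weight estimate at most $2\delta_{\max}$.

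\emph{Case 2.} Otherwise, every edge of $C^\star$ with both endpoints $\epsilon$-safe lies in $T_s$. If there were no antipodal segment, every node would be safe, so every edge of $C^\star$ would lie in the tree $T_s$. That is impossible. Hence the antipodal segment $P$ exists, with endpoints $r,r'\in R$. We may assume $r\neq r'$, since if $R=\{r\}$ the argument below applies verbatim to $r=r'$. Every edge of $C^\star$ outside $P$ lies in $T_s$, so $C^\star\setminus P$ is a tree path from $r$ to $r'$. Since $T_s$ cannot contain the whole cycle, $P$ contains an edge outside $T_s$.

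Now use the patching SSSP from $r^{(0)}$ in $H_r$. First, $r$ and $r'$ are endpoints of visible segments, hence $\epsilon$-safe. By the argument of \eqref{eq:uy-negative-shift} together with \Cref{lem:safe-preserves-cluster}, both lie in the cluster of $s$. The patching candidate is defined only for $r'\neq s$. If $r'=s$, swap the roles of $r$ and $r'$, running the patch from $r'$. Both cannot equal $s$ unless $r=r'$.

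Next, lift $P$ to $H_r$. Traverse $P$ from $r$ inside $G^{(0)}$ until its first non-tree edge, cross to $G^{(1)}$ on that edge, and continue in $G^{(1)}$, where both tree and non-tree edges are present. This gives an $r^{(0)}$–$(r')^{(1)}$ path of weight $w(P)$ with at most $L_{\mathrm{seg}}$ hops. Therefore
\[
\tilde{\dist}_r^{\mathrm{patch}}(r,r')\leq (1+\epsilon)w(P).
\]
By \eqref{eq:approx-distance-bound}, for $y\in\{r,r'\}$ we have $\tilde{\dist}(s,y)<\dist_G(s,y)+2\epsilon\delta_{\max}$. The candidate's weight estimate is therefore less than
\[
\dist_G(s,r)+\dist_G(s,r')+(1+\epsilon)w(P)+4\epsilon\delta_{\max}.
\]

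It remains to bound this expression. Write $Q$ and $Q'$ for the two arcs of $C^\star\setminus P$ from $q$ to $r$ and from $q$ to $r'$. Then $\dist_G(s,r)\leq\dist_G(s,q)+w(Q)$ and $\dist_G(s,r')\leq \dist_G(s,q)+w(Q')$, with $w(Q)+w(Q')+w(P)=w(C^\star)$. Combining,
\[
\text{estimate}\leq 2\dist_G(s,q)+(1+\epsilon)w(C^\star)+4\epsilon\delta_{\max}.
\]
\Cref{prop:capture-close} gives $\dist_G(s,q)<\delta_{\max}-d$, and by assumption $w(C^\star)\leq 2d-4\epsilon\delta_{\max}$. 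Hence the estimate is at most
\[
2\delta_{\max}+\epsilon w(C^\star)\leq(1+O(\epsilon))\,2\delta_{\max}.
\]

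The main obstacle is the bookkeeping in Case 2. One must justify that $C^\star\setminus P$ is exactly the tree path and that $P$ really contains a non-tree edge, so that the lift to $H_r$ is legitimate. One must also handle the degenerate situations $r=r'$ and $r'=s$.
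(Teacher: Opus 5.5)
Your proposal is correct and takes essentially the same route as the paper. A safe non-tree edge is handled by \Cref{lem:safe-nontree-edge-gives-cycle}; otherwise you lift the antipodal segment into the doubled graph and bound the patched candidate by $2\dist_G(s,q)+(1+\epsilon)w(C^\star)+4\epsilon\delta_{\max}\le(1+\epsilon)\,2\delta_{\max}$ via \Cref{prop:capture-close}. Your explicit checks are slightly more careful than the paper's: that the antipodal segment exists, that its endpoints lie in the cluster of $s$ (so $H_r$ is built from $T_s$), and the swap when $r'=s$. However, your claim that the case $R=\{r\}$ ``applies verbatim'' fails when $r=s$, since the convention $r'\in S\setminus\{s\}$ then excludes the patching candidate. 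The paper does not address this degenerate case either, and it is harmless in the long-hop application, where $|R|\ge 2$.
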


\begin{proof}
By \Cref{lem:capture-cycle-safe}, all nodes of $C^\star$, except possibly the internal nodes of the antipodal segment, are $\epsilon$-safe for $s$.

\paragraph{The easy case: a safe non-tree edge.}
First suppose $C^\star$ contains an edge $e=\{u,v\}$ such that both endpoints $u$ and $v$ are $\epsilon$-safe for $s$, and $e$ is not an edge of the local tree $T_s$. Then the cycle detection step of \Cref{subsect:cycle} succeeds by \Cref{lem:safe-nontree-edge-gives-cycle}, and finds a cycle whose weight estimate is at most $2\delta_{\max}$. Thus we may assume from now on that no such edge exists.

\paragraph{Structure of the remaining case.}
Let $P$ be the antipodal segment, and let $a,b\in R$ be its endpoints. Let $Q=C^\star\setminus P$ be the complementary $a$-$b$ path on $C^\star$. Every edge of $Q$ has both endpoints $\epsilon$-safe for $s$, so by our assumption every edge of $Q$ is an edge of $T_s$. In particular, $Q$ is the tree path between $a$ and $b$ in $T_s$.

The segment $P$ must contain at least one edge that is not in $T_s$; otherwise all edges of the cycle $C^\star$ would lie in the tree $T_s$, which is impossible.

\paragraph{Feasibility of the patch.}
The segment $P$ gives a feasible path from $a^{(0)}$ to $b^{(1)}$ in the doubled graph $H_a$: we traverse $P$ in the first copy until the first edge of $P$ not in $T_s$, use the corresponding crossing edge, and then continue in the second copy. This path has exactly the same weight as $P$ and at most $L_{\mathrm{seg}}$ hops. Hence the $L_{\mathrm{seg}}$-hop $(1+\epsilon)$-approximate SSSP from $a^{(0)}$ in $H_a$ yields
\[
    \tilde{\dist}_{a}^{\mathrm{patch}}(a,b)
    \leq
    (1+\epsilon)w(P).
\]
We also need to ensure that the patching candidate gives a simple cycle rather than  duplicating the same tree path. This is exactly why we use the doubled graph. Every path from $a^{(0)}$ to $b^{(1)}$ in $H_a$ must use at least one crossing edge, and every crossing edge corresponds to an edge outside $T_s$. Thus the projection of the patching path to $G$ contains at least one edge outside $T_s$. In particular, it is not simply the tree path between $a$ and $b$ in $T_s$. Therefore, together with the tree path between $a$ and $b$ in $T_s$, it forms a closed walk that contains a simple cycle. 

\paragraph{Weight estimate.}
It remains to bound the weight estimate of the cycle resulting from to patching $a$ and $b$. Let $q\in R$ be a node minimizing $\dist(s,R)$. By \Cref{prop:capture-close},
\[
    \dist(s,q)<\delta_{\max}-d.
\]
The path $Q$ contains $q$, and hence
\[
    \dist(s,a)+\dist(s,b)
    \leq
    2\dist(s,q)+w(Q)
    <
    2(\delta_{\max}-d)+w(Q).
\]
Moreover, as in the proof of \Cref{lem:safe-nontree-edge-gives-cycle}, the local estimates for the $\epsilon$-safe nodes $a$ and $b$ satisfy
\[
    \tilde{\dist}(s,a)+\tilde{\dist}(s,b)
    \leq
    \dist(s,a)+\dist(s,b)+O(\epsilon\delta_{\max}).
\]
Therefore the cycle due to patching $a$ and $b$ has weight estimate at most
\[
\begin{aligned}
    &\tilde{\dist}(s,a)
    +
    \tilde{\dist}(s,b)
    +
    \tilde{\dist}_{a}^{\mathrm{patch}}(a,b)  \\
    &\qquad\leq
    2(\delta_{\max}-d)+w(Q)+(1+\epsilon)w(P)+O(\epsilon\delta_{\max}) \\
    &\qquad=
    2\delta_{\max}-2d+w(C^\star)+\epsilon w(P)+O(\epsilon\delta_{\max}).
\end{aligned}
\]
Using $w(C^\star)\leq 2(d-2\epsilon\delta_{\max})$ and $w(P)\leq w(C^\star)\in O(\delta_{\max})$, this is at most
\[
    (1+O(\epsilon))\,2\delta_{\max}. \qedhere
\]
\end{proof}

Next, we use \Cref{lem:patching-under-capture} to prove a long-hop analogue of \Cref{lem:ldd-cycle-detection-short}, establishing the approximation ratio and success probability of the extended cycle detection step for a single execution of $\LDD(G,S,k,d,\epsilon)$.
 
\begin{lemma}[Approximation ratio and success probability for long-hop cycles]\label{lem:ldd-cycle-detection-long}
Let $C^\star$ be a minimum weight cycle with $h$ edges, and suppose
\[
    h\in \Omega\left(\frac{n \log^2 n}{\alpha}\right).
\]
Assume $\alpha\in n^{\Omega(1)}$ and $\epsilon\in O(1/\log n)$. Sample each node independently with probability $\alpha/n$, and let $S$ be the resulting set of skeleton nodes. Run $\LDD(G,S,k,d,\epsilon)$, followed by the extended cycle detection step. Suppose
\[
    2d\left(1-O\left(\frac{1}{\log n}\right)\right)
    \leq
    \OPT
    \leq
    2(d-2\epsilon\delta_{\max}).
\]
With probability $\Omega(\alpha^{-1/k})$, the extended cycle detection step finds a cycle whose weight estimate is at most
\[
    (k+1)\left(1+O\left(\frac{1}{\log n}\right)\right)\OPT.
\]
\end{lemma}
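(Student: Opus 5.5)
The plan is to mirror the proof of \Cref{lem:ldd-cycle-detection-short}, with \Cref{lem:patching-under-capture} playing the role of \Cref{lem:safe-nontree-edge-gives-cycle}. The randomness has two independent sources: the sampling of $S$ and the shifts $\{\delta_s\}_{s\in S}$. I will first condition on the sampling, then use the shifts.

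\textbf{Step 1: fix a good skeleton set.} Since $h\in\Omega((n\log^2 n)/\alpha)$, \Cref{lem:skeleton-sampling} shows that with high probability three things hold simultaneously. First, $R=S\cap V(C^\star)\neq\emptyset$. Second, every $R$-segment of $C^\star$ has at most $O((n/\alpha)\log n)\leq L_{\mathrm{seg}}$ edges, for $c_{\mathrm{seg}}$ large enough. Third, $|S|\in\Theta(\alpha)$. Fix any such $S$; the remaining probability comes only from the shifts.

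\textbf{Step 2: the capture event.} Since $R\subseteq S$ is non-empty, \Cref{lem:capture-probability} gives
\[
\Pr[\capture(R)]\geq(1-1/e)|S|^{-1/k}.
\]
Because $|S|\in\Theta(\alpha)$ and $k\geq1$, this is $\Omega(\alpha^{-1/k})$. Multiplying by the $1-1/\poly(n)$ probability of Step 1 keeps the bound $\Omega(\alpha^{-1/k})$. This uses that $\alpha^{-1/k}\geq 1/\alpha\geq 1/n$ dominates the failure probability, which is fine for a sufficiently large polynomial.

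\textbf{Step 3: apply the patching lemma.} On $\capture(R)$, the hypothesis $\OPT=w(C^\star)\leq 2(d-2\epsilon\delta_{\max})$ together with the segment bound lets me invoke \Cref{lem:patching-under-capture}. It yields a cycle with weight estimate at most $(1+O(\epsilon))\,2\delta_{\max}$.

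\textbf{Step 4: the arithmetic.} This is the only step needing some care. Here
\[
\delta_{\max}=\left(k+1+\frac{k}{\ln|S|}\right)d .
\]
With $|S|\in\Theta(\alpha)$ and $\alpha\in n^{\Omega(1)}$, we get $\ln|S|\in\Theta(\log n)$, so $\delta_{\max}\leq(k+1)(1+O(1/\log n))d$. This is exactly why the assumption $\alpha\in n^{\Omega(1)}$ is needed; with a tiny $|S|$ the $k/\ln|S|$ term would not be negligible. Next, $\epsilon\in O(1/\log n)$ turns $1+O(\epsilon)$ into $1+O(1/\log n)$. Finally, the left assumption $2d(1-O(1/\log n))\leq\OPT$ gives $2d\leq(1+O(1/\log n))\OPT$. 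Combining these three bounds, the weight estimate is at most $(k+1)(1+O(1/\log n))\OPT$, as claimed. I do not expect a genuine obstacle beyond tracking these error terms.
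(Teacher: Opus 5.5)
Your proposal is correct and follows essentially the same route as the paper: condition on the high-probability properties of $S$ from \Cref{lem:skeleton-sampling}, apply \Cref{lem:capture-probability} to get $\Omega(\alpha^{-1/k})$, invoke \Cref{lem:patching-under-capture}, and use $\ln|S|\in\Omega(\log n)$ (from $\alpha\in n^{\Omega(1)}$) to bound $\delta_{\max}\leq (k+1)(1+O(1/\log n))d$. One small note: the remark in Step 2 that $\alpha^{-1/k}$ dominates the failure probability is unnecessary. The shifts are independent of the sampling, so you simply multiply the $1-1/\poly(n)$ probability of a good $S$ by the conditional capture probability.
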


\begin{proof} 
Let $R=S\cap V(C^\star)$. By \Cref{lem:skeleton-sampling}, with high probability, $R\neq\emptyset$, every $R$-segment of $C^\star$ has at most $L_{\mathrm{seg}}\in O((n/\alpha)\log n)$ edges, and $|S|\in\Theta(\alpha)$. We condition on this event for the rest of the proof. \Cref{lem:capture-probability} yields
\[
    \Pr[\capture(R)]
    \geq
    (1-1/e)|S|^{-1/k}
    =
    \Omega(\alpha^{-1/k}).
\]
Condition on the event $\capture(R)$, and let $s$ be the center of the cluster that captures $R$. 
Since
\[
    w(C^\star)=\OPT\leq 2(d-2\epsilon\delta_{\max}),
\]
all assumptions of \Cref{lem:patching-under-capture} are satisfied. Hence the extended cycle detection step finds a cycle whose weight estimate is at most
\[
    (1+O(\epsilon))\,2\delta_{\max}.
\]

 Since $|S|\in\Theta(\alpha)$ and $\alpha\in n^{\Omega(1)}$, we have $\ln |S|\in \Omega(\log n)$ and  $\delta_{\max} \in (k+1)(1+O(1/\log n))d$.  Together with
\[
    2d
    \leq
    \left(1+O\left(\frac{1}{\log n}\right)\right)\OPT,
\]
we have
\[
    (1+O(\epsilon))\,2\delta_{\max}
    \leq
    (k+1)\left(1+O\left(\frac{1}{\log n}\right)\right)\OPT.
\]
Since the guarantee of \Cref{lem:skeleton-sampling} holds with high probability and the capture event occurs with probability $\Omega(\alpha^{-1/k})$ conditioned on it, the overall success probability is $\Omega(\alpha^{-1/k})$.
\end{proof}

\subsection{Long-Hop Cycles}\label{subsect:long-hop-cycles}

We now combine the ingredients above to handle the long-hop case.  

\begin{lemma}[Long-hop cycles]\label{lem:long-hop-cycles}
Let $\alpha\in n^{\Omega(1)}$. Suppose there exists a minimum weight cycle with
\[
    h\in \Omega\left(\frac{n \log^2 n}{\alpha}\right)
\]
edges. For every real number $k\geq 1$, $(k+1)$-$\apxMWC$ can be solved with high probability in
\[
    \tilde{O}\left(\alpha^{1+1/k}+\frac{n}{\alpha}+D\right)
\]
rounds in the $\CONGEST$ model.
\end{lemma}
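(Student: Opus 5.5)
The plan follows the template of \Cref{lem:short-hop-cycles}, replacing \Cref{lem:ldd-cycle-detection-short} with \Cref{lem:ldd-cycle-detection-long}. As before, we may assume $1\leq k\in O(\log n)$ and set $\epsilon=1/(k\log^2 n)$. We then iterate over the $O(\log^2 n)$ candidate values of $d$ from \Cref{subsect:reduction}. For each candidate we perform $\Theta(\alpha^{1/k}\log n)$ independent executions. Each execution samples a fresh skeleton set $S$ with rate $\alpha/n$, runs $\LDD(G,S,k,d,\epsilon)$, and then runs the extended cycle detection step. For the good candidate $d$, \Cref{lem:ldd-cycle-detection-long} shows that each execution succeeds with probability $\Omega(\alpha^{-1/k})$. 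Hence with high probability some execution returns a cycle whose weight estimate is at most $(k+1)(1+O(1/\log n))\OPT$. Every weight estimate upper-bounds the weight of a genuine closed walk containing a simple cycle, by the discussion in \Cref{subsect:cycle,subsect:extended-cycles}. So taking the global minimum over all candidates and aggregating along a BFS tree in $O(D)$ rounds gives the approximation guarantee. The factor $1+O(1/\log n)$ is removed as before by running with $k'=k-\Theta((k+1)/\log n)$; this changes $\alpha^{1/k}$ only by a constant factor.

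For the round complexity, we count congestion and dilation per execution and then invoke \Cref{thm: dilation and congestion}. The decomposition itself is a single $(1+\epsilon)$-approximate SSSP computation from the super source in $G^+$. We implement it with the algorithm of \citet{becker2021near} and make it tree-like via \citet{RozhonHMGZ23}, which gives congestion $\tilde{O}(\alpha)$ and dilation $\tilde{O}(n/\alpha+\alpha+D)$. The auxiliary SSSP computations from \Cref{subsect:impl} are used to broadcast $\ID(s)$ and $\tilde{\dist}(x,s)$ inside each cluster, and they have the same cost.

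The patching step runs, for each of the $\Theta(\alpha)$ skeleton nodes $r$, an $L_{\mathrm{seg}}$-hop SSSP in $H_r$ using \Cref{lem:tree-like-hop-bounded-sssp}, with $L_{\mathrm{seg}}=\tilde{O}(n/\alpha)$. Each such computation has congestion $O(\log n)$ and dilation $\tilde{O}(n/\alpha)$, provided $H_r$ can be simulated on $G$. This works because every node $v$ simulates both copies $v^{(0)},v^{(1)}$, and every edge of $H_r$ maps to an edge of $G$ that carries at most four copies. Whether an edge $\{u,v\}$ is a crossing edge depends only on whether it lies in $T_s$ for the cluster of $r$, and both endpoints know this after the broadcast step. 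In total, one execution has congestion $\tilde{O}(\alpha)$ and dilation $\tilde{O}(n/\alpha+\alpha+D)$. Over all $\tilde{O}(\alpha^{1/k})$ executions, the congestion is $\tilde{O}(\alpha^{1+1/k})$, so scheduling costs $\tilde{O}(\alpha^{1+1/k}+n/\alpha+D)$ rounds. Computing the patching candidates is local once every node knows its cluster center and $\tilde{\dist}(s,\cdot)$. Each $r'$ with the same center evaluates $\tilde{\dist}(s,r)+\tilde{\dist}(s,r')+\tilde{\dist}^{\mathrm{patch}}_r(r,r')$ and contributes it to the aggregation.

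The main obstacle is recovering the output cycle when the winning candidate comes from patching. We must mark the tree path between $r$ and $r'$ in $T_s$, which is done by the parent-pointer comparison of \Cref{subsect:impl}. We must also mark the patching path, by retracing parent pointers of the hop-bounded SSSP tree in $H_r$ from $(r')^{(1)}$. That tree has height $\tilde{O}(n/\alpha)$ by \Cref{lem:tree-like-hop-bounded-sssp}, so the retracing costs $\tilde{O}(n/\alpha)$ rounds. The union of the two paths is a closed walk, and extracting a simple cycle of no larger weight needs care. My first attempt is to avoid extraction entirely: rerun the single winning execution, store the hop-bounded SSSP tree for $r$, and take the cycle formed by the first crossing edge used together with tree paths. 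Failing that, I would output the edge set of the closed walk with multiplicities cancelled mod $2$. This set forms an even subgraph that is nonempty, because it contains a non-tree edge. It therefore contains a cycle of weight at most the walk weight, and a further $\tilde{O}(n/\alpha+D)$-round step would select one such cycle. Either way, the extra cost fits within the claimed bound.
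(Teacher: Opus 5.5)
Your outline follows the paper's proof of \Cref{lem:long-hop-cycles} essentially step by step: the same $\epsilon$, guessing of $d$, $\tilde{O}(\alpha^{1/k})$ executions certified by \Cref{lem:ldd-cycle-detection-long}, congestion $\tilde{O}(\alpha)$ per execution from approximate SSSP plus $\Theta(\alpha)$ hop-bounded patching searches, and one application of \Cref{thm: dilation and congestion}. Resampling $S$ in every execution instead of once is a harmless variation.

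The genuine gap is in the part the paper's proof skips and you try to fill: turning a winning patching candidate into an actual cycle. You claim the mod-$2$ reduction of the closed walk is nonempty ``because it contains a non-tree edge.'' This is false for $H_r$ as defined, because crossing there is optional. An edge $\{u,v\}\notin E(T_s)$ is lifted both to crossing edges and to edges inside each copy. Suppose $u$ lies on the $T_s$-path between $r$ and $r'$ and $v$ is off that path (e.g.\ in another cluster). Then the path $r^{(0)},\dots,u^{(0)},v^{(1)},u^{(1)},\dots,(r')^{(1)}$, following $T_s$ in both copies, is a legitimate and possibly shortest $r^{(0)}$-to-$(r')^{(1)}$ path in $H_r$. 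Its projection traverses $\{u,v\}$ twice. Together with the tree path, every edge has even multiplicity, so the mod-$2$ set is empty and the edges used form a tree: there is no cycle at all. The candidate's estimate $\tilde{\dist}(s,r)+\tilde{\dist}(s,r')+\tilde{\dist}^{\mathrm{patch}}_r(r,r')$ is unrelated to $\OPT$ and can lie far below it, so such a candidate can win the global minimum. Hence the premise you take from \Cref{subsect:extended-cycles}, that every estimate bounds a closed walk containing a simple cycle, is not justified by that construction; the paper's remark there has the same hole. Your ``first crossing edge'' alternative does not help either. The far endpoint of that edge need not lie in the cluster of $s$, and its fundamental cycle is not bounded by the patching estimate.

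What is needed is a true parity lift:
\begin{itemize}
\item Give each edge $e\notin E(T_s)$ an independent random bit $\phi(e)$, chosen by one endpoint and sent to the other.
\item In $H_r$, let edges with $\phi(e)=1$ appear only as crossing edges, and all other edges only inside the two copies.
\item Every $r^{(0)}$-to-$(r')^{(1)}$ path then has odd total multiplicity on $\{e:\phi(e)=1\}$. So some such edge has odd multiplicity in the closed walk, and the mod-$2$ reduction is a nonempty even subgraph. It therefore contains a simple cycle of weight at most the estimate, and your selection step has something to select.
\item The tree part of this subgraph can have many hops, so do the selection with spanning-forest and SSSP computations rather than by walking along it. These cost $\tilde{O}(\sqrt{n}+n/\alpha+\alpha+D)$, which fits the claimed bound.
\item In the remaining case of \Cref{lem:patching-under-capture}, the antipodal segment is a simple path with at least one non-tree edge. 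Conditioned on $S$ and the shifts, it lifts to an $a^{(0)}$-to-$b^{(1)}$ path of the same weight and hop count with probability $1/2$.
\end{itemize}
This only halves the per-execution success probability, and the rest of your accounting is unchanged.
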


\begin{proof}
As before, it suffices to consider $1\leq k\in O(\log n)$. Set
\[
    \epsilon=\frac{1}{k\log^2 n}.
\]
We use the same guessing and amplification framework as in \Cref{subsect:reduction}. We first sample the skeleton set $S$ by including each node independently with probability $\alpha/n$. Then, over all candidate values of $d$, we perform $\tilde{O}(\alpha^{1/k})$ independent executions of $\LDD(G,S,k,d,\epsilon)$, followed by the extended cycle detection step, as described in \Cref{subsect:sampling-skeleton-nodes,subsect:extended-cycles}.

\paragraph{Success probability.}
Let $C^\star$ be a minimum weight cycle with $h\in\Omega((n/\alpha)\log^2 n)$ edges. By \Cref{lem:skeleton-sampling}, with high probability, the skeleton set $S$ satisfies $|S|\in\Theta(\alpha)$, $R=S\cap V(C^\star)\neq\emptyset$, and every $R$-segment of $C^\star$ has at most $O((n/\alpha)\log n)$ edges. We condition on this event in the subsequent analysis.

By the same argument as in \Cref{subsect:reduction}, some candidate value $d$ satisfies
\[
    2d\left(1-O\left(\frac{1}{\log n}\right)\right)
    \leq
    \OPT
    \leq
    2(d-2\epsilon\delta_{\max}).
\]
For this candidate $d$, \Cref{lem:ldd-cycle-detection-long} shows that one execution succeeds with probability $\Omega(\alpha^{-1/k})$. Repeating $\tilde{O}(\alpha^{1/k})$ times therefore boosts the success probability to $1 - 1/\poly(n)$.

By \Cref{lem:ldd-cycle-detection-long}, a successful execution returns a cycle whose weight estimate is at most
\[
    (k+1)\left(1+O\left(\frac{1}{\log n}\right)\right)\OPT.
\]
As before, the extra factor $1+O(1/\log n)$ in the approximation ratio can be absorbed, without affecting the round complexity asymptotically, by running the algorithm with a parameter $k'\in k-\Theta((k+1)/\log n)$. Thus we obtain a $(k+1)$-approximation.

\paragraph{Round complexity.}
Compared with the algorithm described in \Cref{sec:upperbounds}, there are two main differences: the patching part in the extended cycle detection step is new, and the number of repetitions is reduced from $\tilde{O}(n^{1/k})$ to $\tilde{O}(\alpha^{1/k})$.

We first consider all tasks other than the patching part. As discussed in \Cref{subsect:impl}, apart from some global aggregation and broadcasting tasks that can be done in $O(D)$ rounds, all tasks reduce to $(1+\epsilon)$-approximate SSSP computations.

As discussed in \Cref{subsect:upperbounds}, for $\epsilon=1/(k\log^2 n)\in\log^{-O(1)} n$, the $(1+\epsilon)$-approximate SSSP algorithm of~\cite{becker2021near} can be implemented with congestion $\tilde{O}(\alpha)$ in $\tilde{O}(n/\alpha+\alpha+D)$ rounds, for any parameter $\alpha$.

Over the $\tilde{O}(\alpha^{1/k})$ independent executions, the resulting collection has congestion $\tilde{O}(\alpha^{1+1/k})$ and dilation $\tilde{O}(n/\alpha+\alpha+D)$. By \Cref{thm: dilation and congestion}, these computations take
\[
    \tilde{O}\left(\alpha^{1+1/k}+\frac{n}{\alpha}+D\right)
\]
rounds in total, which is the stated round complexity bound.

It remains to account for the patching part in the extended cycle detection step, which is also repeated independently $\tilde{O}(\alpha^{1/k})$ times. In each execution, for each skeleton node $r\in S$, we run hop-bounded $(1+\epsilon)$-approximate SSSP from $r^{(0)}$ in the doubled graph $H_r$ with hop bound
\[
    L_{\mathrm{seg}} \in \tilde{O}\left(\frac{n}{\alpha}\right).
\]
The doubled graph can be simulated in the original graph with only a constant-factor overhead in round complexity. By \Cref{lem:tree-like-hop-bounded-sssp}, each such hop-bounded $(1+\epsilon)$-SSSP has congestion $\tilde{O}(1)$ and dilation $\tilde{O}(n/\alpha)$. Since there are $\Theta(\alpha)$ skeleton sources per execution and $\tilde{O}(\alpha^{1/k})$ executions, the full collection of patching computations has congestion $\tilde{O}(\alpha^{1+1/k})$ and dilation $\tilde{O}(n/\alpha)$. Another application of \Cref{thm: dilation and congestion} gives
\[
    \tilde{O}\left(\alpha^{1+1/k}+\frac{n}{\alpha}\right)
\]
rounds for the patching step, which is within the stated round complexity bound.
\end{proof}

We are now ready to prove our $\CONGEST$ upper bound.

\kupper*

\begin{proof}
As before, it suffices to consider $1\leq k\in O(\log n)$. Set $\alpha=n^{k/(2k+1)}\in n^{\Omega(1)}$, so we have  
\[
    \frac{n}{\alpha}
    =
    n^{(k+1)/(2k+1)} =
    \alpha^{1+1/k}.
\]

Select $h \in \Theta\left(\frac{n}{\alpha}\log^2 n\right)$ to satisfy the precondition of \Cref{lem:long-hop-cycles}. The algorithm runs both the short-hop algorithm of \Cref{lem:short-hop-cycles} with hop parameter $h$ and the long-hop algorithm of \Cref{lem:long-hop-cycles} with parameter $\alpha$, and returns the best cycle found.

Let $C^\star$ be a minimum weight cycle. If $C^\star$ has at most $h$ edges, then the short-hop algorithm outputs a $(k+1)$-approximation with high probability in
\[
    \tilde{O}\left(n^{1/k}+h+D\right)
    =
    \tilde{O}\left(
        n^{1/k}
        +
        \frac{n}{\alpha}
        +
        D
    \right)
\]
rounds, by \Cref{lem:short-hop-cycles}.

Otherwise, $C^\star$ has more than $h$ edges, so the long-hop algorithm outputs a $(k+1)$-approximation with high probability in
\[
    \tilde{O}\left(
        \alpha^{1+1/k}
        +
        \frac{n}{\alpha}
        +
        D
    \right)
\]
rounds, by \Cref{lem:long-hop-cycles}.

Substituting $\alpha=n^{k/(2k+1)}$ into the two bounds gives
\[
    \tilde{O}\left(
        n^{(k+1)/(2k+1)}
        +
        n^{1/k}
        +
        D
    \right).
\]
Thus, in either case, the algorithm returns a $(k+1)$-approximate minimum weight cycle with high probability within the claimed round complexity.
\end{proof}

\section{Lower Bounds}
\label{sec:lower-bounds}

In this section, we prove the lower bounds stated in
\Cref{thm:klower,thm:cliqueLB}. Our proofs are based on reductions from
set-disjointness using dense high-girth bipartite graphs given by the
Erd\H{o}s girth conjecture (\Cref{girth_conj}).
In \Cref{subsec:basic-lb-construction}, we introduce a basic lower-bound graph
construction. In \Cref{subsec:bcc-lower-bound}, we use this construction
directly to prove the lower bound in the broadcast congested clique model. We
then turn to the $\CONGEST$ model. In
\Cref{subsec:congest-lb-construction}, we modify the basic lower-bound graph
construction. In \Cref{subsec:moving-cut}, we review the moving-cut framework
used in our analysis. In \Cref{subsec:disj-lb-fixed-graph}, we apply this
framework to prove a set-disjointness lower bound on the modified construction.
Finally, in \Cref{subsec:disj-to-mwc}, we reduce set-disjointness to
$(k+1-\epsilon)$-$\apxMWC$ and prove the $\CONGEST$ lower bound.

\subsection{A Basic Two-Copy Construction}
\label{subsec:basic-lb-construction}

We first describe a simple construction that turns set-disjointness into an
approximation gap for $\MWC$. This construction is used directly for the
broadcast congested clique lower bound. Later, in the $\CONGEST$ lower bound,
we replace the matching edges between the two copies by long paths and add an
overlay tree.
Let
\[
    H=(L\cup R,E_H)
\]
be a bipartite graph with two parts of equal size $\gamma$:
\[
    L=\{\ell_1,\ldots,\ell_\gamma\},
    \qquad
    R=\{\rho_1,\ldots,\rho_\gamma\}.
\]

\paragraph{Encoding bit strings using subgraphs of $H$.}
For two strings $x,y\in\{0,1\}^{E_H}$, define the graph
$\mathcal{G}(H,x,y)$ as follows. The graph contains two copies of the node
set $L\cup R$ of $H$, denoted by $A\cup B$ and $U\cup V$, respectively,
where
\[
    A=\{a_i:i\in[\gamma]\},\quad
    B=\{b_j:j\in[\gamma]\},\quad
    U=\{u_i:i\in[\gamma]\},\quad
    V=\{v_j:j\in[\gamma]\}.
\]
Here, $a_i$ and $u_i$ are the copies of $\ell_i$, while $b_j$ and $v_j$
are the copies of $\rho_j$.

Each string specifies a subgraph in one copy of $H$. For every edge
$e=\{\ell_i,\rho_j\}\in E_H$, we add the edge $\{a_i,b_j\}$ to the first
copy if $x_e=1$, and add the edge $\{u_i,v_j\}$ to the second copy if
$y_e=1$.

Finally, we add a perfect matching between corresponding nodes in the two
copies: for every $i\in[\gamma]$, we add the edge $\{a_i,u_i\}$, and for
every $j\in[\gamma]$, we add the edge $\{b_j,v_j\}$.

\paragraph{Directions or weights.} We use two variants of this construction. In the directed unweighted variant,
the edges are directed as follows:
\[
    a_i \to u_i,
    \qquad
    u_i \to v_j,
    \qquad
    v_j \to b_j,
    \qquad
    b_j \to a_i.
\]
Here $u_i\to v_j$ is present if and only if $y_{\{\ell_i,\rho_j\}}=1$, and
$b_j\to a_i$ is present if and only if $x_{\{\ell_i,\rho_j\}}=1$.
In the undirected weighted variant, the matching edges have weight $0$, and
all edges inside the two copies have weight $1$.

For $x,y\in\{0,1\}^{E_H}$, write
\[
    \inprod{x}{y}
    =
    \sum_{e\in E_H} x_e y_e.
\]
The following lemma establishes a gap in the value of $\OPT$ between the two
cases $\inprod{x}{y}\neq 0$ and $\inprod{x}{y}=0$.

\begin{lemma}[Approximation gap in the basic construction]
\label{lem:basic-cycle-gap}
Suppose $H$ has girth greater than $2k$. Then the following hold for $\mathcal{G}(H,x,y)$.\\

\begin{minipage}[t]{0.48\textwidth}

\textbf{Directed unweighted case:}
\begin{itemize}
    \item If $\inprod{x}{y}\neq 0$, then $\OPT \le 4$.
    \item If $\inprod{x}{y}=0$, then $\OPT \ge 4(k+1)$.
\end{itemize}
\end{minipage}
\hfill
\begin{minipage}[t]{0.48\textwidth}
\textbf{Undirected weighted case:}
\begin{itemize}
    \item If $\inprod{x}{y}\neq 0$, then $\OPT \le 2$.
    \item If $\inprod{x}{y}=0$, then $\OPT \ge 2(k+1)$.
\end{itemize}
\end{minipage}
\end{lemma}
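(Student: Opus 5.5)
The plan is to analyze cycles in $\mathcal{G}(H,x,y)$ through a projection map $\pi$ onto $H$. This map sends $a_i,u_i\mapsto \ell_i$ and $b_j,v_j\mapsto\rho_j$. It sends every copy edge to the corresponding edge of $H$ and collapses every matching edge to a single node.

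For the upper bounds, pick $e=\{\ell_i,\rho_j\}$ with $x_e=y_e=1$.
\begin{itemize}
\item In the directed case, $a_i\to u_i\to v_j\to b_j\to a_i$ is a directed cycle of length $4$.
\item In the undirected weighted case, the same four nodes form a cycle $a_i,u_i,v_j,b_j$. It uses two weight-$0$ matching edges and two weight-$1$ copy edges, so its weight is $2$.
\end{itemize}

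For the lower bounds, assume $\inprod{x}{y}=0$ and let $C$ be any cycle. Write $C$ cyclically as a sequence of copy edges separated by (possibly empty) runs of matching edges. Since the matching edges form a perfect matching, two matching edges are never consecutive on a simple cycle. Hence $\pi(C)$ is a closed walk $W$ in $H$ whose steps are exactly the copy edges of $C$; its length is the number of copy edges of $C$, which equals $w(C)$ in the weighted case.

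The key claim is that $W$ is non-backtracking, meaning no step is immediately followed by its reverse. Suppose two consecutive steps of $W$ are both $\{\ell_i,\rho_j\}$ in opposite directions. In $C$ these correspond to two copy edges, separated by at most one matching edge, both lifting the same $H$-edge $e$.
\begin{itemize}
\item With no matching edge between them, they lie in the same copy and are therefore the same edge, so $C$ traverses an edge twice. This is impossible for a cycle of length at least $3$.
\item With one matching edge between them, one is the $x$-copy and the other the $y$-copy of $e$. Then $x_e=y_e=1$, contradicting $\inprod{x}{y}=0$.
\end{itemize}
The same reasoning covers the wrap-around pair of steps. A cycle with no copy edges would consist only of matching edges, which is impossible.

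A nonempty closed non-backtracking walk in a graph must contain a cycle, so its length is at least the girth, $2k+2$ (bipartite with girth $>2k$). This gives $w(C)\ge 2k+2$ in the weighted case.

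In the directed case, every directed cycle must alternate copy edges with matching edges, by the orientations $a\to u\to v\to b\to a$. Concretely, from $u_i$ the only out-edges go to some $v_j$, from $v_j$ the only out-edge is the matching edge to $b_j$, and so on. So $C$ consists of blocks of length $4$ that project to pairs of steps $\ell\to\rho$ (via $y$) and $\rho\to\ell$ (via $x$). Then $|C|=2\cdot(\text{length of }W)\ge 2(2k+2)=4(k+1)$.

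The main obstacle is the non-backtracking argument. Care is needed there with the case analysis on the matching edges and with the closing pair of steps. The other point to check is that a closed non-backtracking walk in $H$ has length at least $\mathrm{girth}(H)$. This is a standard fact: take a shortest closed sub-walk between the first repeated vertex pair; non-backtracking ensures it has length at least $3$, so it is a cycle.
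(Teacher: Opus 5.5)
Your proposal is correct and follows essentially the same route as the paper: contract the matching edges to project any cycle onto a closed walk in $H$, invoke the girth bound, and use the forced alternation $A\to U\to V\to B\to A$ to double the count in the directed case. Your explicit non-backtracking argument supplies the step the paper leaves implicit. A closed walk in $H$ need not contain a cycle in general, and ruling out backtracking is exactly where the hypothesis $\inprod{x}{y}=0$ is used.
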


\begin{proof}
We first consider the directed unweighted case. If $\inprod{x}{y}\neq 0$, then
there is an edge $e=\{\ell_i,\rho_j\}\in E_H$ with $x_e=y_e=1$. Hence both
edges
\[
    b_j\to a_i
    \qquad\text{and}\qquad
    u_i\to v_j
\]
are present. Together with $a_i\to u_i$ and $v_j\to b_j$, they form the
directed cycle
\[
\begin{tikzpicture}[baseline=-0.5ex]
\node (a) {$a_i$};
\node[right=1em of a] (u) {$u_i$};
\node[right=1em of u] (v) {$v_j$};
\node[right=1em of v] (b) {$b_j$};
\draw[->] (a) -- (u);
\draw[->] (u) -- (v);
\draw[->] (v) -- (b);
\draw[->,bend left=35] (b) to (a);
\end{tikzpicture}
\]
so $\OPT\le 4$.

Now suppose $\inprod{x}{y}=0$. Let $C$ be any directed cycle in
$\mathcal{G}(H,x,y)$. Contract the matching edges $a_i\to u_i$ and
$v_j\to b_j$. The remaining edges of $C$ project to a closed walk in $H$: both
$b_j\to a_i$ and $u_i\to v_j$ project to the edge $\{\ell_i,\rho_j\}$. Since
$H$ has girth greater than $2k$, this closed walk uses at least $2k+2$ edges of
$H$. The original cycle alternates between edges inherited from $H$ and matching edges, and
therefore has weight at least $2(2k+2)=4(k+1)$. Thus $\OPT\ge 4(k+1)$.

The proof for the undirected weighted case is analogous. If
$\inprod{x}{y}\neq 0$, then for some edge $e=\{\ell_i,\rho_j\}$, both
$\{a_i,b_j\}$ and $\{u_i,v_j\}$ are present. Together with the two zero-weight matching
edges $\{a_i,u_i\}$ and $\{b_j,v_j\}$, they form a cycle of weight $2$, so
$\OPT\le 2$.

Conversely, suppose $\inprod{x}{y}=0$. Contract the zero-weight matching
edges. Every cycle projects to a closed walk in $H$, and hence uses at least
$2k+2$ edges inherited from $H$. Since each edge  inherited from $H$ has weight $1$, every cycle has weight
at least $2k+2=2(k+1)$. Thus $\OPT\ge 2(k+1)$.
\end{proof}

\subsection{Lower Bound in the Broadcast Congested Clique Model}
\label{subsec:bcc-lower-bound}

We first recall the set-disjointness problem. For any integer $b\ge 1$, Alice
and Bob receive strings $x,y\in\bin^b$, viewed as the characteristic vectors of
the sets
\[
    X=\{i\in[b]:x_i=1\}
    \qquad\text{and}\qquad
    Y=\{i\in[b]:y_i=1\}.
\]
Their goal is to determine whether $X$ and $Y$ are disjoint. Equivalently, since
$\inprod{x}{y}=\sum_{i=1}^b x_i y_i$ counts the elements in $X\cap Y$, define
\[
    \disj_b(x,y)
    =
    \begin{cases}
        1 & \text{if } \inprod{x}{y}=0,\\
        0 & \text{otherwise}.
    \end{cases}
\]
The randomized communication complexity of $\disj_b$ is $\Omega(b)$, even
with constant error~\cite{Razborov1992}.

We say that a randomized distributed algorithm is \emph{$\delta$-error} if it produces a correct
output with a probability of at least $1-\delta$ on every input instance.

\begin{proposition}[Broadcast congested clique lower bound from a high-girth graph]
\label{prop:bcc-lower-bound}
Let $H=(L\cup R,E_H)$ be a bipartite graph with
$|L|=|R|=\gamma$ and $\operatorname{girth}(H)>2k$. Let $\epsilon > 0$ be any real number.
There exists a constant $\delta>0$ such that any $\delta$-error algorithm for
$(k+1-\epsilon)$-$\apxMWC$ in the broadcast congested clique model requires
\[
    \Omega\left(\frac{|E_H|}{\gamma\log\gamma}\right)
\]
rounds on the family of $4\gamma$-node graphs
\[
    \left\{
        \mathcal{G}(H,x,y)
        :
        x,y\in\{0,1\}^{E_H}
    \right\}.
\]
The lower bound holds for directed unweighted graphs and undirected graphs
with non-negative integer weights.
\end{proposition}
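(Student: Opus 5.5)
The proof is a direct reduction from two-party set-disjointness $\disj_b$ with $b=|E_H|$, combined with Lemma~\ref{lem:basic-cycle-gap}. Given inputs $x,y\in\{0,1\}^{E_H}$, Alice and Bob jointly simulate the broadcast congested clique algorithm on $\mathcal{G}(H,x,y)$.

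\textbf{Step 1: assign nodes.} Alice simulates the nodes $A\cup B$ and Bob simulates $U\cup V$. Every node's initial knowledge is then available to its simulator. A node $a_i$ or $b_j$ knows its incident edges in the first copy, which depend only on $x$. It also knows its matching edge, which is fixed and independent of the input. The same holds symmetrically for Bob's nodes, whose edges in the second copy depend only on $y$. Node identifiers are fixed in advance, and shared randomness (or private randomness, by Newman's theorem or simply by having each player sample for its own nodes) handles randomized algorithms.

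\textbf{Step 2: simulate each round.} In one round of the broadcast congested clique, each of the $4\gamma$ nodes broadcasts a single $O(\log n)=O(\log\gamma)$-bit message. Alice sends Bob the $2\gamma$ messages broadcast by her nodes, and Bob sends Alice the messages of his nodes. Afterwards both players know every message received by every node they simulate. Thus each round costs $O(\gamma\log\gamma)$ bits of communication, and a $t$-round algorithm yields a protocol with $O(t\gamma\log\gamma)$ bits.

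\textbf{Step 3: decide disjointness.} At the end of the simulation, every node knows $w(C)$, so Alice learns it. By Lemma~\ref{lem:basic-cycle-gap}, in the undirected weighted case we have $\OPT\le 2$ if $\inprod{x}{y}\neq0$ and $\OPT\ge 2(k+1)$ otherwise. A $(k+1-\epsilon)$-approximation outputs a cycle with $w(C)\le (k+1-\epsilon)\cdot 2<2(k+1)$ in the intersecting case. In the disjoint case, every cycle satisfies $w(C)\ge 2(k+1)$. So comparing $w(C)$ against $2(k+1)$ decides $\disj$. The directed case is identical, with threshold $4(k+1)$.

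One point needs a check: if the graph has no cycle, the output format is undefined. However, the matching edges alone never form a cycle, and in the disjoint case a cycle might simply not exist. We therefore treat the output ``no cycle'' as answering ``disjoint.'' This is consistent, because in the intersecting case a cycle always exists.

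\textbf{Step 4: conclude.} Choose $\delta$ below the constant error for which Razborov's $\Omega(b)$ bound holds. Then $t\gamma\log\gamma\in\Omega(|E_H|)$, which gives $t\in\Omega(|E_H|/(\gamma\log\gamma))$.

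There is no real obstacle here. The only care needed is in Step~1: the partition must be such that each player knows all inputs of its nodes, which holds because the matching is input-independent. The weights must also be non-negative integers, which is satisfied since the weights are $0$ and $1$. This explains why the statement allows zero weights.
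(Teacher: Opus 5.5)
Your proposal is correct and follows the paper's proof. Both use the same split, with Alice simulating $A\cup B$ and Bob simulating $U\cup V$, the same $O(\gamma\log\gamma)$-bits-per-round exchange of broadcast messages, and the same threshold test from \Cref{lem:basic-cycle-gap} combined with Razborov's $\Omega(|E_H|)$ bound. Your remark about acyclic instances in the disjoint case is a harmless extra: the paper leaves it implicit, and it only needs the natural convention that the algorithm reports when no cycle exists.
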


\begin{proof}
Let $\delta>0$ be a constant for which the randomized public-coin
communication complexity of set-disjointness is $\Omega(|E_H|)$. Suppose
there is a $\delta$-error algorithm $\mathcal{A}$ for
$(k+1-\epsilon)$-$\apxMWC$ on the graph family in the proposition, running
in $T$ rounds. We use $\mathcal{A}$ to construct a two-party communication
protocol for $\disj_{|E_H|}$.

Given $x,y\in\bin^{E_H}$, Alice simulates the nodes in $A\cup B$ of
$\mathcal{G}(H,x,y)$, while Bob simulates the nodes in $U\cup V$. Alice
knows all edges incident to $A\cup B$. Similarly, Bob knows
all edges incident to $U\cup V$. Thus, each party can initialize the states of
all nodes it simulates.

\paragraph{Simulation cost.}
The two parties simulate $\mathcal{A}$ round by round. In each round, Alice locally
computes the broadcast messages of the $2\gamma$ nodes in $A\cup B$ and sends
them to Bob, while Bob does the same for the $2\gamma$ nodes in $U\cup V$.
Since each broadcast message contains $O(\log\gamma)$ bits, one round of
$\mathcal{A}$ can be simulated using $O(\gamma\log\gamma)$
bits of two-party communication.  Hence the entire simulation uses $O(T\gamma\log\gamma)$ bits of communication.

\paragraph{Correctness.}
Suppose first that $\inprod{x}{y}\neq 0$. By
\Cref{lem:basic-cycle-gap}, $\OPT\le 4$ in the directed unweighted case and
$\OPT\le 2$ in the undirected weighted case. Therefore, with probability at
least $1-\delta$, $\mathcal{A}$ returns a cycle whose weight is at
most
\(
    4(k+1-\epsilon)<4(k+1)
\)
in the directed case, and at most
\(
    2(k+1-\epsilon)<2(k+1)
\)
in the weighted case.

On the other hand, if $\inprod{x}{y}=0$, then
\Cref{lem:basic-cycle-gap} guarantees that every cycle has weight at
least $4(k+1)$ in the directed case and at least $2(k+1)$ in the weighted
case. Thus, the output of $\mathcal{A}$ distinguishes the two cases with error
at most $\delta$. Consequently, the simulation yields a $\delta$-error
two-party protocol for $\disj_{|E_H|}$.

By the $\Omega(|E_H|)$ randomized communication lower bound for
set-disjointness,
\(
    T\gamma\log\gamma
    \in
    \Omega(|E_H|)
\). 
Therefore,
\(
    T
    \in
    \Omega\left(
        \frac{|E_H|}{\gamma\log\gamma}
    \right)\),
which proves the proposition.
\end{proof}

We are ready to prove \Cref{thm:cliqueLB}.

\begin{proof}[Proof of \Cref{thm:cliqueLB}]
Assume the Erd\H{o}s girth conjecture. For every integer $k\ge 1$, there
exists a bipartite graph $H$ with $|L|=|R|=\gamma$,
$\operatorname{girth}(H)>2k$, and
$|E_H|\in\Omega(\gamma^{1+1/k})$. Since
$\mathcal{G}(H,x,y)$ has $n=4\gamma$ nodes,
\Cref{prop:bcc-lower-bound} gives the desired lower bound
\(
    \Omega\left(\frac{\gamma^{1/k}}{\log\gamma}\right)
    \subseteq
    \tilde{\Omega}\left(n^{1/k}\right)
\) for directed unweighted graphs and undirected graphs with \emph{non-negative}
integer weights.

\paragraph{Non-negative weights $\rightarrow$ positive weights.}
There is one remaining gap between \Cref{prop:bcc-lower-bound} and
\Cref{thm:cliqueLB}: the proposition allows non-negative integer weights,
whereas the theorem requires positive integer weights. We now show that this
gap can be removed while preserving the approximation gap.

Let $w$ be the original non-negative weight function, set
\[
    M
    =
    \left\lceil
        \frac{2(k+1)n}{\epsilon}
    \right\rceil,
\]
and replace every edge weight by
\[
    w'(e)=Mw(e)+1.
\]
For every cycle $C$, since $C$ contains at most $n$ edges,
\[
    Mw(C)
    \le
    w'(C)
    \le
    Mw(C)+n.
\]
Hence, if $\inprod{x}{y}\neq 0$, then the original bound $\OPT\le 2$ gives
$\OPT'\le 2M+n$. If $\inprod{x}{y}=0$, then the original bound
$\OPT\ge 2(k+1)$ gives $\OPT'\ge 2(k+1)M$.

The gap is preserved because
\[
    (k+1-\epsilon)(2M+n)
    <
    2(k+1)M.
\]
Indeed, this inequality is equivalent to
$(k+1-\epsilon)n<2\epsilon M$, which follows immediately from the choice of
$M$. Thus, a $(k+1-\epsilon)$-approximation still distinguishes between $\inprod{x}{y}\neq 0$ and $\inprod{x}{y} = 0$.

Finally, if $\epsilon^{-1}\in n^{O(1)}$, then $M\in n^{O(1)}$, so all
weights are positive polynomially bounded integers, as required.
\end{proof}

\subsection{The Lower Bound Graphs for \texorpdfstring{$\CONGEST$}{CONGEST}}
\label{subsec:congest-lb-construction}

We now adapt the basic construction to the $\CONGEST$ model. The main idea is
to replace the matching edges between the two copies of $H=(L\cup R,E_H)$ with long paths,
while adding an overlay tree that keeps the diameter small. Refer to \Cref{fig:GgkdpHxy} for an illustration of the construction in the directed unweighted setting.

For integers $d\ge 2$ and $p\ge 1$, define $\mathcal{G}_{d,p}(H,x,y)$ as follows.

\paragraph{Replacing the matching with long paths.} Start with the graph
$\mathcal{G}(H,x,y)$. For every
$i\in[\gamma]$, replace the matching edge between $a_i$ and $u_i$ with a $d^p$-node path
\[
    Q_i=(q^i_0,q^i_1,\ldots,q^i_{d^p-1}),
    \qquad
    q^i_0=a_i,
    \qquad
    q^i_{d^p-1}=u_i.
\]
Similarly, for every $j\in[\gamma]$, replace the matching edge between $b_j$
and $v_j$ with a $d^p$-node path
\[
    R_j=(r^j_0,r^j_1,\ldots,r^j_{d^p-1}),
    \qquad
    r^j_0=b_j,
    \qquad
    r^j_{d^p-1}=v_j.
\]
We write
\[
    \mathcal{Q}=\{Q_i:i\in[\gamma]\}
    \qquad\text{and}\qquad
    \mathcal{R}=\{R_j:j\in[\gamma]\}.
\]

\paragraph{Adding an overlay tree.}  Next, add a complete $d$-ary tree $\mathcal{T}$ of depth $p$. Its nodes at
depth $h$ are denoted by
\[
    t^h_0,t^h_1,\ldots,t^h_{d^h-1}.
\]
For every position $s\in\{0,\ldots,d^p-1\}$, connect the leaf $t^p_s$ to
$q^i_s$ for every $i\in[\gamma]$, and to $r^j_s$ for every
$j\in[\gamma]$. Thus, the $s$th leaf of $\mathcal{T}$ provides an overlay
connection among all path nodes at position $s$.

\paragraph{Directed unweighted variant.}
Orient every path $Q_i$ from $a_i$ to $u_i$, and every path $R_j$ from $v_j$
to $b_j$. For each edge $e=\{\ell_i,\rho_j\}\in E_H$, orient the
edges inherited from $H$ as
\[
    b_j\to a_i
    \qquad\text{and}\qquad
    u_i\to v_j.
\]
The first edge is present if $x_e=1$, and the second if $y_e=1$.
Finally, orient every tree edge away from the root and every edge between a
tree leaf and a path node away from the leaf. In particular, no directed cycle
can contain an overlay edge.

 \paragraph{Undirected weighted variant.}
Assign weight $1$ to every edge inherited from $H$. Assign weight $0$ to every edge
of the paths in $\mathcal{Q}\cup\mathcal{R}$. Assign weight $2(k+1)$ to
every edge in the overlay tree, including the edges between tree leaves and path nodes.
Thus, any cycle containing an overlay edge has weight at least $2(k+1)$ and
does not affect the value of $\OPT$.

\begin{figure}[ht!]
    \centering
      \usetikzlibrary{arrows.meta,backgrounds}
\begin{tikzpicture}[
    scale=1.3,
    every label/.style={inner sep=1pt, font=\footnotesize}
]
\footnotesize
    % Define styles for different types of vertices
    \tikzset{
        vertex/.style={circle, draw, minimum size=4pt, inner sep=1pt, fill=white},
        filled/.style={circle, draw, fill=black, minimum size=4pt, inner sep=1pt},
        symbol/.style={inner sep=0pt, font=\footnotesize, text height=12pt},
        small/.style={font=\footnotesize},
        mapping/.style={teal!60!blue, thick},
        dir/.style={arrows={-Stealth[inset=0.7pt, length=5pt, angle'=25]}},
        line/.style={gray, very thin, dir},
        base/.style={gray!30, line width=.75em, dir}
    }
    
    % Draw vertices on level A with labels
    \node[] at (-2.4,0.25) {$\mathcal{Q}^1$};
    \node[vertex] (a) at (-2,0) [label=below left:{$a_1=q^1_0$}] {};
    \node[vertex] (b) at (1,0) [label=below:{$q^1_1$}] {};
    \node[vertex] (c) at (2,0) [label=below:{$q^1_2$}] {};
    \node[] (ca) at (3.5,0) {};
    \node[small] at (4,0) {$\cdots$};
    \node[] (cb) at (4.5,0) {};
    \node[vertex] (d) at (8,0) [label=right:{$u_1=q^1_{d^p-1}$}] {};

    \node[symbol] (dda) at (-.9,-0.6) {$\ddots$};
    \node[symbol] (ddb) at (1,-0.7) {$\vdots$};
    \node[symbol] (ddc) at (2,-0.7) {$\vdots$};
    \node[symbol] (ddd) at (6.8,-0.6) {$\iddots$};
    
    \node[symbol] (ddda) at (-.9,-3.5) {$\iddots$};
    \node[symbol] (dddb) at (1,-3.8) {$\vdots$};
    \node[symbol] (dddc) at (2,-3.8) {$\vdots$};
    \node[symbol] (dddd) at (6.8,-3.5) {$\ddots$};
    
    % Draw vertices on level B with labels
    \node[] at (-0.2, -1) {$\mathcal{Q}^{\gamma}$};
    \node[vertex] (e) at (0,-1.2) [label=below:{$a_\gamma=q^\gamma_0$}] {};
    \node[vertex] (f) at (1,-1.2) [label=below:{$q^\gamma_1$}] {};
    \node[vertex] (g) at (2,-1.2) [label=below:{$q^\gamma_2$}] {};
    \node[] (ga) at (3.5,-1.2) {};
    \node[small] at (4,-1.2) {$\cdots$};
    \node[] (gb) at (4.5,-1.2) {};
    \node[vertex] (h) at (6,-1.2) [label=below left:{$u_\gamma=q^\gamma_{d^p-1}$}] {};

    % Draw vertices on level C with labels
    \node[] at (-0.45,-2.72) {$\mathcal{R}^{\gamma}$};
    \node[vertex] (e2) at (0,-3) [label=below:{$b_\gamma=r^\gamma_0$}] {};
    \node[vertex] (f2) at (1,-3) [label=below:{$r^\gamma_1$}] {};
    \node[vertex] (g2) at (2,-3) [label=below:{$r^\gamma_2$}] {};
    \node[] (ga2) at (3.5,-3) {};
    \node[small] at (4,-3) {$\cdots$};
    \node[] (gb2) at (4.5,-3) {};
    \node[vertex] (h2) at (6,-3) [label=below:{$v_\gamma=r^\gamma_{d^p-1}$}] {};

    % Draw vertices on level D with labels
    \node[] at (-2.4,-3.95) {$\mathcal{R}^{1}$};
    \node[vertex] (i) at (-2,-4.2) [label=below:{$b_1=r^1_0$}] {};
    \node[vertex] (j) at (1,-4.2) [label=below:{$r^1_1$}] {};
    \node[vertex] (k) at (2,-4.2) [label=below:{$r^1_2$}] {};
    \node[] (ka) at (3.5,-4.2) {};
    \node[small] at (4, -4.2) {$\cdots$};
    \node[] (kb) at (4.5,-4.2) {};
    \node[vertex] (l) at (8,-4.2) [label=below:{$v_1=r^1_{d^p-1}$}] {};
    
    % Draw tree structure vertices
    \begin{scope}[xshift=1.5mm, yshift=0mm]
        \node at (5,3) {$\mathcal{T}$};
        \node[vertex] (o) at (3.3,3.6) [label=right:{$t_0^0$}] {};
        
        \node (left) at (2.7,3.2) {};
        \node (right) at (4.1,3.0) {};
        
        \node[symbol] (eleft) at (2.4,3.1) {$\iddots$};
        \node[symbol] (eright) at (4.5,2.7) {$\ddots$};
        \node[small] (mid) at (4,1) {$\cdots$};
        \node[vertex] (pq) at (2,2.8) [label=left:{$t^{p-2}_0$}] {};
        
        \node[vertex] (p) at (1,2) [label=right:{$t^{p-1}_0$}] {};
        \node[vertex] (q) at (2.9,2) [label=right:{$t^{p-1}_1$}] {};
        \node[vertex] (qq) at (5.1,2) [label=right:{$t^{p-1}_{d^{p-1}-1}$}] {};
        
        \node[vertex] (s) at (0.5,1) [label=above left:{$t^p_0$}] {};
        \node[vertex] (m) at (1.5,1) [label=left:{$t^p_1$}] {};
        \node[vertex] (n) at (2.5,1) [label=left:{$t^p_2$}] {};
        \node[vertex] (nn) at (3.3,1) [label=below:{$t^p_3$}] {};
        \node[small] at (4.0,2.0) {$\cdots$};
        \node[vertex] (rl) at (4.7,1) [label=below:{$t^p_{d^p-2}$}] {};
        \node[vertex] (r) at (5.5,1) [label=above right:{$t^p_{d^p-1}$}] {};
    \end{scope}

    % Draw path connections (ordinary black)
    \draw[dir] (a) -- (b) -- (c) -- (ca);
    \draw[dir] (e) -- (f) -- (g) -- (ga);
    \draw[dir] (ga2) -- (g2) -- (f2) -- (e2);
    \draw[dir] (ka) -- (k) -- (j) -- (i);
    \draw[dir] (cb) -- (d);
    \draw[dir] (gb) -- (h);
    \draw[dir] (h2) -- (gb2);
    \draw[dir] (l) -- (kb);
    
    % Draw tree connections
    \draw[line] (o) -- (left);
    \draw[line] (o) -- (right);
    \draw[line] (pq) -- (p);
    \draw[line] (pq) -- (q);
    \draw[line] (p) -- (s);
    \draw[line] (p) -- (m);
    \draw[line] (q) -- (n);
    \draw[line] (q) -- (nn);
    \draw[line] (qq) -- (r);
    \draw[line] (qq) -- (rl);
    
    \draw[line] (s) to [out=260,in=35] (a);
    \draw[line] (s) to [out=265,in=50] (e);
    \draw[line] (s) to [out=270,in=60] (e2);
    \draw[line] (s) to [out=276,in=0, looseness=1.45] (i);
    
    \draw[line] (m) to [out=269,in=35] (b);
    \draw[line] (m) to [out=269,in=50] (f);
    \draw[line] (m) to [out=272,in=60] (f2);
    \draw[line] (m) to [out=275,in=60] (j);
    
    \draw[line] (n) to [out=268,in=35] (c);
    \draw[line] (n) to [out=269,in=50] (g);
    \draw[line] (n) to [out=272,in=60] (g2);
    \draw[line] (n) to [out=275,in=60] (k);
    
    \draw[line] (r) to [out=355,in=115] (d);
    \draw[line] (r) to [out=300,in=85] (h);
    \draw[line] (r) to [out=260,in=120, looseness=1] (h2);
    \draw[line] (r) to [out=255,in=175, looseness=1.4] (l);

    % Helper points
    \node (vl) at (-1,-.7) {};
    \node (wl) at (-1,-3.6) {};    
    \node (vr) at (6.95,-.7) {};
    \node (wr) at (6.95,-3.6) {};

    % Thick gray "background/base" edges
    \begin{pgfonlayer}{background}
        \draw[base] (d) -- (wr);
        \draw[base] (vr) -- (l);
        \draw[base] (vr) -- (h2);
        \draw[base] (h) -- (l);
        \draw[base] (h2) -- (h);
        \draw[base] (wr) -- (h);
        \draw[base] (d) -- (l);

        \draw[base] (a) -- (wl);
        \draw[base] (vl) -- (i);
        \draw[base] (vl) -- (e2);
        \draw[base] (e) -- (i);
        \draw[base] (e2) -- (e);
        \draw[base] (wl) -- (e);
        \draw[base] (a) -- (i);
    \end{pgfonlayer}

    % Teal "actual/mapped" edges
    \draw[dir, mapping] (d) -- (wr);
    \draw[dir, mapping] (vr) -- (l);
    \draw[dir, mapping] (vr) -- (h2);
    \draw[dir, mapping] (h) -- (h2);
    \draw[dir, mapping] (h) -- (wr);

    \draw[dir, mapping] (wl) -- (a);
    \draw[dir, mapping] (i) -- (vl);
    \draw[dir, mapping] (e2) -- (vl);
    \draw[dir, mapping] (e2) -- (e);
    \draw[dir, mapping] (i) -- (a);

\end{tikzpicture}
    \caption{
An illustration of $\mathcal{G}_{d,p}(H,x,y)$. The thick gray edges depict
the base graph $H$, while the blue edges depict the two subgraphs of $H$
that encode the inputs $x$ and $y$.}
    \label{fig:GgkdpHxy}
\end{figure}

\begin{observation}[Size and diameter]
\label{obs:congest-lb-size}
The graph $\mathcal{G}_{d,p}(H,x,y)$ has
\[
    2\gamma d^p+\frac{d^{p+1}-1}{d-1}
    \in
    \Theta(\gamma d^p)
\]
nodes, and its undirected unweighted diameter is at most $2p+2$.
\end{observation}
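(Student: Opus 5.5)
The plan is to prove the two claims of the observation separately: count the nodes of $\mathcal{G}_{d,p}(H,x,y)$ by listing its parts, then bound the diameter by routing every node through the overlay tree $\mathcal{T}$.

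\textbf{Node count.} The construction has two kinds of nodes. First, the $2\gamma$ paths $Q_i$ and $R_j$ each have exactly $d^p$ nodes, and their endpoints are the copies $a_i,u_i,b_j,v_j$ of the nodes of $H$. So these paths are pairwise node-disjoint and together contain every non-tree node, giving $2\gamma d^p$ nodes. Second, the complete $d$-ary tree of depth $p$ has $\sum_{h=0}^{p} d^h=(d^{p+1}-1)/(d-1)$ nodes. For the $\Theta(\gamma d^p)$ bound, note that $d\ge 2$ gives $(d^{p+1}-1)/(d-1)\le 2d^p$. Hence the total lies between $2\gamma d^p$ and $(2\gamma+2)d^p$, which is in $\Theta(\gamma d^p)$ since $\gamma\ge 1$.

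\textbf{Diameter.} We work in the underlying undirected unweighted graph. Every path node $q^i_s$ or $r^j_s$ is adjacent to the leaf $t^p_s$. Every tree node is within distance $p$ of the root $t^0_0$. Hence every node is within distance $p+1$ of the root, and any two nodes are connected through the root by a walk of length at most $2(p+1)=2p+2$.

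The only point needing care is that the leaf edges cover every path node, including the endpoints $a_i,u_i,b_j,v_j$. This holds because these endpoints are $q^i_0,q^i_{d^p-1},r^j_0,r^j_{d^p-1}$, which are attached to the leaves $t^p_0$ and $t^p_{d^p-1}$. Apart from that check, there is no real obstacle.
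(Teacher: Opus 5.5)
Your proposal is correct and matches the paper's proof: count the $2\gamma d^p$ path nodes and the $(d^{p+1}-1)/(d-1)$ tree nodes separately, then route every node to the root of $\mathcal{T}$ within $p+1$ hops. Your explicit estimate $(d^{p+1}-1)/(d-1)\le 2d^p$ for $d\ge 2$ supplies the step the paper leaves implicit when it says the $\Theta(\gamma d^p)$ bound follows from $\gamma\ge 1$.
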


\begin{proof}
The paths in $\mathcal{Q}\cup\mathcal{R}$ contain $2\gamma d^p$ nodes in
total, while $\mathcal{T}$ contains $(d^{p+1}-1)/(d-1)$ nodes. Since
$\gamma\ge 1$, the claimed size bound follows.

The diameter bound of $2p+2$ follows because every node in $\mathcal{G}_{d,p}(H,x,y)$ can reach the root of the overlay tree $\mathcal{T}$ by a path of at most $p+1$ edges.
\end{proof}

\begin{lemma}[Approximation gap in the $\CONGEST$ construction]
\label{lem:congest-cycle-gap}
Suppose $H$ has girth greater than $2k$. Then the following hold for $\mathcal{G}_{d,p}(H,x,y)$.\\

\begin{minipage}[t]{0.48\textwidth}
\textbf{Directed unweighted case:}
\begin{itemize}
    \item If $\inprod{x}{y}\neq 0$, then
    $\OPT\le 2d^p$.
    \item If $\inprod{x}{y}=0$, then
    $\OPT\ge 2(k+1)d^p$.
\end{itemize}
\end{minipage}
\hfill
\begin{minipage}[t]{0.48\textwidth}
\textbf{Undirected weighted case:}
\begin{itemize}
    \item If $\inprod{x}{y}\neq 0$, then
    $\OPT\le 2$.
    \item If $\inprod{x}{y}=0$, then
    $\OPT\ge 2(k+1)$.
\end{itemize}
\end{minipage}
\end{lemma}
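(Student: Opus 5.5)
We follow the proof of \Cref{lem:basic-cycle-gap}. The new ingredients are the long paths $Q_i,R_j$ and the overlay tree $\mathcal{T}$. In both variants we first dispose of the overlay edges and then treat the remaining graph as a subdivided copy of $\mathcal{G}(H,x,y)$.

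\textbf{Directed unweighted case.} Every tree edge is oriented away from the root, and every leaf--path edge is oriented away from the leaf. Hence no node of $\mathcal{T}$ has an incoming edge from outside $\mathcal{T}$. Since $\mathcal{T}$ is acyclic, no directed cycle contains an overlay node or edge. So every directed cycle lies in the subgraph consisting of the paths $Q_i$ (directed $a_i\to u_i$), the paths $R_j$ (directed $v_j\to b_j$), and the present $H$-edges $b_j\to a_i$ and $u_i\to v_j$.

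For the upper bound, pick $e=\{\ell_i,\rho_j\}$ with $x_e=y_e=1$. The cycle
\[
a_i \leadsto u_i\to v_j\leadsto b_j\to a_i
\]
uses $d^p-1$ edges on $Q_i$, $d^p-1$ edges on $R_j$, and two $H$-edges, so it has length $2d^p$.

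For the lower bound, take any directed cycle $C$ in this subgraph. The internal nodes of $Q_i$ and $R_j$ have in- and out-degree one, so $C$ traverses each path it enters completely. Contracting every $Q_i$ and $R_j$ to a single edge turns $C$ into a directed cycle of $\mathcal{G}(H,x,y)$ in its directed variant. By the argument in \Cref{lem:basic-cycle-gap}, with $\inprod{x}{y}=0$ this cycle projects to a closed walk in $H$ with at least $2k+2$ $H$-edges. Between consecutive $H$-edges it passes through a full path of length $d^p-1$. Therefore
\[
|C|\ge (2k+2)\bigl(1+(d^p-1)\bigr)=2(k+1)d^p.
\]

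\textbf{Undirected weighted case.} Any cycle that uses an overlay edge has weight at least $2(k+1)$, which already meets the claimed lower bound and does not affect the upper bound. So we only consider cycles avoiding $\mathcal{T}$ and its incident edges. Contracting each zero-weight path $Q_i,R_j$ to a single node recovers $\mathcal{G}(H,x,y)$ with the zero-weight matching edges also contracted. The upper bound $\OPT\le 2$ comes from the cycle $a_i\to b_j\leadsto v_j\to u_i\leadsto a_i$, which has weight $2$ since both paths have weight $0$.

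For the lower bound, we need to check that a simple cycle avoiding the overlay maps to a nontrivial closed walk in $H$. Its nodes have degree two, and the internal nodes of the paths have degree two in this subgraph. Hence the cycle enters and leaves each path through its endpoints, and it cannot lie inside a single path, since a path contains no cycle. It therefore contains at least one $H$-edge. The projection is a closed walk in $H$ that is not contractible to a point, and the girth argument of \Cref{lem:basic-cycle-gap} gives at least $2k+2$ unit-weight $H$-edges.

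\textbf{Main obstacle.} The one delicate step is the walk-to-cycle step in the girth argument: a closed walk in $H$ might backtrack and so need not contain a cycle. In the undirected case we rule this out as follows. Backtracking would mean using both $\{a_i,b_j\}$ and $\{u_i,v_j\}$ consecutively in the projection. Together with the two connecting paths, these already form a cycle of weight $2$, which requires $x_e=y_e=1$, contradicting $\inprod{x}{y}=0$. In the directed case the same scenario is excluded identically, since the two copies of an $H$-edge both being present again means $x_e=y_e=1$. Otherwise each $H$-edge of the walk is used by a single copy, and the walk contains a genuine cycle of $H$.
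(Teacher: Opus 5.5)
Your proof is correct and follows essentially the same route as the paper. You exclude overlay edges by orientation in the directed case and by their weight $2(k+1)$ in the undirected case, then contract the paths in $\mathcal{Q}\cup\mathcal{R}$ and reduce to the girth argument of \Cref{lem:basic-cycle-gap}; your explicit treatment of backtracking (two consecutive projected $H$-edges coinciding would force $x_e=y_e=1$) makes rigorous a walk-to-cycle step that the paper's projection argument leaves implicit.
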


\begin{proof}
The proof follows the same projection argument as
\Cref{lem:basic-cycle-gap}, with each matching edge in the basic construction
replaced by the corresponding path in $\mathcal{Q}\cup\mathcal{R}$.

Consider first the directed case. By construction, no directed cycle can use
an overlay edge. If $\inprod{x}{y}\neq 0$, then for some
$e=\{\ell_i,\rho_j\}\in E_H$, both edges corresponding to $e$
are present. Together with $Q_i$ and $R_j$, they form a directed cycle of
weight $2(d^p-1)+2=2d^p$.

Now suppose $\inprod{x}{y}=0$. Replace every path in
$\mathcal{Q}\cup\mathcal{R}$ with a single matching edge. Any directed cycle
then becomes a directed cycle in the basic construction. By
\Cref{lem:basic-cycle-gap}, the resulting cycle uses at least $2k+2$ edges
inherited from $H$. Since such edges alternate with the  matching
edges, the original cycle uses at least $2k+2$ paths from
$\mathcal{Q}\cup\mathcal{R}$. Therefore, its weight is at least
$(2k+2)((d^p-1)+1)=2(k+1)d^p$.

For the undirected weighted case, any cycle containing an overlay edge has
weight at least $2(k+1)$. For every remaining cycle, replacing each
zero-weight path in $\mathcal{Q}\cup\mathcal{R}$ with a single zero-weight
matching edge yields a cycle in $\mathcal{G}(H,x,y)$. This correspondence is
bijective and preserves the cycle weight. The claim therefore follows directly
from \Cref{lem:basic-cycle-gap}.
\end{proof}

\subsection{The Moving-Cut Framework}
\label{subsec:moving-cut}

We next review the moving-cut framework of
\citet{haeupler2020network,haeupler2021universally}, which we use to prove
that set-disjointness is hard on our $\CONGEST$ construction.
Informally, moving cut is a combinatorial object that certifies distributed lower
bounds for communication problems. The framework builds on the approach of
\citet{das2011distributed}, who proved various
$\tilde{\Omega}(\sqrt{n})$ lower bounds on $O(\log n)$-diameter graphs by
tracking communication bottlenecks throughout the execution of a distributed
algorithm. Moving cuts were later introduced explicitly by
\citet{haeupler2020network} to establish network-coding gaps for simple
pairwise communication tasks, and were subsequently used by
\citet{haeupler2021universally} to prove universal lower bounds.

\paragraph{Distributed set-disjointness.}
We consider the set-disjointness problem in the $\CONGEST$ model, where the
input bits are distributed among a collection of source--sink pairs. Let
$G=(V,E)$ be a graph, and let
\[
    S=\{(s_i,t_i):i\in[b]\}
\]
be a collection of source--sink pairs, where a node may participate in
multiple pairs. Given $x,y\in\bin^b$, the source $s_i$ receives $x_i$ and
the sink $t_i$ receives $y_i$. The goal is for every node to output
$\disj_b(x,y)$.

\begin{definition}[Moving cuts~\cite{haeupler2020network,haeupler2021universally}]
\label{def:moving-cut}
A \underline{moving cut} for $(G=(V,E),S)$ is an assignment
$\ell:E\to\mathbb{Z}_{\ge 1}$. Its \underline{capacity} is
\[
    \lambda
    =
    \sum_{e\in E}\left(\ell(e)-1\right),
\]
and its \underline{distance} is
\[
    \min_{i,j\in[b]}
    \dist_{\ell}(s_i,t_j),
\]
where $\dist_{\ell}$ denotes shortest-path distance with respect to the edge
lengths $\ell$.
\end{definition}

Intuitively, the capacity of a moving cut measures the total increase relative to the original unit edge lengths, whereas its distance measures how far the cut separates every source from every sink.

\begin{lemma}[Lower bound via moving cuts~\cite{haeupler2021universally}]
\label{lem:moving-cut}
Suppose $(G,S)$ admits a moving cut of distance at least $\beta$ and
capacity strictly less than $b$. Then there exists a constant $\delta>0$ such
that every $\delta$-error distributed algorithm for computing $\disj_b$ on
$(G,S)$ requires $\tilde{\Omega}(\beta)$ rounds in the $\CONGEST$ model.
This holds even with public randomness and even if the entire graph $G$ and
the terminal pairs $S$ are known to all nodes.
\end{lemma}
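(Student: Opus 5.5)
The plan is to prove the moving-cut lemma (\Cref{lem:moving-cut}) by a simulation argument in the style of \citet{das2011distributed}, reducing to two-party set-disjointness. Fix a $T$-round algorithm $\mathcal{A}$ with $T\le c\beta/\log n$ for a small constant $c$; we will show that $T$ rounds are not enough. The moving cut $\ell$ induces a \emph{distance-from-sources} potential $\phi(v)=\min_i \dist_\ell(s_i,v)$. For each round $r\in\{0,\ldots,T\}$, define Alice's simulated set
\[
    A_r=\{v:\phi(v)< \beta - r\}.
\]
These sets shrink as $r$ grows. Since every sink satisfies $\phi(t_j)\ge\beta$, no sink lies in any $A_r$. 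Symmetrically, Bob's set $B_r$ is the complement of $A_r$ together with the nodes whose state he can compute.

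The core invariant is that Alice knows the full state of every node in $A_r$ after round $r$, and Bob knows the state of every node outside $A_r$ after round $r$. Consider a node $v\in A_{r+1}$. For each neighbor $u$ of $v$ with $\ell(\{u,v\})=1$ we have $\phi(u)\le\phi(v)+1<\beta-r$, so $u\in A_r$, and Alice can compute the message $u$ sends to $v$ in round $r+1$ herself. The only messages Alice must receive from Bob are those crossing edges $e=\{u,v\}$ with $\ell(e)\ge 2$ and $u\notin A_r$. An edge $e$ can serve as such a ``boundary'' edge only in the rounds $r$ for which the threshold $\beta-r$ lies in an interval of length $\ell(e)-1$ determined by $\phi$ at its endpoints. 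Hence each edge $e$ is charged in at most $\ell(e)-1$ rounds, and each charge costs $O(\log n)$ bits. Summing over all edges and both directions, the total communication is $O(\lambda\log n)$ bits.

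Bob's side is handled symmetrically using the sets $\{v:\phi(v)\ge \beta-r\}$ shifted appropriately; equivalently, Bob simulates everything Alice does not. Since $T<\beta$, at the end every sink is simulated by Bob, so Bob learns $\disj_b$, and one more bit informs Alice. Public randomness is shared, and the graph $G$ and the pairs $S$ are known to both parties, so the simulation is faithful. This gives a two-party protocol for $\disj_b$ with error $\delta$ that communicates $O(\lambda\log n)$ bits.

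This accounting alone does not yield a contradiction, because it gives no dependence on $T$. The main obstacle is therefore to combine capacity and distance correctly. The fix I would try first is the standard one from \citet{haeupler2021universally}: charge boundary crossings only during the $T$ rounds. The charging argument above shows the total communication is at most $O(\min\{\lambda, T\cdot|\text{boundary}|\}\log n)$, and the capacity bound $\lambda<b$ must be traded against the $\log n$ factor. This suggests scaling: multiply $\ell$ by $\Theta(\log n)$, i.e.\ run the simulation with threshold steps of size $\Theta(\log n)$, which costs a $\log n$ factor in the distance and produces the $\tilde{\Omega}(\beta)$ bound. Under this scaling, an edge is crossed in at most $(\ell(e)-1)/\Theta(\log n)$ charged rounds of $O(\log n)$ bits each, so the total communication is $O(\lambda)<b$, up to constants that are absorbed by picking $\delta$ and the constant in $\tilde{\Omega}$. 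This contradicts the $\Omega(b)$ randomized communication lower bound for $\disj_b$~\cite{Razborov1992}, which holds for small constant error $\delta$. I expect the delicate part to be exactly this constant bookkeeping, namely making ``capacity strictly less than $b$'' suffice; if necessary I would rely on the fact that the lemma's $\tilde{\Omega}$ allows losing polylogarithmic factors in $\beta$ to absorb constant-factor slack.
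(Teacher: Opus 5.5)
The paper imports this lemma from \citet{haeupler2021universally} without proof. The standard argument is the two-party simulation you have in mind, and your treatment of Alice is correct. With $A_r=\{v:\phi(v)<\beta-r\}$, $A_0$ contains no sink, every neighbor of a node of $A_{r+1}$ across an edge of length $1$ lies in $A_r$, and each edge is charged at most $\ell(e)-1$ times.

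\textbf{The gap.} The step that fails is Bob's invariant, that he knows the state of every node outside $A_r$ after round $r$. Since $A_r$ shrinks, $V\setminus A_r$ grows, and this breaks the invariant in two ways.
\begin{itemize}
\item A node $v\in A_r\setminus A_{r+1}$ enters Bob's region with a history that depends on $x$ and on every message it received from inside Alice's region. Bob has seen none of this, and the state can be far larger than $O(\log n)$ bits.
\item To advance a node $v\notin A_{r+1}$, Bob needs the round-$(r+1)$ messages of all its neighbors $u\in A_r$, including neighbors across edges of length $1$ (e.g.\ $\phi(u)=\beta-r-1$, $\phi(v)=\beta-r$). So the whole advancing front would have to be transmitted in every round, at a cost unrelated to the capacity.
\end{itemize}
Hence the claimed $O(\lambda\log n)$-bit protocol does not exist as described. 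The condition $T<\beta$ is an artifact of this error.

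\textbf{The repair.} Let Bob's region shrink as well, from the source side: take $B_r=\{v:\phi(v)>r\}$ and simulate only $T<\beta/2$ rounds.
\begin{itemize}
\item $B_0$ contains no source, and $B_r$ has the same unit-edge closure property as $A_r$.
\item For $r<\beta/2$, $\phi(u)\ge\beta-r$ forces $\phi(u)>r$, so $A_r\cup B_r=V$.
\item Every message Alice needs comes from some $u\notin A_r$ across an edge of length at least $2$, and Bob can compute it; symmetrically for Bob.
\item Each edge is charged at most $\ell(e)-1$ times for each party, so the protocol uses $O(\lambda\log n)$ bits.
\end{itemize}
This costs only a factor $2$ in $\beta$.

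\textbf{The rescaling.} Your rescaling idea then finishes the proof, but it is stated backwards: multiplying $\ell$ by $\Theta(\log n)$ would multiply the capacity. What works is advancing the thresholds by $K=C\log n$ per round. Equivalently, replace $\ell$ by $\ell'=1+\lfloor(\ell-1)/K\rfloor$, which has capacity at most $\lambda/K$ and distance at least $\beta/(2K)$. The simulation then uses $O(\lambda/C)$ bits. Once $C$ is large compared with the disjointness constant for error $\delta$, this is below the $\Omega(b)$ bound, so $T\ge\beta/(4K)=\Omega(\beta/\log n)$.

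Your concern that the accounting ``gives no dependence on $T$'' is misplaced. $T$ enters only through the validity condition $T<\beta'/2$, where $\beta'$ is the distance of the rescaled cut, and that condition is exactly what turns the communication contradiction into a round lower bound.
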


Our plan is to apply \Cref{lem:moving-cut} to the graph
$\mathcal{G}_{d,p}(H,\mathbf{1},\mathbf{1})$, where $\mathbf{1}$ denotes the all-one vector. Crucially, the topology of this graph is independent of the set-disjointness inputs: all edges of $H$ are present in both copies, while the bits $x_i$ and $y_i$ are provided only as private inputs to the corresponding terminals. 

We then show how the terminals can use their private input bits to simulate the input-dependent graph $\mathcal{G}_{d,p}(H,x,y)$, thereby reducing set-disjointness to $(k+1-\epsilon)$-$\apxMWC$.

\subsection{Set-Disjointness Lower Bound on \texorpdfstring{$\mathcal{G}_{d,p}(H,\mathbf{1},\mathbf{1})$}{Gdp(H,1,1)}}
\label{subsec:disj-lb-fixed-graph}

We now prove that set-disjointness is hard on  
$\mathcal{G}_{d,p}(H,\mathbf{1},\mathbf{1})$ via \Cref{lem:moving-cut}. 
For every edge $e=\{\ell_i,\rho_j\}\in E_H$, define
\[
    s_e=a_i
    \qquad\text{and}\qquad
    t_e=v_j,
\]
and consider this collection of source--sink pairs
\[
    S(H)
    =
    \left\{
        (s_e,t_e)
        :
        e\in E_H
    \right\}.
\]
Thus, in the distributed set-disjointness problem on $(\mathcal{G}_{d,p}(H,\mathbf{1},\mathbf{1}), S(H))$, for every edge $e=\{\ell_i,\rho_j\} \in E_H$, the source $s_e=a_i$ receives $x_e$
and the sink $t_e=v_j$ receives $y_e$. A node may participate in several
terminal pairs and hence hold several input bits.

\begin{lemma}[Set-disjointness lower bound on $\mathcal{G}_{d,p}(H,\mathbf{1},\mathbf{1})$]
\label{lem:disj-fixed-graph}
Let $H=(L\cup R,E_H)$ be any bipartite graph with
$|L|=|R|=\gamma$. For any integers $d\ge 2$ and $p\ge 1$, there exists a
constant $\delta>0$ such that every $\delta$-error algorithm for computing
$\disj_{|E_H|}$ on
\(
    \left(\mathcal{G}_{d,p}(H,\mathbf{1},\mathbf{1}),S(H)\right)
\)
requires
\[
    \tilde{\Omega}\left(
        \min\left\{
            d^p,\frac{|E_H|}{dp}
        \right\}
    \right)
\]
rounds in the $\CONGEST$ model.
\end{lemma}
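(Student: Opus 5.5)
We will apply \Cref{lem:moving-cut}: for a parameter $\beta\approx \min\{d^p, |E_H|/(dp)\}$ (up to constants), we construct a moving cut $\ell$ on $\mathcal{G}_{d,p}(H,\mathbf{1},\mathbf{1})$ with distance at least $\beta$ between every source $a_i$ and every sink $v_j$, and capacity strictly less than $b=|E_H|$. We follow the standard Das Sarma et al.\ style ``moving'' construction, with the overlay tree levels getting lengths that grow over time, adapted to the fact that sources sit at position $0$ of the $\mathcal{Q}$-paths and sinks at position $d^p-1$ of the $\mathcal{R}$-paths.

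\textbf{Construction.} Set $\beta=\lfloor c\min\{d^p, |E_H|/(dp)\}\rfloor$ for a small constant $c$; if $\beta<1$ the bound is trivial. Keep every path edge of $\mathcal{Q}\cup\mathcal{R}$ and every edge inherited from $H$ at length $1$, except the $H$-edges $\{a_i,b_j\}$ adjacent to the sources: give the edges of the first copy (between $A$ and $B$) length $\beta$. There are $|E_H|$ such edges, which is too costly. So instead I would cut near the positions where the two copies are joined. For the overlay, the moving cut should make the ``front'' of information spreading from the sources advance slowly along the path positions. Concretely, assign each leaf-to-path edge at position $s$ length $\approx \beta$ if $s\le \beta$ and $1$ otherwise, and each tree edge at depth $h$ length chosen so that going up and back down through the tree costs at least $\beta$ unless the jump is small: a tree edge at depth $h$ whose subtree covers positions in a block of size $d^{p-h}$ gets length about $\min\{\beta, d^{p-h}\}$. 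The intuition is that a walk from $a_i$ reaching position $s$ in $\ell$-distance $t<\beta$ can only have advanced to positions $\lesssim t$, while the sinks $v_j$ are at position $d^p-1\ge\beta$, and reaching them via $\mathcal{R}$ paths from $b_j$ requires traversing an $\mathcal{R}$ path from position $0$ to $d^p-1$ (length $d^p\ge\beta$) or using the tree.

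\textbf{Distance.} I would prove by induction on $t$ that every node at $\ell$-distance $<t$ from the source set lies at path position $\le t$ (or in a tree subtree whose leaves are all at positions $\le t$ after accounting for its level). The key step: a tree edge at depth $h$ spanning blocks of size $d^{p-h}$ has length roughly the block width, so jumping across a block costs at least as much as walking along a path. Since every sink is at position $d^p-1\ge\beta$, distance is at least $\beta$. The edges of $H$ link positions $0$ to $0$ and $d^p-1$ to $d^p-1$, so the first copy stays at position $0$; the danger is that the sources are adjacent to $B=\{b_j\}$, from which the $\mathcal{R}$ paths lead to the sinks, but those paths have unit length $d^p-1\ge\beta$, so that is fine.

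\textbf{Capacity.} Tree edges at depth $h$: there are $d^h$ of them, each of length $\le\min\{\beta,d^{p-h}\}$, so each level contributes $\le d^h\cdot\beta$ only while $d^{p-h}\ge\beta$, i.e.\ $d^h\le d^p/\beta$; summing gives $O(p\cdot d\cdot\beta)$ roughly for the relevant levels — more carefully, I expect the only levels with length $>1$ that matter to be those with $d^{p-h}\le \beta$ where total is $d^h d^{p-h}=d^p$ per level—this is the main obstacle: the naive count gives $O(p\,d^p)$ rather than $O(dp\beta)$. To fix it I would make lengths larger than $1$ only on tree edges and leaf edges whose positions fall in the window $[0,\beta]$ (a moving front only needs to be blocked up to time $\beta$): the tree nodes whose subtree intersects $[0,\beta]$ number at most $d\cdot\lceil\beta/d^{p-h}\rceil+d$ per level, each of length $\le \min\{\beta,d^{p-h}\}$, giving $O(d\beta)$ per level and $O(dp\beta)$ overall, while leaf edges to path nodes at positions $\le\beta$ need length only $1$ since walking along paths already matches. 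With $\beta\le c|E_H|/(dp)$ the capacity is $<|E_H|$, and \Cref{lem:moving-cut} yields the $\tilde\Omega(\beta)$ bound.
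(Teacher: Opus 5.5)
Your final construction is a genuinely different moving cut from the paper's, and it can be made to work, but two steps need repair. In that construction you lengthen only the overlay-tree edges whose subtree meets the position window $[0,\beta]$, with length proportional to block width and capped at $\beta$, and keep all other edges (including leaf-to-path edges) at length $1$.

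\textbf{Off-by-one level in the lengths.} An excursion whose top node is at depth $h-1$ can move the front anywhere within that node's block of $d^{p-h+1}$ positions. So the edges between depths $h-1$ and $h$ should get length $\min\{\beta,d^{p-h+1}\}$, not $\min\{\beta,d^{p-h}\}$. With your lengths, the front can repeat the following: climb $j$ levels from a block start, descend to the far end of the block, and take one path step. Each repetition advances it by $d^j$ positions at cost roughly $2d^j/(d-1)+3$. So the distance is only $O(\beta/d+p^2)$. This loses a factor $d$ that $\tilde{\Omega}$ does not absorb for general $d$; it is harmless for $d=2$, the only case used later. After the shift, your count of $O(d\beta)$ per level and $O(dp\beta)$ overall is exactly the right capacity bound.

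\textbf{The distance argument must be redone for the windowed cut.} The invariant ``$\ell$-distance $<t$ implies position $\le t$'' was set up before you windowed the cut, and it cannot be carried past the window. Non-window tree edges have length $1$, so once the front leaves $[0,\beta]$ it reaches the sinks with only $O(p^2)$ further length. What you must show is that the \emph{first exit} from the window already costs at least $\beta$. That exit is one of two kinds:
\begin{itemize}
\item the path edge out of position $\beta$, whose cost is at least $\beta$ by the in-window invariant; or
\item a step from a window ancestor $u$ of the leaf $t^p_\beta$ down to a non-window child.
\end{itemize}
In the second case, let $s$ be the last leaf visited before $u$. The walk must cross the edge from the lowest common ancestor of $t^p_s$ and $u$ toward $t^p_s$. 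This is a window edge, and that ancestor's block contains both $s$ and $\beta$, so the edge costs at least $\min\{\beta,\beta-s+1\}$. Reaching $s$ inside the window costs at least $s$, and together these give at least $\beta$.

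Also discard your earlier suggestion to give leaf-to-path edges at positions $\le\beta$ length $\beta$. There are $2\gamma(\beta+1)$ of them, which would cost $\Theta(\gamma\beta^2)$ capacity.

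\textbf{Comparison with the paper.} The paper does not localize at all. It gives \emph{every} tree edge between depths $h-1$ and $h$ length $1+\lfloor b/(2pd^h)\rfloor$, with $b=|E_H|$, so each level contributes at most $b/(2p)$ to the capacity. Its distance bound is a uniform per-unit estimate: every unit of positional progress, whether along a path or through the tree, costs $\Omega(\min\{1,b/(pd^{p+1})\})$, and this is summed over all $d^p-1$ positions. In other words, the paper spreads the budget thinly over the whole tree and slows the front everywhere. You instead concentrate the budget near the sources and stop the front completely on a prefix of length $\beta$. Your version is closer to the original moving-front picture of Das Sarma et al.\ and makes the two regimes of $\min\{d^p,|E_H|/(dp)\}$ visible through the choice of $\beta$, but it costs you the window-exit argument. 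The paper's cut needs no target distance and no boundary analysis. After the two repairs, both give $\Omega(\min\{d^p,|E_H|/(dp)\})$.
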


\begin{proof}
Write $b=|E_H|$. We construct a moving cut on
$\mathcal{G}_{d,p}(H,\mathbf{1},\mathbf{1})$. For every overlay-tree edge
between depths $h-1$ and $h$, where $h\in[p]$, set
\[
    \ell(e)
    =
    1+\left\lfloor\frac{b}{2pd^h}\right\rfloor.
\]
Set $\ell(e)=1$ for every remaining edge.

\paragraph{Capacity.}
There are $d^h$ tree edges between depths $h-1$ and $h$. Hence, the capacity
is at most
\[
    \sum_{h=1}^p
    d^h\left\lfloor\frac{b}{2pd^h}\right\rfloor
    \le
    \frac{b}{2}
    <
    b.
\]

\paragraph{Distance.}
Consider any path $P$ from a source $s_e=a_i=q^i_0$ to a sink
$t_{e'}=v_j=r^j_{d^p-1}$. Associate position $s$ with every path node
$q^i_s$ and $r^j_s$. Thus, $P$ starts at position $0$ and ends at position
$d^p-1$. Edges inherited from $H$ preserve the position, so there are only
two ways for $P$ to move between positions.

First, $P$ may use path edges. Each such edge changes the position by one and
has $\ell$-length $1$, so moving $\Delta$ positions in this way costs at least
$\Delta$.

Second, $P$ may traverse the overlay tree. Consider a maximal subpath that
enters the overlay tree from position $s$ and leaves it at position $t$.
Let $h-1$ be the depth of the lowest common ancestor of the leaves $t^p_s$
and $t^p_t$. The tree path between these leaves contains an edge $e^\ast$
between depths $h-1$ and $h$, and
$|s-t|<d^{p-h+1}$. Therefore,
\[
    \ell(e^\ast)
    \ge
    \frac{b}{2pd^h}
    >
    \frac{b|s-t|}{2pd^{p+1}}.
\]
Hence, moving $|s-t|$ positions through the overlay tree costs
$\Omega(b|s-t|/(pd^{p+1}))$.

Thus, regardless of how $P$ moves between positions, each unit of positional
progress costs at least
$\Omega(\min\{1,b/(pd^{p+1})\})$. Since $P$ must move from position $0$ to
position $d^p-1$, its $\ell$-length is at least
\[
    \Omega\left(
        \min\left\{
            1,\frac{b}{pd^{p+1}}
        \right\}
        (d^p-1)
    \right)
    =
    \Omega\left(
        \min\left\{
            d^p,\frac{b}{dp}
        \right\}
    \right).
\]
Hence, the moving cut has distance
$\Omega(\min\{d^p,b/(dp)\})$. Since its capacity is strictly less than
$b=|S(H)|$, the claim follows from \Cref{lem:moving-cut}.
\end{proof}

\subsection{From Set-Disjointness to \texorpdfstring{$\apxMWC$}{Apx-MWC}}
\label{subsec:disj-to-mwc}

We now combine the set-disjointness lower bound from
\Cref{lem:disj-fixed-graph} with the approximation gap from
\Cref{lem:congest-cycle-gap}. The former is proved on 
$\mathcal{G}_{d,p}(H,\mathbf{1},\mathbf{1})$, where $x$ and $y$ are given
as private terminal inputs, whereas the latter concerns the input-dependent
graph $\mathcal{G}_{d,p}(H,x,y)$. In the following proof, we connect the two
settings.

\begin{proposition}[$\CONGEST$ lower bound from a high-girth graph]
\label{prop:congest-lb-from-H}
Let $H=(L\cup R,E_H)$ be a bipartite graph with
$|L|=|R|=\gamma$ and $\operatorname{girth}(H)>2k$. Let $\epsilon > 0$ be any real number.
For any integers $d\ge 2$ and $p\ge 1$, there exists a constant
$\delta>0$ such that any $\delta$-error algorithm for
$(k+1-\epsilon)$-$\apxMWC$ in the $\CONGEST$ model requires
\[
    \tilde{\Omega}\left(
        \min\left\{
            d^p,\frac{|E_H|}{dp}
        \right\}
    \right)
\]
rounds on the family of graphs
\[
    \left\{
        \mathcal{G}_{d,p}(H,x,y)
        :
        x,y\in\bin^{E_H}
    \right\}.
\]
The lower bound holds for directed unweighted graphs and undirected graphs
with non-negative integer weights.
\end{proposition}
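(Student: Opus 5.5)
The plan is a simulation reduction. From a fast $(k+1-\epsilon)$-$\apxMWC$ algorithm on the family $\{\mathcal{G}_{d,p}(H,x,y)\}$ I will build an equally fast algorithm for $\disj_{|E_H|}$ on the fixed instance $(\mathcal{G}_{d,p}(H,\mathbf{1},\mathbf{1}),S(H))$. Then \Cref{lem:disj-fixed-graph} gives the bound. Take $\delta$ to be the constant from \Cref{lem:disj-fixed-graph}, and suppose $\mathcal{A}$ is a $\delta$-error $(k+1-\epsilon)$-$\apxMWC$ algorithm running in $T$ rounds on every $\mathcal{G}_{d,p}(H,x,y)$.

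\textbf{Simulation.} Given the input-free network $\mathcal{G}_{d,p}(H,\mathbf{1},\mathbf{1})$ and private bits $x_e$ at $s_e=a_i$ and $y_e$ at $t_e=v_j$, we run $\mathcal{A}$ as if the network were $\mathcal{G}_{d,p}(H,x,y)$. Only the edges inherited from $H$ depend on the input.

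The edge $\{a_i,b_j\}$ encodes $x_e$ and is incident to $a_i$, which knows $x_e$. In a preliminary round, $a_i$ sends $x_e$ to $b_j$ over that edge. Likewise, $v_j$ sends $y_e$ to $u_i$ over $\{u_i,v_j\}$. Now both endpoints of every $H$-edge know whether that edge is present in $\mathcal{G}_{d,p}(H,x,y)$. This works in the directed variant too, because communication links are bidirectional.

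Each round of $\mathcal{A}$ is then simulated in one round. Messages on present edges are sent normally. Absent edges are simply never used, and both endpoints know which edges these are. Their local views (incident edges, weights, IDs) coincide with those in $\mathcal{G}_{d,p}(H,x,y)$. All other edges are identical in both graphs. So the simulation takes $T+1$ rounds and reproduces $\mathcal{A}$'s execution exactly, including its error guarantee on every input.

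\textbf{Deciding disjointness.} At the end, every node knows the value $w(C)$ of the output cycle. In the undirected weighted case, \Cref{lem:congest-cycle-gap} gives $\OPT\le 2$ when $\inprod{x}{y}\neq 0$, so the output satisfies $w(C)\le 2(k+1-\epsilon)<2(k+1)$. When $\inprod{x}{y}=0$, every cycle has weight $\ge 2(k+1)$. The directed case is identical with the thresholds $2d^p$ and $2(k+1)d^p$. Each node therefore outputs $\disj=1$ iff $w(C)$ is at least the threshold, correct with probability $\ge 1-\delta$.

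\Cref{lem:disj-fixed-graph} then gives $T+1\in\tilde{\Omega}(\min\{d^p,|E_H|/(dp)\})$, which is the claim.

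\textbf{Main obstacle.} The key point is making sure the input-dependent topology can be simulated with no extra communication beyond one round. This depends on each $H$-edge being incident to the terminal holding its bit: $a_i$ for the first copy, $v_j$ for the second. That is exactly how $S(H)$ was chosen, and since \Cref{lem:disj-fixed-graph} holds even when the whole graph is known to all nodes, this suffices.
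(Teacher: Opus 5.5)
Your proposal is correct and follows the paper's proof essentially step for step. You use one preparation round in which $a_i$ forwards $x_e$ to $b_j$ and $v_j$ forwards $y_e$ to $u_i$. You then simulate $\mathcal{A}$ on $\mathcal{G}_{d,p}(H,x,y)$ inside $\mathcal{G}_{d,p}(H,\mathbf{1},\mathbf{1})$ with no overhead, threshold the broadcast cycle weight using \Cref{lem:congest-cycle-gap}, and conclude via \Cref{lem:disj-fixed-graph} with $T+1$ rounds.
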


\begin{proof}
Let $\delta>0$ be the constant from \Cref{lem:disj-fixed-graph}.
Suppose there is a $\delta$-error distributed algorithm $\mathcal{A}$
that solves $(k+1-\epsilon)$-$\apxMWC$ on every graph in the above family
in $T$ rounds. We use $\mathcal{A}$ to compute $\disj_{|E_H|}$ on 
$\mathcal{G}_{d,p}(H,\mathbf{1},\mathbf{1})$ with terminal pairs $S(H)$.

For each edge $e=\{\ell_i,\rho_j\}\in E_H$, the source $s_e=a_i$ holds
$x_e$, while the sink $t_e=v_j$ holds $y_e$. Recall that, in $\mathcal{G}_{d,p}(H,x,y)$,  $x_e$ determines whether there is
 an edge between $a_i$ and $b_j$, whereas $y_e$ determines whether there is
 an edge between $u_i$ and $v_j$.

\paragraph{Preparing the simulated instance.}
In one round, $a_i$ sends $x_e$ to $b_j$, while $v_j$ sends $y_e$ to $u_i$,
simultaneously for all $e=\{\ell_i,\rho_j\}\in E_H$. This is possible because both edges $\{a_i, b_j\}$ and $\{u_i,v_j\}$ are present in
$\mathcal{G}_{d,p}(H,\mathbf{1},\mathbf{1})$. After that, both endpoints of every edge in $\mathcal{G}_{d,p}(H,\mathbf{1},\mathbf{1})$ learn whether the edge is present in
$\mathcal{G}_{d,p}(H,x,y)$ under the input strings $x$ and $y$.
This allows the nodes of $\mathcal{G}_{d,p}(H,\mathbf{1},\mathbf{1})$ to simulate $\mathcal{A}$ on
$\mathcal{G}_{d,p}(H,x,y)$ without any overhead.

\paragraph{Recovering set-disjointness.}
Consider first the directed unweighted case. If $\inprod{x}{y}\neq 0$, then
\Cref{lem:congest-cycle-gap} gives $\OPT\le 2d^p$, so $\mathcal{A}$ returns
a cycle of length at most
\(
    (k+1-\epsilon)\OPT
    \le
    2(k+1-\epsilon)d^p
    <
    2(k+1)d^p
\).
If $\inprod{x}{y}=0$, then every cycle has length at least
$2(k+1)d^p$.

Similarly, in the undirected weighted case, if $\inprod{x}{y}\neq 0$, then
$\OPT\le 2$, so $\mathcal{A}$ returns a cycle of weight strictly below
$2(k+1)$. If $\inprod{x}{y}=0$, then every cycle has weight at least
$2(k+1)$.

By the output convention for $\apxMWC$, every node learns the weight of the
returned cycle. Hence, after the simulation, every node can determine
$\disj_{|E_H|}(x,y)$ by comparing this weight with the corresponding
threshold. The resulting set-disjointness algorithm has error at most
$\delta$ and runs in $T+1$ rounds, where the $+1$ term is due to the preparation of the simulated instance.

Applying \Cref{lem:disj-fixed-graph} yields
\(
    T
    \in
    \tilde{\Omega}\left(
        \min\left\{
            d^p,\frac{|E_H|}{dp}
        \right\}
    \right)
\), 
which proves the proposition.
\end{proof}

We are ready to prove \Cref{thm:klower}.

\begin{proof}[Proof of \Cref{thm:klower}]
Assume the Erd\H{o}s girth conjecture. For every integer $k\ge 1$ and every
integer $\gamma\ge 1$, there exists a bipartite graph
$H=(L\cup R,E_H)$ with
\[
    |L|=|R|=\gamma,
    \qquad
    \operatorname{girth}(H)>2k,
    \qquad
    |E_H|\in\Omega\left(\gamma^{1+1/k}\right).
\]
Set $d=2$ and
\[
    p
    =
    \left\lceil
        \frac{k+1}{k}\log\gamma
    \right\rceil.
\]
Then $d^p\in\Theta(\gamma^{1+1/k})$ and
$p\in\Theta(\log\gamma)$. By \Cref{prop:congest-lb-from-H}, any
$\delta$-error algorithm for $(k+1-\epsilon)$-$\apxMWC$ requires
\[
    \tilde{\Omega}\left(
        \min\left\{
            d^p,\frac{|E_H|}{2p}
        \right\}
    \right)
    =
    \tilde{\Omega}\left(\gamma^{1+1/k}\right)
\]
rounds.

By \Cref{obs:congest-lb-size}, the number of nodes satisfies
\[
    n
    \in
    \Theta(\gamma d^p)
    =
    \Theta\left(\gamma^{2+1/k}\right).
\]
Therefore,
\[
    \gamma^{1+1/k}
    \in
    \Theta\left(
        n^{\frac{k+1}{2k+1}}
    \right),
\]
which gives the lower bound
\[
    \tilde{\Omega}\left(
        n^{\frac{k+1}{2k+1}}
    \right).
\]
Moreover, the hard instances have diameter
$O(p)=O(\log n)$.

This proves the result for directed unweighted graphs and for undirected graphs with \emph{non-negative} integer weights. Applying the same transformation as in the proof of \Cref{thm:cliqueLB} allows us to extend the result to undirected graphs with polynomially bounded \emph{positive} integer weights.
\end{proof}
 
\section{Conclusions and Open Problems}

We studied the distributed complexity of $\apxMWC$ in undirected weighted graphs. Our results essentially characterize the round--approximation tradeoff in the $\CONGEST$ model, with upper and lower bounds matching up to polylogarithmic factors.

From a technical perspective, our main contribution is a new connection between $\apxMWC$ and the MPX low-diameter decomposition~\cite{miller2013parallel}. MPX has become a central tool in parallel and distributed graph algorithms, with applications to spanners~\cite{elkin2018efficient,forsterOPODIS2021}, network decompositions~\cite{ghaffari2024near}, shortest paths~\cite{andoni2020parallel,ghaffari2024near2}, approximation algorithms for packing and covering problems~\cite{chang2023complexity,davies2026distributed}, potential problems~\cite{balliu2025distributed}, radio network algorithms~\cite{Chang20bfs,Chang18broadcast,CzumajD17,dani2022wake,haeupler2016faster}, and many other problems. Our work adds $\apxMWC$ to this growing list. More broadly, our results suggest that MPX is the right tool for this problem: it yields the optimal tradeoff not only in the $\CONGEST$ model, but also in the broadcast congested clique model. We hope this connection will inspire further uses of MPX in parallel and distributed graph algorithms.

Several intriguing questions about $\apxMWC$ remain open.

\paragraph{Undirected unweighted graphs.}
For the undirected unweighted case, \citet{chechik2026girth} recently showed that, for every positive integer $f$, an $f$-approximation can be computed in $\tilde{O}(n^{1/f}+D)$ rounds in the $\CONGEST$ model. This substantially improves the state of the art, but no matching lower bound is known. Closing this gap is a natural next step toward a complete understanding of the complexity of $\apxMWC$.

\paragraph{Directed graphs.}
Another natural direction is to study directed graphs. We conjecture that the same round--approximation tradeoff as in the undirected weighted setting should hold. Our lower bound construction already extends to directed graphs, so the main challenge is algorithmic.

A plausible route is to replace MPX by a directed analogue of the decomposition. Existing directed low-diameter decompositions, however, incur an additional $O(\log\log n)$-factor loss~\cite{haeupler2025stronger}, which would translate into the same loss in the approximation ratio. Our algorithm does not need the full strength of directed low-diameter decomposition, so it would be very interesting to identify a weaker directed decomposition, tailored to capturing cycles of small weight, that can be computed efficiently in distributed models.

Such a result could also have implications beyond distributed algorithms. For directed graphs, a similar $O(\log\log n)$-factor loss appears in the state-of-the-art tradeoff in the centralized setting: \citet{chechik2020constant} showed that $O(k\log\log n)$-$\apxMWC$ can be solved in $\tilde{O}(m^{1+1/k})$ time, for every integer $k\geq 1$. Can this time--approximation tradeoff be improved?

\paragraph{Universally optimal algorithms.}
A distributed algorithm is \emph{universally optimal} if, on every input graph, its round complexity matches that of the best distributed algorithm tailored to that graph. Recent work~\cite{haeupler2021universally,haeupler2022hop} has shown that several problems in the complexity class $\tilde{\Theta}(\sqrt n+D)$ admit approximately universally optimal algorithms in the $\CONGEST$ model, including minimum spanning tree, $(1+\epsilon)$-approximate SSSP, and $(1+\epsilon)$-approximate minimum cut.

Given the close connection established in this paper between $\apxMWC$ and $(1+\epsilon)$-approximate SSSP via MPX, it is natural to ask whether this line of research can be extended to $\apxMWC$. Does $\apxMWC$ admit universally optimal distributed algorithms?

A natural intermediate goal is the regime of $O(\log n)$-approximation. Our approach already shows that this approximation ratio can be achieved using only polylogarithmically many calls to $(1+\epsilon)$-approximate SSSP. The substantially more challenging case is to obtain universal optimality for smaller approximation ratios, where the worst-case round complexity is strictly higher than $\tilde{\Theta}(\sqrt n+D)$.

\section*{Acknowledgments} 
The authors thank Bernhard Haeupler and Thatchaphol Saranurak for helpful discussions, particularly for clarifying prior work on shortest paths and low-diameter decompositions.

The authors used ChatGPT during manuscript preparation to assist with language and style editing, the organization and exposition of the manuscript, alternative presentations and refinement of proof arguments, and the creation and revision of figures. All AI-assisted content was reviewed and revised by the authors. The authors assume responsibility for all content.

\printbibliography

\end{document}